\newcommand{\ssection}[1]{\section[#1]{\bfseries  \centering #1}  }
\newcommand{\thetaL} {\theta}
\newtheorem{theorem}{Theorem}
\newtheorem{definition}{Definition}
\newtheorem{assumption}{Assumption}
\newtheorem{lemma}{Lemma}
\newtheorem{corollary}{Corollary}
\newtheorem{remark}{Remark}
\newcommand {\be}{\begin{equation}}
\newcommand {\ee}{\end{equation}}
\newcommand {\bes}{\begin{equation*}}
\newcommand {\ees}{\end{equation*}}
\newcommand{\N}{\mathbb{N}}
\newcommand{\R}{\mathbb{R}}
\newcommand{\q}{z}
\newcommand{\vertiii}[1]{{\left\vert\kern-0.25ex\left\vert\kern-0.25ex\left\vert #1 \right\vert\kern-0.25ex\right\vert\kern-0.25ex\right\vert}}
\newcommand{\leb}{L^2([0,1])}
\newcommand{\lebn}{{L^2([0,1];\mathbb{R}^n)}}
\newcommand{\lebm}{{L^2([0,1];\mathbb{R}^m)}}
\newcommand{\ls}{L_\mathcal{S}}
\newcommand{\lt}{L^2}
\newcommand{\type}{t}
\newcommand{\lnn}{{L^2;\mathbb{R}^n}}
\newcommand{\one}{1_N}
\definecolor{forestGreen}{RGB}{20,170,40}
\definecolor{blue}{RGB}{0,0,0}
\def\blue{\color{black}}
\def \J{U}
\newenvironment{proof}{\paragraph{Proof:}}{\hfill$\square$}
\declaretheorem[style=definition]{example}
\renewcommand\thmcontinues[1]{Continued}
\begin{document}


\vspace{-2cm}

\begin{frontmatter}

\title{ Graphon games: A statistical framework for network games and interventions}
\tnotetext[mytitlenote]{Research partially supported by the SNSF grant  P300P2$\_$177805,   by ARO MURI W911NF-12-1-0509 and by ARO W911NF-18-1-0407. The authors thank  D. Acemoglu, M. Avella-Medina, M.T. Schaub, S. Segarra,  R. Carmona, A. Jadbabaie as well as the participants to the ``Fourth and Fifth Annual Conference on Network Science and Economics'' for helpful conversations.}

\author{Francesca Parise}
\ead{parisef@mit.edu}
\author{Asuman Ozdaglar}
\ead{asuman@mit.edu}
\address{Laboratory for Information and Decision Systems,\\ Massachusetts Institute of Technology, Cambridge, MA, USA.}

\begin{abstract}
{\small \blue
In this paper, we present a unifying framework for analyzing equilibria and designing interventions for large network games sampled from a stochastic network formation process represented by a graphon. We first introduce a new class of infinite population games, termed \textit{graphon games}, where a continuum of  heterogeneous agents interact according to a graphon. After studying properties of equilibria in graphon games, we show that graphon equilibria can approximate equilibria of large network games sampled from the  graphon.  We next show that, under some  regularity assumptions, the graphon approach enables the design of  asymptotically optimal interventions  via the solution of an optimization problem with much lower dimension than  the one based on the entire network structure. We illustrate our framework on a synthetic dataset of rural villages and show that the graphon intervention can be computed efficiently and based solely on aggregated relational data. 
}
\end{abstract}

\begin{keyword}
{\small Network games,
 graphons,
aggregative games,
large population games, 
Nash equilibrium,
targeted interventions,
Bayesian Nash equilibrium}
\end{keyword}

\end{frontmatter}

\thispagestyle{plain}
\pagestyle{plain}

\vspace{0.5cm}

\ssection{\textbf{Introduction}}
\label{sec:intro}

{\blue Recent decades have witnessed tremendous progress in the theory of network games, which have been used widely to model, understand and predict behavior in a range of settings  involving strategic interactions of agents embedded in  networked environments.
 Despite this progress,  several issues remain when considering interventions  or regulation of economic behavior over \textit{large scale} networks.  First, in this case the optimization problem that  the central planner needs to solve for determining the optimal intervention is very high dimensional, often scaling with the size of the network. Second, assuming that the central planner has access to detailed information about the network structure is not  a good approximation of reality since collection of exact network data is either extremely costly or, in many settings,  not at all  possible due to proprietary and privacy concerns.\footnote{\blue \blue \cite{breza2017using} estimated that conducting network surveys in $120$ Indian villages would cost approximately $\$190, 000$ and take over eight months. Moreover, proprietary and privacy concerns may arise, for example, when measuring high-risk populations or transactions between networks of financial intermediaries.}

To overcome these issues,  in this paper we  develop a unifying  framework for analysis of equilibria and efficient design of interventions in    \textit{large sampled network games} where agents interact according to a network drawn from a stochastic network formation process, which we represent by a graphon.
 A graphon is a general nonparametric random graph model\footnote{\blue Introduced by \cite{lovasz2006limits,Lovasz2012,Borgs2008}.} which includes commonly used Erd\H{o}s-R\'enyi  and stochastic block models  as special cases  and can be used to formally define the limit of a sequence of graphs when the number of nodes tends to infinity. Exploiting this limit characterization,  we start our analysis of sampled network games by proposing a new class of infinite population games, which we  term \textit{graphon games},   where a continuum of agents interact according to a graphon. After providing existence, uniqueness and continuity results for the equilibrium of a graphon game, we turn to the analysis of equilibria in sampled network games drawn from a graphon. Our key contribution is to provide a characterization of sampled network game equilibria,  in the limit of large populations, by   showing that such equilibria can be approximated by the equilibria of the corresponding graphon game. We provide bounds on the distance between sampled and graphon equilibria as a function of the network size and prove that this distance vanishes as the number of agents grows.

 In addition to enabling a unified analysis of sampled network games, graphon games become particularly useful in designing interventions precisely because they deal with the two problems highlighted above. First we show that, under some regularity assumptions on the graphon - most importantly when the graphon is finite rank\footnote{\blue A graphon  is finite rank if  the corresponding  operator has a finite number of eigenvalues different from zero. While a refinement, finite rank graphons are general enough to nest a large number of random graph models.  For example stochastic block models are finite rank graphons with rank equal to the number of blocks, while randomly grown ranked attachment graph sequences as described in \cite{borgs2011limits} converge to a graphon that has rank 2, see \cite[Section 4.2]{graphons}, and  uniform attachment graph sequences converge to a graphon  which can be very well approximated with a rank 5 graphon.} - the optimization problem faced by the central planner can be approximated by  a low dimensional problem  (with size corresponding to the rank of the graphon instead of the number of agents). Second, under the same assumptions, graphon  interventions can be designed with much less information than the entire network structure. To illustrate this second point, we consider the use of aggregated relational data (ARD), as suggested in \cite{breza2017using}, instead of exact network  data. In other words, we consider  the use of data collected through questions such as \textit{``how many of the agents you interact with have trait $k$?''}, instead of  questions of the form \textit{``what is the identity of all the agents you interact with?''}. Using real world data on households across villages in India,  \cite{breza2017using} showed that, in addition to being much easier to collect\footnote{\blue For the villages in Karnataka, India,  \cite{breza2017using} shows using J-PAL South Asia cost estimated that collecting ARD leads to a 70-80\% cost reduction with respect to the cost of data collected in \cite{banerjee2013diffusion}.},  ARD even on $30\%$ of the individuals suffices to obtain reasonable estimates of many network features of economic interest.  We complement these results by showing through a case study, that ARD can be used to efficiently estimate the parameters of a  network game sampled from a stochastic block model (which is  a  widely used type of graphon),  thus allowing the  design of policy interventions from ARD using the graphon approach. For this case study  the suggested procedure results in an optimization problem with dimension equal to the number of  blocks (communities) instead of  number of agents,  leading to a computationally tractable approach even for large populations.}

\subsection{Detailed contributions}

Our contributions are as follows. First, we formalize the notion of a ``graphon game'' in terms of a continuum of agents indexed in $[0,1]$ and a graphon, represented by a bounded symmetric measurable function $W:[0,1]^2\rightarrow [0,1]$ with $W(x,y) $ denoting the influence of agent $y$'s strategy on agent $x$'s payoff function. We assume that agent $x$'s payoff function depends on his strategy $s(x)\in\R^n$ as well as a \textit{local  aggregate}  of the other agents' strategies computed according to the graphon $W$.
We define the  Nash equilibrium for a graphon game as a strategy profile $s$ at which  no agent can unilaterally  increase its payoff  given  the  fixed local aggregate of the other agents' strategies.\footnote{Similar to the notion of Wardrop equilibrium used in nonatomic routing games  where each agent uses routes of least cost  given the aggregate congestion level, see \cite{wardrop1900some,smith1979existence}.} We then study fundamental properties of such equilibrium and derive sufficient conditions in terms  of the payoff functions, strategy sets and  the underlying graphon to guarantee  existence and uniqueness. Under the same assumptions, we additionally derive a {\blue continuity} result quantifying the effect of graphon changes on the equilibrium outcome.

Our second main contribution is to relate the equilibria of the infinite population graphon game to equilibria of finite  network games sampled from the graphon. We start by showing that any network game can be rewritten as a graphon game, hence graphon games are a generalization of network games. 
We then show that, with high probability,  the graphon game equilibrium  is a good  approximation of the equilibrium of any sampled network game  and we provide a precise mathematical bound for the approximation error  in terms of the size of the sampled network.  {\color{blue}  Using this bound, we show that sampled network equilibria converge almost surely to the graphon equilibrium.
Such a characterization of the limiting strategies allows us to extract fundamental features of equilibrium  in  large network games  which can then in turn be used for analysis or planning of interventions. }
 {\blue For simplicity of exposition,  we first present our convergence results  for the case of \textit{dense undirected networks}, where the number of neighbors grows linearly with the population size and   the network aggregate is defined as the sum of neighbors actions  (normalized by the network size).  We then show in Section~\ref{sec:ext} that our results can be generalized from undirected to directed networks and from games where the sum of neighbors' actions is normalized by the population size to games where it is normalized by each agent's degree. Most importantly, we show that our results can be extended to sparser classes of networks where  the number of neighbors grows sublinearly (but still faster than logarithmically in the population size $N$). This is an oft encountered condition used in random graph theory to ensure that nodes have enough links so that concentration inequality bounds apply, but the required rate of growth is very slow, only being of order larger than $\log(N)$. A similar condition is used  for example in \cite{jackson2019behavioral}. We show within our  case study, that these results lead to useful insights even when the average degree in a network of $1000$ agents is around $20$, illustrating the applicability of our framework to  realistic networks. }

{\blue As a third main contribution, we  turn to the problem of designing targeted interventions in linear quadratic network games, as recently considered in \cite{galeotti2017targeting}. Because of the high dimensionality of the corresponding optimization problem, \cite{galeotti2017targeting} proposed and analyzed the performance of heuristics  based on  spectral properties of the network.   Instead we here suggest an alternative approach  based on a novel optimization problem in the graphon space  which, 
through sampling, provides interventions for  finite sampled network games. We show that such graphon-based interventions are close to optimal and we provide a bound on the distance from  optimality  which decreases as a function of the network size. Additionally, we show 
that for finite rank graphons, the graphon optimization problem is a tractable finite dimensional problem with as many variables as the rank of the underlying graphon.

 To illustrate the computational and informational gains obtained with the graphon approach, we consider a  case study on a simulated dataset of $80$ different networks, drawn as independent realizations of a stochastic block model with $4$ communities, which can for example model interactions among the inhabitants of $80$  different rural villages.  For this case study, the graphon approach leads to a $4$ dimensional optimization problem whereas the optimal intervention and the network heuristic of  \cite{galeotti2017targeting} necessitate solving a problem of dimension equal to the size  $N$ of the network (we set $N=300,600,1200$ in our simulations). Moreover, the graphon approach results in a near optimal solution which provides significant gains over the network heuristic. Finally, within this case study we suggest how our framework can be used to estimate peer effects under partial network data. This is a topic of  recent  interest, as discussed for example in \cite{chandrasekhar2011econometrics,  de2018recovering, boucher2019estimating,lewbel2019social}. Estimating peer effect is not the subject of our work, hence we do not develop this aspect of our theory beyond the intuition given in the case study and a preliminary analysis given in online Appendix \ref{sec:id}. We however believe this could be an interesting future direction enabled by the suggested graphon framework. 
 }

The results discussed so  far are derived under the assumption that agents have perfect information about the sampled network. 
{\color{blue} In online Appendix \ref{bayes}},  we analyze an incomplete information version of sampled network games and develop a close relation between the corresponding Bayesian Nash equilibrium and the graphon equilibrium discussed above. We show that, under suitable regularity conditions and under the assumption that the agents know the graphon generating the sampled network (but not the realization), the graphon equilibrium is an $\varepsilon$-Bayesian Nash equilibrium for the incomplete information game.

\subsection{Related literature}

{\blue Our work complements  results derived for  complete information network games (see e.g. in \cite{ballester2006s,bramoulle2007public,bramoulle2014strategic,jackson2014games,bramoulle2015games,galeotti2017targeting})
by considering a setting where agents have complete information, as in the works above, while the central planner has only access to the stochastic  network  formation model.

  We note that stochastic  network  formation models have been used before in the literature  for the study of \textit{diffusion dynamics and related optimal seeding problems}. This includes 
\cite{golub2012does} and \cite{golub2012homophily}  who studied 
DeGroot dynamics over a variation
of a  stochastic block model and  characterized the  time to consensus in terms of a  measure of clustering, called spectral homophily, that depends only on large-scale linking patterns among groups and not on idiosyncratic details of network realizations.
In more recent work,  \cite{akbarpour2018just} focused on  linearly independent contagion models and showed that randomly seeding a few nodes more leads to asymptotically comparable performances as  seeding based on detailed network information for the case of Erd\H{o}s-R\'enyi  models.\footnote{\blue The analysis is also extended to networks with power-law degree, generalized version of Erd\H{o}s-R\'enyi  model  with high clustering and to different contagion models beyond  linearly independent, see \cite{akbarpour2018just} for more details.}   \cite{jackson2019behavioral} introduced the concept of ``behavioral communities'' (i.e. agents who adopt the same strategy in every possible equilibrium) in the context of  linear threshold dynamics over random networks generated from a stochastic block model and studied their asymptotic properties.  A game-theoretic model of diffusion is considered in \cite{sadler2020diffusion}; therein agents only know their realized degree hence the focus is on Bayesian strategies. Finally, \cite{banerjee2019using} suggested a gossip approach for identifying agents with high diffusion centrality when the network is unknown. 
In contrast with these works, the focus of our paper is to \textit{provide a characterization of the limiting equilibrium strategies} for a large class of  network games \textit{with continuous strategies}\footnote{\blue While contagion models have discrete (typically $0-1$) strategies, we focus here on games with continuous strategies. The type of continuous games considered in our framework has been broadly used in the literature, both in theoretical and empirical works, for example to model applications where  agents  need to decide on their level of effort or investment in a certain activity (see \cite{vives2005complementarities,ballester2006s,acemoglu2015networks,bramoulle2007public,bramoulle2014strategic,allouch2015private}).}  and a \textit{broad range of random graph models} (graphons  include the network formation models mentioned above as special cases) in terms of a new infinite population game. As summarized above, knowledge of such a limiting behavior can be very useful to inspire new approaches to design of interventions or estimation of peer effects.  
 }

{\blue Our results on incomplete information network games, reported in online Appendix \ref{bayes}, are related to two previous works: \cite{galeotti2010network} and \cite{kalai2004large}.   \cite{galeotti2010network} focused on network games over  random networks with fixed number of  agents that  only know  their degree. Properties of the corresponding  Bayesian Nash equilibrium are derived, but no asymptotic analysis is provided.
 \cite{kalai2004large}  proved that the Bayesian Nash equilibrium on a  game with anonymous payoffs (which depend only on how many players select each type-action) is an  $\varepsilon$-Nash equilibrium of the complete information game with $\varepsilon$ going to zero when the number of agents tends to infinity. Two points are noteworthy in relating  this paper to the network game literature and our paper in particular: first, network games capture heterogeneous interactions hence do to satisfy the anonymity assumption in \cite{kalai2004large}; second \cite{kalai2004large} shows that the Bayesian Nash  equilibrium is an $\varepsilon$-Nash equilibrium, instead  we prove that the Bayesian Nash  equilibrium converges (in strategies) to the equilibrium of the corresponding graphon game, thus providing a characterization of the limiting behavior.}

While our goal is to use graphon games to approximate equilibria of sampled network games, we note that graphon games can also be of independent interest as a new model of nonatomic games. In this context, our work complements previous  models by incorporating heterogeneous local effects in infinite population games. A widely considered infinite population model is that of mean field games as introduced in \cite{lasry:lions:07, huang:caines:malhame:07} which, while focusing on  more general dynamic stochastic interactions, assumes that each agent is influenced by the same aggregate (i.e. the mean) of the whole population.
Another common model is that of population games,  \cite{sandholm2010population}, where a continuum of agents select their strategy among a finite set of options (instead of a continuous set) and the game dynamics are typically described in terms of the total mass of agents playing each strategy.
The behavior of infinite but countable populations has also been studied in aggregative games where each agent is influenced by the same aggregate of the strategies of the rest of the population, as discussed in \cite{kukushkin:04,jensen2010aggregative,acemoglu2013aggregate,cornes2012fully,dubey2006strategic,ma:callaway:hiskens:13,altman2006survey}.
With respect to all these works, graphon games capture settings that include heterogeneous local interactions.

 We finally remark that the idea of using graphons as a support for large population analysis has been successfully applied recently  in different areas such as  community detection in \cite{eldridge2016graphons}, crowd-sourcing in \cite{Lee2017}, signal processing in~\cite{Morency2017}  and  optimal control of dynamical systems in \cite{caines2017graphons}.  The concurrent work by \cite{caines2018graphon} suggests the use of graphons to extend the setup of mean-field games (which differently from network games are dynamic and stochastic games) to heterogeneous settings. Moreover,  the idea of interpreting observed graphs as random realizations from an underlying random graph model has recently been used in the study of centrality measures in \cite{Dasaratha2017} for stochastic block models and in \cite{graphons} for  graphon models. The authors of these papers study among others Bonacich centrality, which  is known to coincide with the equilibrium of a specific type of network games with scalar nonnegative strategies,  quadratic payoff functions and strategic complements.  

\subsection{Organization}

 {\blue The rest of the paper is organized as follows. In Section \ref{graphon_games} we introduce graphon games, we define the graphon equilibrium and we study its properties. In Section \ref{sec:network_formation} we formalize the notion of   network games sampled from a graphon and
in Section~\ref{sec:asymptotic} we investigate the relation between the equilibria of such sampled network games and  graphon games. In Section \ref{sec:ext} we extend our theory to directed and sparser networks and we discuss normalization of the network aggregate by agent's degree instead of population size. In Section \ref{sec:int} we turn to targeted interventions and we study optimality and computability of interventions based on graphon information. In Section \ref{sec:case_study} we present a case study illustrating our approach from data acquisition to  design of interventions.
 Finally, Section \ref{conc} concludes the paper and presents a number of future directions. Appendix \ref{app:graphon_games} presents an equivalent reformulation of the graphon equilibrium as a fixed point of a best response operator and studies the properties of such operator, as needed to prove the results of Section \ref{graphon_games}. 
Appendix \ref{appC} and online Appendix \ref{aux_all} contain omitted proofs and auxiliary lemmas.  In online Appendix \ref{bayes} we extend our results to incomplete information and in online Appendix \ref{sec:id} we briefly comment on identification of unknown payoffs parameters (such as peer effect)  based on graphon information. For simplicity of exposition in the main text we consider games with scalar strategies, all the proofs in the Appendix are instead provided for the vector case. A summary of  notation is provided at the beginning of the Appendix.}

 \ssection{\textbf{Graphon games}}
\label{graphon_games}

{\color{blue} We start by recalling the definition of network games for a finite number of agents. We then show how this concept can be extended to a continuum of agents by introducing the new class of graphon games. We define an equilibrium notion for  graphon games and analyze its existence, uniqueness and continuity properties. }

\subsection{\textbf{Recap on network games}}
\label{sec:network_games}
{\color{blue} We start by formally defining a network game 
as a game with $N$ agents interacting over a network with adjacency matrix $P^{[N]}\in\mathbb{R}^{N\times N}$, where  $P^{[N]}_{ij}$ denotes the level of interaction between agents $i$ and $j$. For simplicity we assume  that the network is  undirected so that   $P^{[N]}$ is symmetric. The extension to directed networks will be discussed in Section \ref{sec:ext}. } In a network game each agent $i\in\{1,\ldots,N\}$ selects a  strategy {\color{blue} $ s^i\in\R$} in its feasible set $\mathcal{S}^i\subseteq \R$  to \textit{maximize} a  payoff function

\begin{equation}\label{gamey}
\J(s^i,\q^i(s), {\blue \theta^i})
\end{equation}
where {\color{blue} $s:=[s^i]_{i=1}^N\in\R^{N}$},  $\q^i(s):=\frac1N\sum_{j=1}^N [P^{[N]}]_{ij}s^j$ denotes the local aggregate\footnote{In network games typically there is no factor $\frac1N$ in the definition of $z^i(s)$. Since we study the behavior when $N$ changes we find it useful to consider this factor explicitly. {\color{blue} A different normalization in terms of agents degree instead of population size is discussed in Section \ref{sec:ext}.}} computed according to the  network $P^{[N]}$ {\blue and $\theta^i\in\R$ is a parameter modeling heterogeneity in the payoff functions of different agents. For simplicity of exposition in the main text we consider games where both $s^i$ and $\theta^i$ are scalars, the extension to the vector case is immediate (as presented in the Appendix). }  
We denote compactly  a network game with the notation   $\mathcal{G}^{[N]}(\{\mathcal{S}^i\}_{i=1}^N, \J, {\blue \{{\theta}^i\}_{i=1}^N}, P^{[N]})$ and we say ``a network game $\mathcal{G}^{[N]}$ with network $P^{[N]}$'' if we need to stress the role of the network.  

{\color{blue} \begin{example}[Linear quadratic network games]\label{ex:lq}
One of the simplest examples of network games is obtained when agents have scalar non-negative strategies $s^i\in\mathbb{R}_{\ge0}$ and  the payoff $\J$ is linear in the network aggregate $\q^i$ and quadratic in the strategy $s^i$, so that

\begin{equation}\label{costL}
\J(s^i,z^i,{\blue \thetaL^i})=-\frac12(s^i)^2+({\blue \thetaL^i}+\alpha z^i)s^i.
\end{equation}
 The  parameter $\alpha\in\R$ in \eqref{costL} captures how much the local aggregate affects each agent's marginal return, which could either be an increasing (strategic complements) or decreasing (strategic substitutes) function of $z^i$ depending on the sign of $\alpha$. The parameter $\thetaL^i>0$ represents the \textit{standalone marginal return} that does not depend on other's actions. This model has been studied e.g. in \cite{jackson2014games,bramoulle2015games}. The model is homogeneous when $\thetaL^i=\thetaL$ for all agents.
\end{example}}
\subsection{\textbf{Graphon games:  The model}}
\label{sec:graphon_model}
Consider a continuum of agents where  each agent  is indexed by the variable $x\in[0,1]$ instead of the finite index $i\in\{1,\ldots,N\}$ and has a scalar strategy   denoted by $s(x)\in\R$ instead of $s^i\in\R$. As in the finite population case, we assume local constraints of the form $s(x)\in\mathtt{S}(x)$, where $\mathtt{S}(x):[0,1]\rightarrow 2^{\R}$ is  a set-valued function. In finite  network games, each agent computes its best response to the local aggregate $z^i(s):=\frac1N\sum_{j=1}^N P_{ij}^{[N]}s^j$ according to the weights of the underlying graph $P^{[N]}$. In the infinite population case,  the natural mathematical object to describe the network of interactions is a graphon. Mathematically, a graphon is a bounded symmetric  measurable function $W:[0,1]^2 \mapsto [0,1]$.   Graphons have originally been introduced as the limit of a sequence of graphs when the number of nodes tends to infinity \cite{Lovasz2012}. In this sense, $W(x,y)$ can be interpreted as measuring  the level of interaction between two infinitesimal agents $x$ and $y$ belonging to the $[0,1]$ interval, exactly as $P^{[N]}_{ij}$ denotes the level of interaction between agents $i$ and $j$ in $\{1,\ldots,N\}$.  For any graphon $W$, we can then define the local aggregate experienced by  agent $x$ as the ``weighted average'' of the other agents actions according to the graphon:

\begin{align*}
z(x\mid s)&:=\int_0^1 W(x,y)s(y)dy.
\end{align*}

\begin{remark}
\textit{ Note that for graphon games a strategy profile $s:[0,1]\rightarrow \R$ is  a \textit{function}. 
In the following, we require that any strategy profile is square integrable, that is $s(x)\in\leb$, where $\leb$ denotes the space of square integrable functions defined on $[0,1]$.
}

\end{remark}
As in network games, the goal of each agent  in a graphon game is  to select the strategy $s(x)\in\mathtt{S}(x)$ that maximizes  its payoff given by
\begin{equation}
\label{eq:br_naJ}
 \J( s(x),z(x \mid s), {\blue \theta(x)}).
\end{equation}
Similar to network games, we assume that the payoff function of an agent $x$ depends on his strategy $s(x)$, on his local aggregate $z(x\mid s)$ {\blue and on a heterogeneity parameter $\theta(x)$}. Note that such a payoff function  has  the same structural form as in network games.
The  difference in the two setups is the way in which the local aggregate ($z^i(s)$ for network games and $z(x \mid s)$ for graphon games) is evaluated. In a graphon game each agent aims at  computing  its best response to the local aggregate induced by the strategy profile $s$ as follows
\begin{equation}
\label{eq:br_na}
s_{\textup{br}}(x \mid  s):=\arg\max_{\tilde s\in\mathtt{S}(x)} \J(\tilde s,z(x \mid  s),{\blue \theta(x)}).
\end{equation}
Note that such a best response might in general  be set-valued. Moreover,  since there is a continuum of agents, the contribution of agent $x$ to the aggregate $z(x\mid s)$ is negligible.  
Consequently,  the decision variable $\tilde s$ affects only the first argument in the payoff function in~\eqref{eq:br_na}.
We summarize the previous discussion in the following definition.
 \begin{definition}[Graphon game]
\textit{A graphon game $\mathcal{G}$ is defined in terms of a continuum set of agents indexed by $[0,1]$, a graphon $W$, a payoff function $\J$ as  in  \eqref{eq:br_naJ},  and for each agent $x\in[0,1]$ a parameter  {\blue $\theta(x)$} and a strategy set $\mathtt{S}(x)$. }
\end{definition}
In the following, we say ``a graphon game $\mathcal{G}$ with graphon $W$'' if we need to stress the role of the graphon and we explicitly write $\mathcal{G}(\mathtt{S},\J,  {\blue \theta}, W)$ is we want to stress the role of all the game primitives.

\subsection{\textbf{Graphon games: Equilibrium concept}}
 Paralleling the  literature on nonatomic  games  (see e.g., \cite{schmeidler1973equilibrium,khan1986equilibrium,wardrop1900some,smith1979existence}), one can extend the concept of Nash equilibrium to graphon games. 
\begin{definition}[Nash equilibrium]\label{nash}
\textit{A function $\bar s\in \leb$ with associated local aggregate $\bar z(x):=z(x\mid \bar s)=\int_0^1 W(x,y)\bar s(y)dy $ is  a Nash equilibrium for the graphon game $\mathcal{G}(\mathtt{S},\J, {\blue \theta},W)$ if  for all $x\in[0,1]$, we have  $\bar s(x)\in\mathtt{S}(x)$ and   
\begin{align*}
\textstyle \J(\bar s(x),\bar z(x), {\blue \theta(x)})\textstyle \ge  \J(\tilde s,\bar z(x),{\blue \theta(x)}) \mbox{ for all } \tilde s\in\mathtt{S}(x).
\end{align*}}
\end{definition}\

In other words, a function $\bar s$  is  a Nash equilibrium if, for each agent $x$, the strategy $\bar s(x)$ is a best response of that agent to the strategies of the other agents. 
In the rest of the section we study Nash equilibrium properties under the following assumptions.

\begin{assumption}[Payoff]\label{ass:cost}
{\blue
\textit{The function $\J(s,z,\theta)$ in \eqref{eq:br_naJ} is continuously differentiable and strongly concave in $s$ with uniform constant $\alpha_\J$  for each value of $z,\theta$. Moreover,  $\nabla_s \J(s, z,\theta)$ is uniformly Lipschitz in $[z,\theta]$ with constants  $\ell_{\J},\ell_\theta$ for all $s$ meaning that $\|\nabla_s \J(s, z_1,\theta_1)-\nabla_s \J(s, z_2,\theta_2)\|\le \ell_\J\|z_1-z_2\|+\ell_\theta\|\theta_1-\theta_2\|$. For each $x\in[0,1]$ the set $\mathtt{S}(x)$ is convex and closed.  }}

\end{assumption}

{
\blue The  assumption of concave payoffs and convex strategy sets is standard in the game theoretical literature, see e.g. \cite{rosen1965existence}. The  assumption on Lipschitz continuity of 
$\nabla_s \J(s, z,\theta)$ is also  natural and guarantees that the effect of the network aggregate~$z$ and the heterogeneity parameter $\theta$ on the marginal payoff is continuous and bounded. Finally, to guarantee that the strategy at equilibrium will not grow unbounded we  make the following additional assumption.
}

\begin{assumption}[Strategy set]\label{ass:constraint}
\textit{A)  There exists $\hat z$ and $M>0$ such that \\ $\| \arg\max_{\tilde s\in\mathtt{S}(x)} \J(\tilde s,\hat z, {\blue \theta(x)}) \|\le M$ for all $x\in[0,1]$.   B) There exists a compact set $\mathcal{ S}$ such that $\mathtt{S}(x)\subseteq \mathcal{S}$ for all $x\in[0,1]$ so that  $s_\textup{max}:=\max_{s\in\mathcal{S}} \|s\|<\infty$.} \end{assumption}

Assumption \ref{ass:constraint}B) implies Assumption \ref{ass:constraint}A). We consider these assumptions separately since  some of our results hold under the sole Assumption \ref{ass:constraint}A), which is less restrictive.

\subsection{\textbf{Graphon games: Properties of the equilibrium}}
\label{properties}

 To study equilibrium  properties, we    report in Appendix \ref{app:graphon_games} an equivalent characterization of the Nash equilibrium of a graphon game as a  fixed point of a best response operator.   Existence of a Nash equilibrium is then an immediate consequence of Schauder fixed point theorem.
\begin{theorem}[Existence] \label{thm:existence}
 \textit{Suppose that the graphon game $\mathcal{G}(\mathtt{S},\J, {\blue \theta},W)$ satisfies  Assumptions~\ref{ass:cost} and \ref{ass:constraint}B). Then it
 admits at least one Nash equilibrium.}
 \end{theorem}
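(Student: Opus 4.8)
The plan is to apply the Schauder fixed point theorem to the best response operator, as the excerpt already hints. First I would set up the candidate domain: by Assumption~\ref{ass:constraint}B), all strategy profiles of interest live in the set $\mathcal{K}:=\{s\in\lebn \mid s(x)\in\mathtt{S}(x)\ \text{a.e.},\ \|s(x)\|\le s_{\textup{max}}\ \text{a.e.}\}$. This set is nonempty (pick any measurable selection of $\mathtt{S}$, which exists because the sets are closed and the ambient bound is uniform), convex (since each $\mathtt{S}(x)$ is convex and the pointwise norm bound is convex), closed in $\lebn$ (closedness of each $\mathtt{S}(x)$ plus the fact that an $L^2$-convergent sequence has an a.e.-convergent subsequence), and bounded in $\lebn$ with $\|s\|_{\lnn}\le s_{\textup{max}}$.

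Next I would introduce the best response operator $\mathbb{B}:\mathcal{K}\to\lebn$ defined by $(\mathbb{B}s)(x):=\arg\max_{\tilde s\in\mathtt{S}(x)}\J(\tilde s, z(x\mid s))$. By Assumption~\ref{ass:cost}, $\J(\cdot,z)$ is strongly concave over the closed convex set $\mathtt{S}(x)$, so this maximizer is unique and well-defined for each $x$; measurability of $x\mapsto(\mathbb{B}s)(x)$ follows from a measurable maximum theorem (the map $z(\cdot\mid s)$ is measurable since $\mathbb{W}$ maps $L^2$ into $L^2$, and the argmax of a Carathéodory function over a measurable closed-convex-valued correspondence is measurable). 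Since $(\mathbb{B}s)(x)\in\mathtt{S}(x)\subseteq\mathcal{S}$, we get $\|(\mathbb{B}s)(x)\|\le s_{\textup{max}}$, hence $\mathbb{B}$ maps $\mathcal{K}$ into $\mathcal{K}$. A Nash equilibrium in the sense of Definition~\ref{nash} is exactly a fixed point $\bar s=\mathbb{B}\bar s$, so it suffices to verify the hypotheses of Schauder.

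The two remaining ingredients are continuity and compactness of $\mathbb{B}$. For continuity: if $s_k\to s$ in $\lebn$, then $\mathbb{W}$ is a bounded (indeed compact) operator by Proposition~\ref{lem:w}, so $z(\cdot\mid s_k)\to z(\cdot\mid s)$ in $\lebn$, hence along a subsequence $z(x\mid s_k)\to z(x\mid s)$ a.e.; the strong-concavity estimate gives $\|(\mathbb{B}s_k)(x)-(\mathbb{B}s)(x)\|\le (\ell_\J/\alpha_\J)\|z(x\mid s_k)-z(x\mid s)\|$ pointwise (this is the standard Lipschitz bound for parametric strongly concave maximizers, using the Lipschitz dependence of $\nabla_s\J$ on $z$), so $\mathbb{B}s_k\to\mathbb{B}s$ in $\lebn$ by dominated convergence with dominating constant $2s_{\textup{max}}$; since the limit is the same along every subsequence, the full sequence converges. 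For compactness: actually $\mathbb{B}=\mathbb{R}\circ\mathbb{W}$ where $\mathbb{R}$ is the (Lipschitz, by the above) pointwise projection-type map $z\mapsto\arg\max_{\tilde s\in\mathtt{S}(x)}\J(\tilde s,z(x))$ composed with the graphon operator; since $\mathbb{W}$ is a compact operator on $L^2$ and $\mathbb{R}$ is Lipschitz and bounded, $\mathbb{B}(\mathcal{K})$ is relatively compact in $\lebn$. Schauder's theorem then yields a fixed point.

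I expect the main obstacle to be the measurability and well-posedness of $\mathbb{B}$ as an operator on $\lebn$ — i.e. verifying that the pointwise argmax defines a genuinely square-integrable, measurable function of $x$ and that the Lipschitz-in-$z$ bound can be pushed through the integral — rather than the fixed-point step itself, which is routine once the operator is shown to be a continuous compact self-map of $\mathcal{K}$. (The detailed reformulation of equilibria as fixed points of this operator is deferred to Appendix~\ref{app:graphon_games}, as indicated in the text.)
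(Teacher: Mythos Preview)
Your proposal is correct and follows essentially the same route as the paper: reformulate the equilibrium as a fixed point of the composite operator (best-response map $\circ$ graphon operator), use the pointwise Lipschitz bound $\|(\mathbb{B}z_1)(x)-(\mathbb{B}z_2)(x)\|\le(\ell_\J/\alpha_\J)\|z_1(x)-z_2(x)\|$ coming from strong concavity, and conclude via Schauder by exploiting compactness of $\mathbb{W}$ together with Lipschitz continuity of the outer map. The only notable difference is your choice of domain: the paper applies Schauder on the plain $L^2$-ball $\ls=\{f\in\lebn:\|f\|_{\lnn}\le s_{\max}\}$ rather than on the constrained set $\mathcal{K}=\{s:s(x)\in\mathtt{S}(x)\text{ a.e.}\}$, which lets it sidestep the measurability and closedness verifications for the correspondence $x\mapsto\mathtt{S}(x)$ that you flagged as the main obstacle.
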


{\color{blue} Uniqueness on the other hand is not always guaranteed. In fixed point theory it is well known that a sufficient condition for uniqueness is  contractiveness. To study contractiveness properties of the best response operator we need to introduce the so-called graphon operator, see also \cite[Section 7.5]{Lovasz2012}.}

\begin{definition}[Graphon operator]\label{def:g_op}
\textit{ For a given graphon $W$, we define the associated graphon operator $\mathbb{W}$ as the  integral operator $\mathbb{W}:\leb\mapsto \leb$ given by}

$$ f(x) \mapsto (\mathbb{W}f)(x)=\int_0^1W(x,y)f(y)\mathrm{d}y.$$
\end{definition}

Intuitively, the graphon operator plays the same role that the adjacency matrix of a graph plays in network analysis. 
Specifically, the graphon operator $\mathbb{W}$
is a linear operator mapping functions to functions, exactly as the adjacency matrix of a network is a linear operator mapping vectors to vectors. One can then introduce spectral properties of the graphon operator, \cite[Definition 3.7.2]{hutson2005applications}.
\begin{definition}[Eigenvalues and eigenfunctions]\label{def:eigen}
  \textit{  A complex number $\lambda$ is an eigenvalue of the operator $\mathbb{W}$ if there exists a {nonzero} function $\psi \in\leb$, called the eigenfunction, such that}
    
    \begin{equation}\label{E:def_eigenfunction}
    (\mathbb{W}\psi)(x) =\lambda \psi(x).
    \end{equation}
    
\end{definition}
{\color{blue} As summarized in Lemma \ref{lem:wn} in Appendix \ref{app:graphon_games}, all the eigenvalues of the graphon operator $\mathbb{W}$ are real and the operator norm, defined as $\vertiii{\mathbb{W}}:= \sup_{f\in\leb, \|f\|_{L^2}=1} \|\mathbb{W}f\|_{L^2}$ coincides with the largest eigenvalue of  $\mathbb{W}$ which we denote by   $\lambda_{\textup{max}}(\mathbb{W})$. We next show that if $\lambda_{\textup{max}}(\mathbb{W})$ is not too large, as formalized in Assumption \ref{cond}, then the best response operator is a contraction, guaranteeing uniqueness of the graphon equilibrium, as shown in Theorem \ref{thm:unique}. }

\begin{assumption}[Contraction] \label{cond}
\textit{Suppose that 
\[\frac{\ell_\J}{\alpha_\J}\cdot \lambda_{\textup{max}}(\mathbb{W})<1,\]
{\blue where $\ell_\J$ and $\alpha_{\J}$ are Lipschitz constants as defined in Assumption \ref{ass:cost}, while} $\lambda_{\textup{max}}(\mathbb{W})$ is the largest eigenvalue of the graphon operator $\mathbb{W}$.}
\end{assumption}

 {\color{blue}\begin{remark}  Assumption \ref{cond} is similar to  assumptions commonly used to obtain uniqueness in  finite network games, see for example \cite{ballester2006s}, and guarantees that the effect of the neighbors aggregate on an agent's marginal payoff, quantified by $\ell_\J \lambda_{\textup{max}}(\mathbb{W})$ is not too large with respect to  effect of its own strategy, quantified by $\alpha_\J$. The only difference is that while in the network game literature  the effect of the network is captured by the maximum eigenvalue of the finite network $P^{[N]}$,  in the case of graphon games the corresponding role is played by the dominant eigenvalue of the graphon, that is, $\lambda_{max}(\mathbb{W})$. In both cases this quantity captures the maximum amount by which the network/graphon can amplify a unitary vector/function.\footnote{\color{blue} In \cite{parise2019variational} conditions for uniqueness based on different network quantities such as the minimum eigenvalue or the infinity norm (i.e. the maximum degree) are discussed. We believe that a similar analysis is possible and interesting also for graphon games. }
 \end{remark} }

\begin{theorem}[Uniqueness]\label{thm:unique}
 \textit{Suppose that the graphon game $\mathcal{G}(\mathtt{S},\J, {\blue \theta},W)$ satisfies Assumptions \ref{ass:cost}, \ref{ass:constraint}A) and \ref{cond}. Then it 
 admits a unique Nash equilibrium.}
 \end{theorem}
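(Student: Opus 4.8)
The plan is to deduce uniqueness from the Banach fixed point theorem applied to the best response operator whose fixed points coincide with Nash equilibria, as established in Appendix~\ref{app:graphon_games}. Concretely, let $\mathbb{B}$ denote the operator on a suitable closed subset of $\lebn$ defined by $(\mathbb{B}s)(x) := \arg\max_{\tilde s\in\mathtt{S}(x)} \J(\tilde s, z(x\mid s))$, which is single-valued because Assumption~\ref{ass:cost} makes $\J(\cdot,z)$ strongly concave over the convex closed set $\mathtt{S}(x)$. Theorem~\ref{thm:existence} together with the reformulation in the appendix guarantees that $\mathbb{B}$ maps a nonempty closed bounded convex set into itself (using Assumption~\ref{ass:constraint}A) to control the image), so the remaining task is to show $\mathbb{B}$ is a contraction on that set in the $\|\cdot\|_{\lnn}$ norm; then the fixed point, hence the Nash equilibrium, is unique.

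The key estimate is a pointwise Lipschitz bound on the projected-gradient/argmax map. First I would recall the standard variational-inequality fact: if $\J(\cdot,z)$ is $\alpha_\J$-strongly concave and $g(z):=\arg\max_{\tilde s\in\mathtt{S}} \J(\tilde s,z)$, then for two aggregates $z,z'$ one has
\[
\|g(z)-g(z')\| \le \frac{1}{\alpha_\J}\,\|\nabla_s\J(g(z'),z) - \nabla_s\J(g(z'),z')\| \le \frac{\ell_\J}{\alpha_\J}\,\|z-z'\|,
\]
where the first inequality comes from strong monotonicity of $-\nabla_s\J$ combined with the optimality (first-order) conditions for $g(z)$ and $g(z')$ over the common convex set, and the second uses the uniform Lipschitz property of $\nabla_s\J$ in $z$ from Assumption~\ref{ass:cost}. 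Applying this with the agent-$x$ strategy set $\mathtt{S}(x)$ and the aggregates $z(x\mid s)$, $z(x\mid s')$ yields
\[
\|(\mathbb{B}s)(x) - (\mathbb{B}s')(x)\| \le \frac{\ell_\J}{\alpha_\J}\, \|z(x\mid s) - z(x\mid s')\| = \frac{\ell_\J}{\alpha_\J}\,\Big\|\textstyle\int_0^1 W(x,y)\big(s(y)-s'(y)\big)\,dy\Big\|.
\]

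Next I would integrate the square of this inequality over $x\in[0,1]$ and pass to operator norms. Writing $d:=s-s'$, the right-hand side integrated gives $\frac{\ell_\J^2}{\alpha_\J^2}\int_0^1 \|(\mathbb{W}d)(x)\|^2\,dx = \frac{\ell_\J^2}{\alpha_\J^2}\|\mathbb{W}d\|_{\lnn}^2$ (componentwise, $\mathbb{W}$ acts on each coordinate $d_k\in\leb$), so
\[
\|\mathbb{B}s - \mathbb{B}s'\|_{\lnn} \le \frac{\ell_\J}{\alpha_\J}\,\|\mathbb{W}d\|_{\lnn} \le \frac{\ell_\J}{\alpha_\J}\,\vertiii{\mathbb{W}}\,\|d\|_{\lnn}.
\]
By Proposition~\ref{lem:w}(3), $\vertiii{\mathbb{W}} = \lambda_{\textup{max}}(\mathbb{W})$, so Assumption~\ref{cond} gives $\frac{\ell_\J}{\alpha_\J}\lambda_{\textup{max}}(\mathbb{W}) < 1$, i.e.\ $\mathbb{B}$ is a strict contraction. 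The Banach fixed point theorem then yields a unique fixed point of $\mathbb{B}$ in the invariant set, and by the equivalence in Appendix~\ref{app:graphon_games} this is the unique Nash equilibrium of $\mathcal{G}(\mathtt{S},\J,W)$.

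The main obstacle I anticipate is making the argmax/variational-inequality Lipschitz estimate fully rigorous in the graphon setting: one must verify that $\mathbb{B}$ is well-defined as a map into $\lebn$ (measurability of $x\mapsto(\mathbb{B}s)(x)$, which requires a measurable-selection-type argument or is handled by the appendix reformulation), and that the invariant closed bounded convex set on which the contraction is run is correctly identified using only Assumption~\ref{ass:constraint}A) rather than the stronger B) --- the point being that the argmax bound $M$ at the reference aggregate $\hat z$ propagates to a uniform bound on iterates of $\mathbb{B}$ via the same Lipschitz inequality. Modulo these bookkeeping points, all of which the appendix is set up to supply, the contraction constant computation above is the heart of the proof.
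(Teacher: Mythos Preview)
Your proposal is correct and follows essentially the same approach as the paper: prove the game operator is a contraction in $\lebn$ via the pointwise argmax Lipschitz bound $\frac{\ell_\J}{\alpha_\J}$ combined with $\vertiii{\mathbb{W}}=\lambda_{\textup{max}}(\mathbb{W})$, then invoke Banach. The only minor simplification the paper makes is that it applies Banach directly on the full Hilbert space $\lebn$ (using Assumption~\ref{ass:constraint}A) solely to ensure the codomain of the best-response operator is $\lebn$, as in Lemma~\ref{lem:b}.3), so there is no need to identify an invariant bounded set or to appeal to Theorem~\ref{thm:existence}.
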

 
 Note that in Theorem \ref{thm:unique},  Assumption \ref{ass:constraint}B) is  not needed. In other words the strategy sets $\mathtt{S}(x)$ do not need to be bounded. This is because  for contraction mappings existence and uniqueness of the fixed point can be guaranteed under the sole assumption that the domain is closed and convex, without the need for compactness.\footnote{On the other hand, Assumption \ref{ass:constraint}A) is needed to guarantee that the best response to any strategy profile in $\leb$ belongs to the same space (i.e., it is square integrable).} 
 
 To illustrate our results we consider the familiar framework of linear quadratic games. 
 
{\blue \begin{example}[Linear quadratic graphon games]\label{lqgg}
Building on  Example \ref{ex:lq}, consider a linear quadratic \textit{graphon} game  where the strategy of each agent is scalar and nonnegative so that $\mathtt{S}(x)=\R_{\ge 0}$ for all $x\in[0,1]$ and the payoff function of  an arbitrary agent playing strategy $s$ and subject to the local aggregate $z$ is  quadratic  in $s$ and linear in $z$
\begin{equation}\label{eq:cost_quadratic}
\J(s,z,\thetaL)=-\frac12 s^2+ s[\alpha z+\thetaL],
\end{equation}
{\blue for $\theta(x)\equiv \thetaL$ and $\alpha,\thetaL$ as defined in Example \ref{ex:lq}}. The best response for each agent $x$ is  given by
\begin{equation}
\label{eq:br_scalar}
s_{\textup{br}}(x \mid  s)=\max\{0,[\alpha z(x\mid  s)+\thetaL]\}.
\end{equation}
It therefore follows that $\J$ satisfies Assumption \ref{ass:cost} with $\alpha_\J=1$, $\ell_\J=|\alpha|$.  Note also that Assumption~\ref{ass:constraint}A) is satisfied (take e.g. $\hat z=0, M=\thetaL$). 
Consequently, by Theorem \ref{thm:unique} a unique graphon equilibrium exists if 
$$|\alpha|<\frac{1}{\lambda_{\textup{max}}(\mathbb{W})},$$
which is a similar condition as the one derived in \cite{ballester2006s} for finite network games. 
If additionally $\alpha>0$, we can immediately see  from \eqref{eq:br_scalar} that  the best response of each agent is an increasing function of the local aggregate $z(x\mid s)$,  i.e., this is a game of strategic complements [\cite{ballester2006s}] and the unique Nash equilibrium  $\bar s $ is  \textit{internal} (i.e., it satisfies $\bar s (x)>0$ for all $x\in[0,1]$). From \eqref{eq:br_scalar}  it then must hold

\begin{align}\label{step_lq}
\bar s(x)=\alpha z(x\mid \bar s)+\thetaL &\quad \Rightarrow  \quad  \bar s(x)    =\alpha(\mathbb{W}\bar s)(x)+\thetaL\\
&\quad \Rightarrow \quad (\mathbb{I} \bar s)(x)    =\alpha(\mathbb{W}\bar s)(x)+\thetaL\quad \Rightarrow \quad  ((\mathbb{I}-\alpha\mathbb{W})\bar s)(x)     =\thetaL {1}_{[0,1]} (x).\notag
\end{align}

The condition $|\alpha| \lambda_{\textup{max}}(\mathbb{W})<1$  implies invertibility of the operator $(\mathbb{I}-\alpha\mathbb{W})$. Hence  
\begin{align}\label{eq:bon}
\bar s(x)     &=\thetaL ((\mathbb{I}-\alpha\mathbb{W})^{-1}{1}_{[0,1]}) (x) = \thetaL \sum_{k=0}^\infty \alpha^k(\mathbb{W}^k{1}_{[0,1]})(x)
\end{align}
which corresponds to the Bonacich centrality of agent $x$ in the graphon $W$, as defined in \cite{graphons}. 
\end{example}}

 Finally, for graphon games satisfying the assumptions of Theorem~\ref{thm:unique},  so that the Nash equilibrium is unique,  we   study the effect of graphon perturbations. {\blue To this end,  for any linear integral operator $\mathbb{O}$, we denote by $\vertiii{\mathbb{O}}:=\max_{f\in\leb} \|\mathbb{O}f\|_{L^2}$  its operator norm.}

\begin{theorem}[{\blue Continuity}]\label{thm:comp}
\textit{ Suppose that the graphon game $\mathcal{G}(\mathtt{S},\J, {\blue \theta},W)$ satisfies Assumptions \ref{ass:cost}, \ref{ass:constraint}B), \ref{cond} and let $\bar s$ be its unique Nash equilibrium. Consider a perturbed graphon $\tilde W$, {\blue a perturbed function $\tilde \theta$} and let $\tilde s$ be any Nash equilibrium of the graphon game $\mathcal{G}(\mathtt{S},\J, {\blue \tilde \theta},\tilde W)$. Then it holds 
\begin{equation}\label{eq:K}
\blue \|\bar s -\tilde s\|_{L^2} \le \frac{1/\alpha_\J }{1-\ell_\J/\alpha_\J \lambda_{\textup{max}}(\mathbb{W})}\left(\ell_\J\vertiii{\mathbb{W} -\tilde{\mathbb{W}}}s_\textup{max} +\ell_\theta \|\theta-\tilde \theta\|_{L^2}\right).
\end{equation}}
\end{theorem}

 The result in Theorem \ref{thm:comp}, besides  being  of  interest on its own,  is   fundamental for the finite population analysis performed in the next sections.

 \ssection{\textbf{Sampled network games: Definition and examples }}
\label{sec:network_formation}

{\color{blue} Graphon games describe strategic interactions among a continuum of agents. In this section we show how one can sample finite networks from a graphon and define  \textit{sampled network games}.  In the next section we will then study the relation between equilibria of sampled network games and graphon games. }

\subsection{\textbf{Graphons as a stochastic network formation model}}
 \label{step1}

 In the next definition, we illustrate how  a graphon can be used to describe a probability distribution over the space of networks and how one can sample from this distribution to construct a sampled network, see Figure \ref{fig:example},  \cite[Chapter 10]{Lovasz2012}.

\begin{figure}
\begin{center}
a) \hspace{0.4cm} $W(x,y)$ \hspace{2.6cm} b) \hspace{0.4cm} $P_w^{[5]}$   \hspace{0.8cm} \hspace{2.4cm} c) \hspace{0.4cm}$P^{[5]}_s$ \hspace{0.9cm}  \\[0.4cm]
\includegraphics[height=3.4cm]{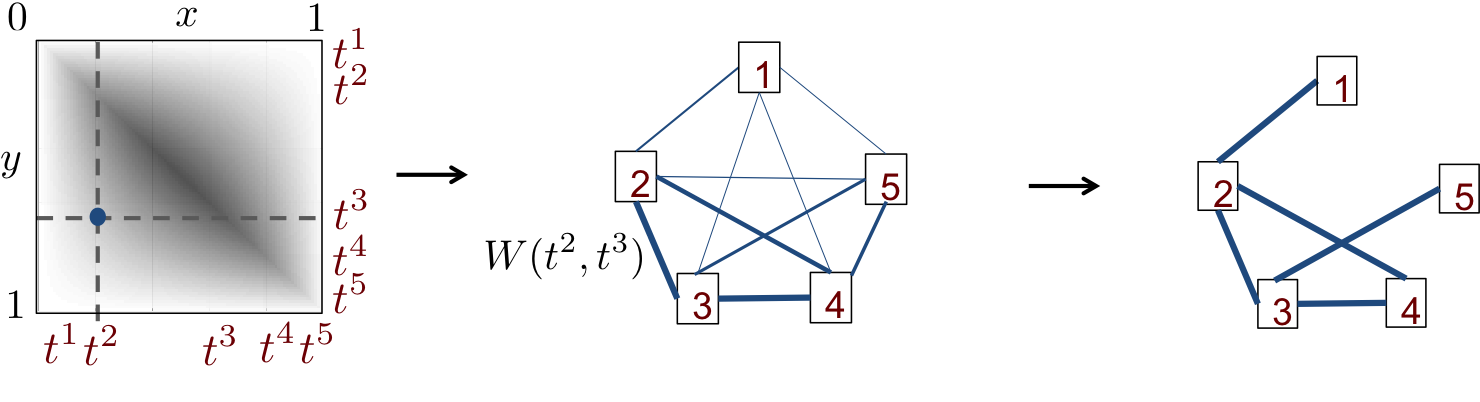}
\end{center}
\caption{Illustration of the sampling procedure described in Definition \ref{sample} for $N=5$. a) The  graphon. b) The \textit{weighted adjacency matrix}  $P_w^{[5]}$ associated with  the random sample $[t^1,\ldots,t^5]=[0.03, 0.31,0.69,0.82,0.95]$. c)  A realization of the \textit{$0$-$1$ adjacency matrix}  $P_s^{[5]}$.  For the graphon a linear grayscale colormap is used with  white associated to  $W=0$ and black to $W=1$. For $P_w^{[5]}$ the width of the line is proportional to the weight of the corresponding edge. In $P_s^{[5]}$ any edge has weight $1$.}
\label{fig:example}
\end{figure}

\begin{definition}[Sampling procedure]\label{sample}
\textit{ Given any graphon $W$ and any desired number $N$ of nodes, uniformly and independently sample $N$ points $\{\type^i\}_{i=1}^N$ from $[0,1]$ and 
define a \textit{weighted adjacency matrix} $P_w^{[N]}$ as follows}

$$[P_w^{[N]}]_{ij}=\begin{cases} W(\type^i,\type^j), & \mbox{if}\ i\neq j, \\ 0, &  \mbox{if}\ i= j.  \end{cases}$$
\textit{Starting from $P_w^{[N]}$,   define the \textit{$0$-$1$ adjacency matrix} $P_s^{[N]}$ as the adjacency matrix corresponding to a graph with $N$ nodes obtained  by  randomly connecting nodes $i,j \in[1,N]$ with Bernoulli probability   $[P_w^{[N]}]_{ij}$.}
\end{definition}\
\begin{remark}\label{order}
The random points $\{t^i\}_{i=1}^N$ can be interpreted as  \textit{agents types}  (e.g, an agent's type may represents the community to which the agent belongs or its geographical  location, as discussed in the following Examples 3 and 4). The   graphon value $W(t^i,t^j)$ is then encoding information about the level of interaction  between two arbitrary agents of type $t^i$ and $t^j$.
From here on we are going to assume that the  $\{\type^i\}_{i=1}^N$ are ordered such that $\type^i\le \type^{i+1}$ for all $i\in\{1,\ldots, N-1\}$. This is without loss of generality, since it simply corresponds to a relabeling of the nodes.  Figure \ref{fig:example} illustrates the sampling procedure described in Definition~\ref{sample}. 
Note that both $P^{[N]}_{w}$ and $P^{[N]}_{s}$ are stochastic matrices. The difference between the two is that $P^{[N]}_{w}\in[0,1]^{N\times N}$ while $P^{[N]}_{s}\in\{0,1\}^{N\times N}$. {\color{blue} Finally note that an agent of type $t^i$ has an expected number of neighbors that grows as $N\int_0^1 W(t^i,t^j) dt^j$. Hence networks sampled according to Definition \ref{sample} are dense. The generalization to sparser networks is discussed in Section~\ref{sec:sparse}. }
\end{remark}

To develop more intuition on the framework of graphons and its connection to other well-known stochastic network formation processes we start by noting that 
for any $p\in[0,1]$, the constant graphon $W(x, y) \equiv p$ coincides with the  Erd\H{o}s-R\'enyi random graph model where each pair of agents in connected with  probability $p$. In the next example, we  show how graphons can be used to encode stochastic block models, which can be seen as an extension of  
Erd\H{o}s-R\'enyi  models to a setting with finitely many communities.

\begin{example}[label=ex:sbm] \textup{{(Community structure)}}
Consider networks where   agents are divided into $K$ communities and let $\pi_k$  be the probability that a random agent belongs to community $k$, with $\sum_{k=1}^K \pi_k=1$. 
Additionally,  assume that agents belonging to the same community form a link  with Bernoulli probability $g_{\textup{in}}$ while agents from  different communities  form a link  with  probability $g_{\textup{out}}$ (typically smaller than $g_{\textup{in}}$).\footnote{{\blue The parameters $\pi_k$ are exogenous and model the probability that  agents are born with type $k$, e.g. male or female.}  {\blue The exogenous parameters $g_{in}$ and $g_{out}$ are instead a result of the different costs borne by each agent when forming a link to someone from the same and from the other community (see for example \cite{jackson2005economics}).}} To generate such a community structure  from a graphon, one can partition  $[0,1]$ into $K$ disjoint intervals $\{\mathcal{C}_k\}_{k=1}^K$, with $|\mathcal{C}_k|=\pi_k$, and use the piecewise constant graphon

$$W_{\textup{SBM}}(x,y)=\begin{cases} g_{\textup{in}} & \mbox{if there exists } k \mbox{ s.t. } x\in \mathcal{C}_k ,y\in \mathcal{C}_k, \\ g_{\textup{out}} & \mbox{otherwise}. \end{cases}$$

 We  denote this graphon with the label ``SBM'' because of its relation to Stochastic Block Models.
Figure \ref{fig:graphon} (left) illustrates  an SBM graphon of this type with $K=2$ communities (e.g. red and blue agents) of   size $[w_1,w_2]=[0.75, 0.25]$ and with $g_{\textup{in}}=0.8$, $g_{\textup{out}}=0.1$. In this case, we selected $\mathcal{C}_1=[0, 0.75]$, $\mathcal{C}_2=(0.75, 1]$.

\end{example}

 In the previous example agents are partitioned into a finite number of different communities. Graphons can also be used to model processes  where agents types may take infinitely many values.  
  The next example illustrates one such case where an agent's type is given by its location.

\begin{example}[label=ex:minmax] \textup{{(Location model)}}
Consider a  model where  $N$ agents  are independently located uniformly at random along a line segment represented by the interval [0,1] (e.g., homeowners along a street) and
 assume that the level of interaction between  agent $i$ and $j$  is  a decreasing function of  their spatial distance, {\blue capturing the natural observation that the cost of forming links increases with  agents geographical distance, as motivated in \cite{johnson2003spatial}.} This type of interaction  can be represented for example by using the  ``minmax'' graphon  

$$W_{\textup{MM}}(x,y)=\min(x,y)(1-\max(x,y)),$$
where $x\in[0,1]$ denotes the agents position along the line, see Figure~\ref{fig:graphon} (right). 
\begin{figure}
\begin{center}
\includegraphics[width=0.18\textwidth]{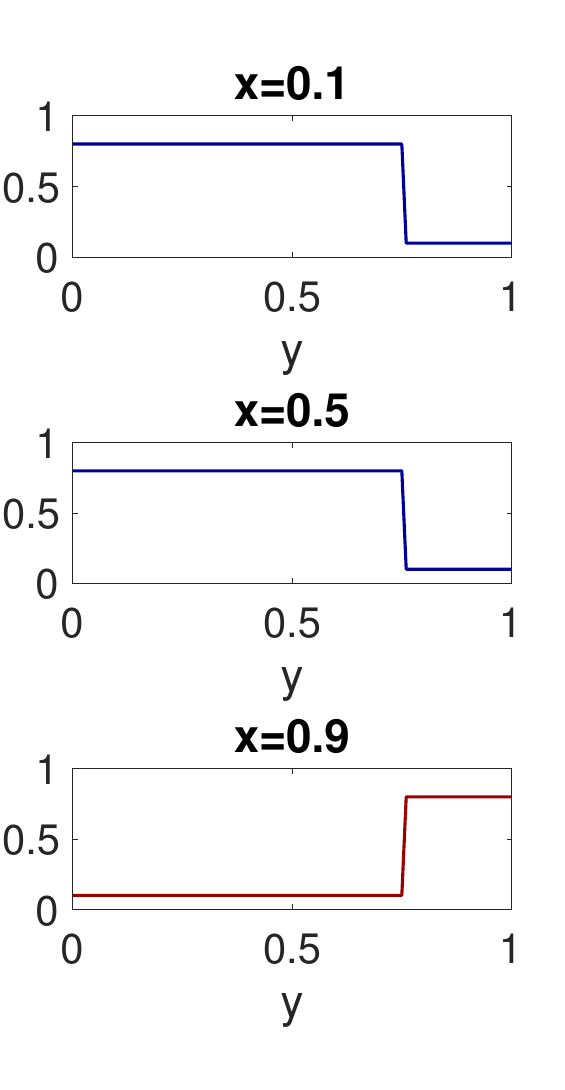} \quad
\includegraphics[width=0.23\textwidth]{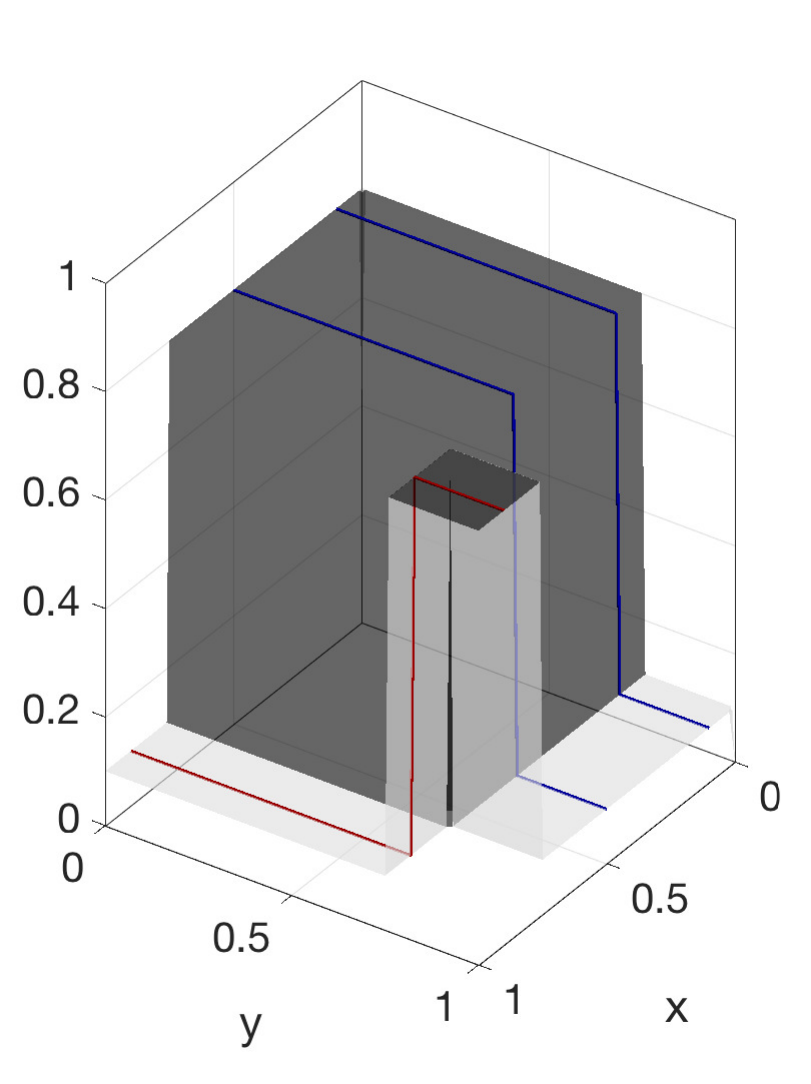} \qquad
\includegraphics[width=0.18\textwidth]{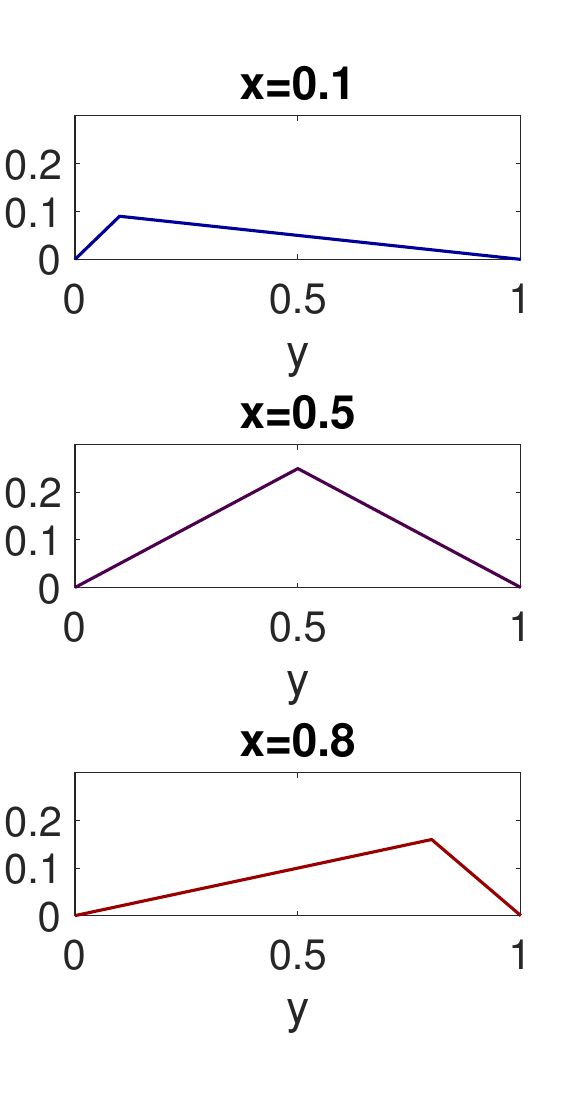} \quad
\includegraphics[width=0.23\textwidth]{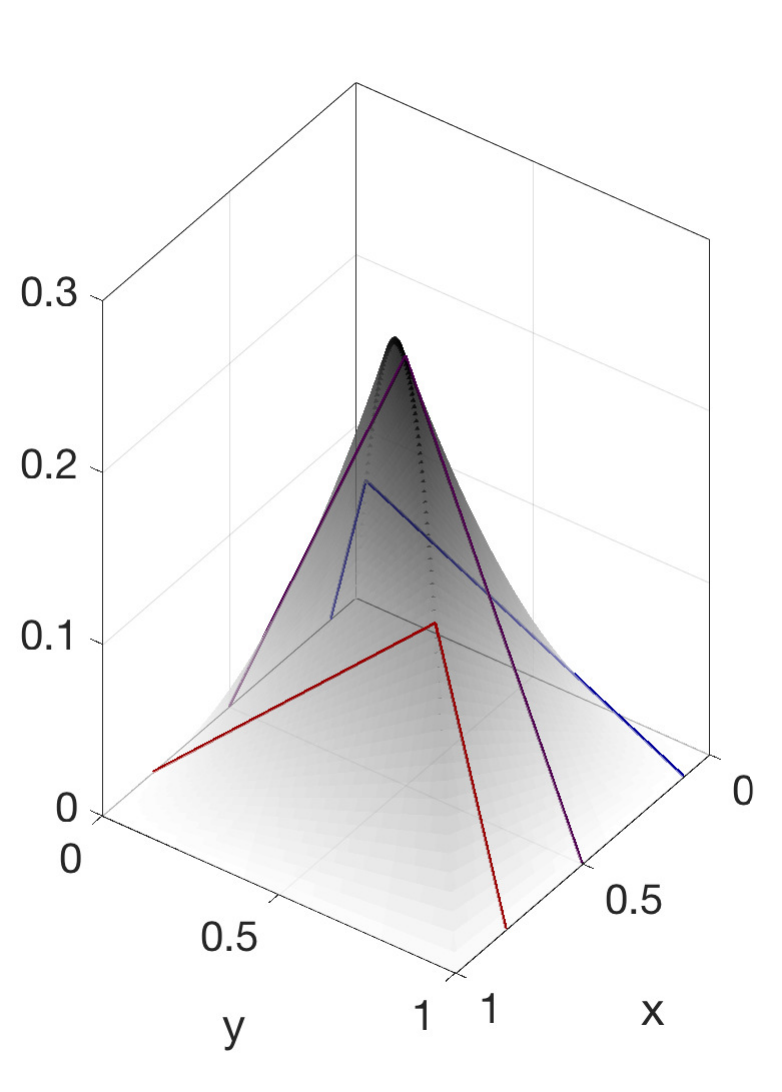} 
\end{center}
\caption{Left:   The SBM graphon of Example \ref{ex:sbm}. Note that the agents in position $x=0.1$ and $x=0.5$ belong to the same community and are thus affected in the same way by the rest of the agents. Right: The minmax graphon of Example \ref{ex:minmax}. In both cases $W(x,\cdot)$ for three different values of $x$ and $W(x,y)$ as a function of both $x$ and $y$ are shown. A linear grayscale colormap is used with  white associated to  $W=0$ and black to $W=1$.}
\label{fig:graphon}
\end{figure}
\end{example}

 \subsection{\textbf{ Sampled network games }}
 \label{step2}
 
We here specialize the definition of network games introduced in Section \ref{sec:network_games}  to games where the network of interactions is sampled from a graphon. Intuitively, we define a sampled network game as a game where $N$ agents of type $\{\type^i\}_{i=1}^N$, randomly sampled in $[0,1]$, interact over a network  formed according to the process described in Definition~\ref{sample}. Note that, we consider both games played over  the weighted adjacency matrix  $P^{[N]}_{w}\in[0,1]^{N\times N}$ and the $0$-$1$ adjacency matrix  $P^{[N]}_{s}\in\{0,1\}^{N\times N}$ and we use the symbol $P^{[N]}_{w/s}$ for statements that hold in both cases. 
 
 \begin{definition}[Sampled network game]\label{sng}
 Given a graphon $W$, a payoff function $U$, a set valued function $\mathtt{S}$ {\blue and a parameter function $\theta$}, we define a sampled network game among $N$ agents of type $\{\type^i\}_{i=1}^N$  as $\mathcal{G}^{[N]}(\{\mathtt{S}(\type^i)\}_{i=1}^N,\J,{\blue \{{\theta}(t^i)\}_{i=1}^N}, P_{w/s}^{[N]})$, where the types $\{\type^i\}_{i=1}^N$ are sampled uniformly and independently at random from $[0,1]$ and $P_{w/s}^{[N]}$ is as in  Definition~\ref{sample}.  
 \end{definition}

{\color{blue} Figure \ref{fig:example1} and \ref{fig:example2} show the equilibria of three realizations of  sampled network games with LQ payoffs as in Example \ref{ex:lq}, when the networks are sampled from the graphons described in Example \ref{ex:sbm} and \ref{ex:minmax}, for different values of $N$.} {\color{blue} In both examples, one can observe similarities between equilibria of different sampled network games. For instance in Example \ref{ex:sbm} red agents tend to exert lower efforts at equilibrium than  blue agents, while in Example \ref{ex:minmax} agents at more central locations exert higher efforts at equilibrium.  This trend becomes sharper and ``more deterministic'' as the population size increases. In the next section we formalize these observations by showing that equilibria of sampled network games converge to the equilibrium of the corresponding graphon game, as defined in Section \ref{graphon_games}.}

\begin{figure}
\begin{center}
 \includegraphics[width=4.8cm, angle =180]{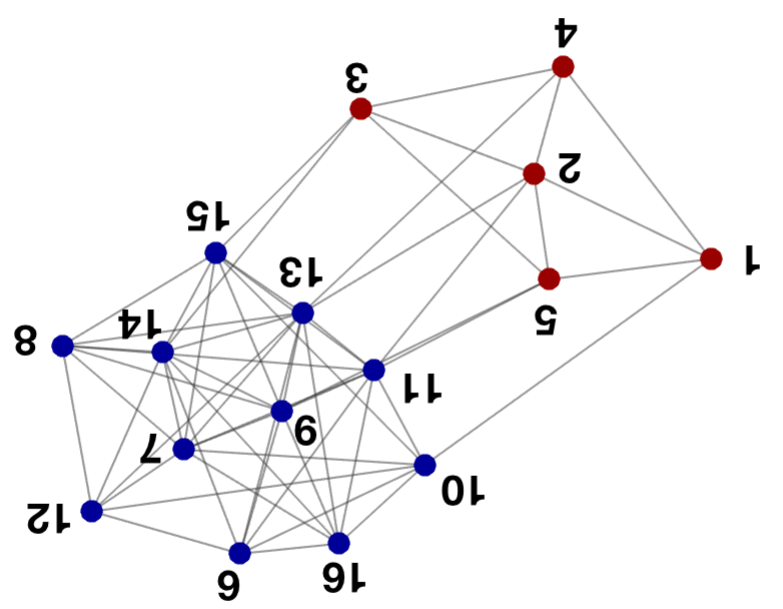}
\includegraphics[width=5cm, angle =180]{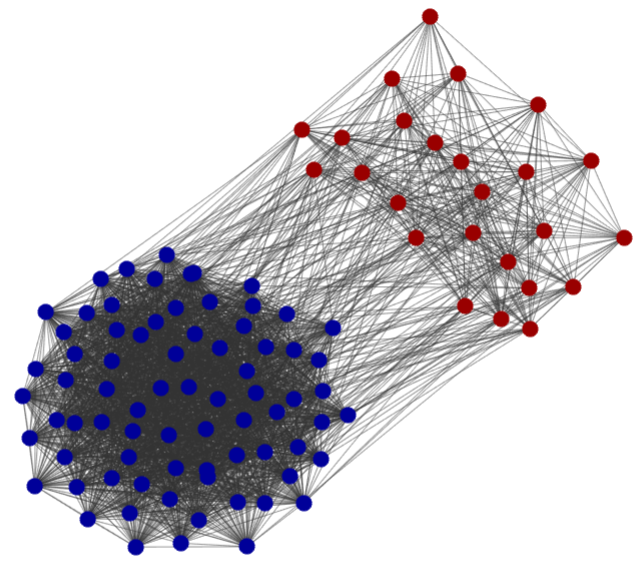}
\includegraphics[width=5cm, angle =180]{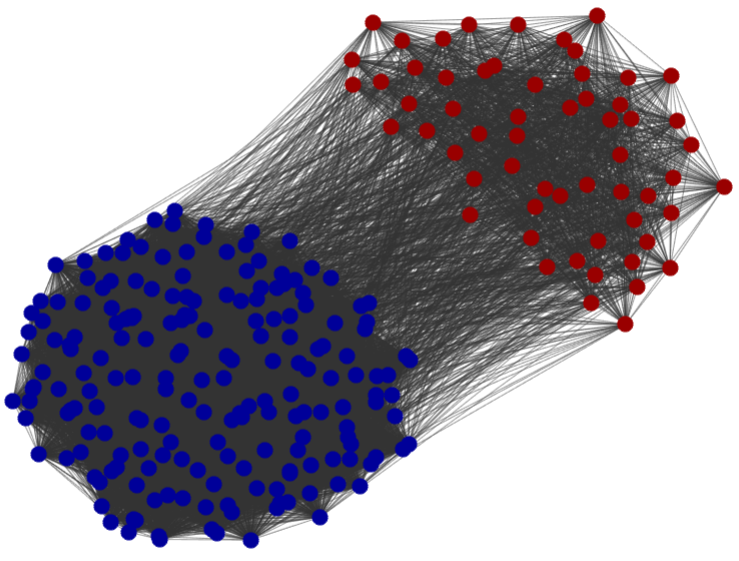}\\
\includegraphics[width=4.8cm]{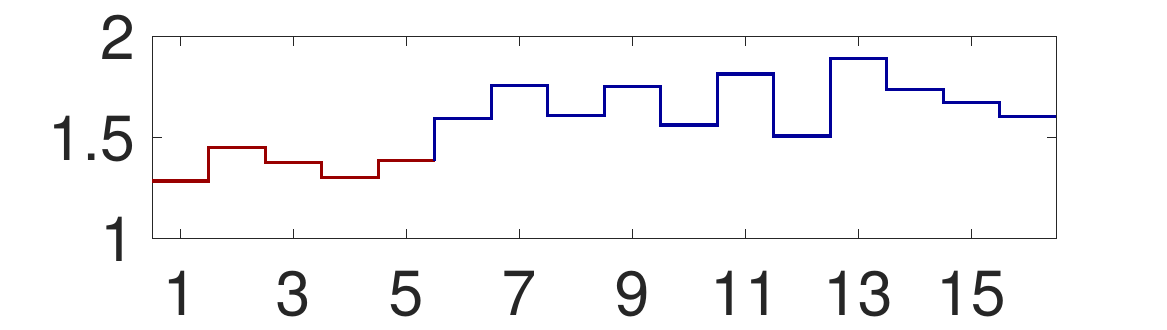}
\includegraphics[width=4.8cm]{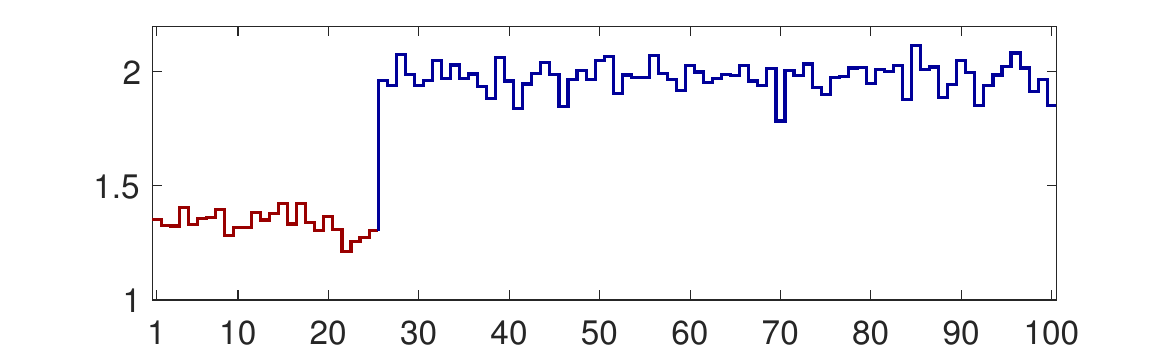}
\includegraphics[width=4.8cm]{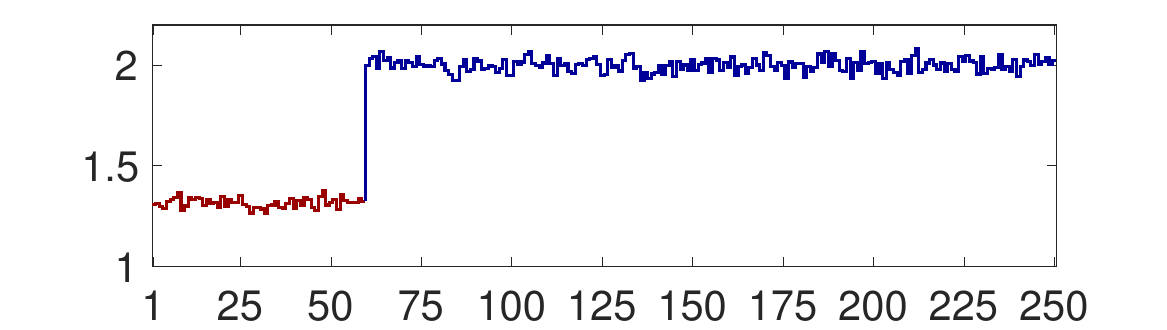}
\end{center}
\caption{ Three realizations of networks formed according to the  two community model described in  Example~\ref{ex:sbm} (with $\pi_{\textup{red}}=0.25$, $\pi_{\textup{blue}}=0.75$, $g_{in}=0.8$ and $g_{out}=0.1$) for $N=10,100,250$  and their corresponding equilibria (for payoff as in \eqref{costL} with $\alpha=0.8$, $\thetaL=1$).}
\label{fig:example1}
\end{figure}

 \begin{figure}
\begin{center}
\includegraphics[width=4.8cm]{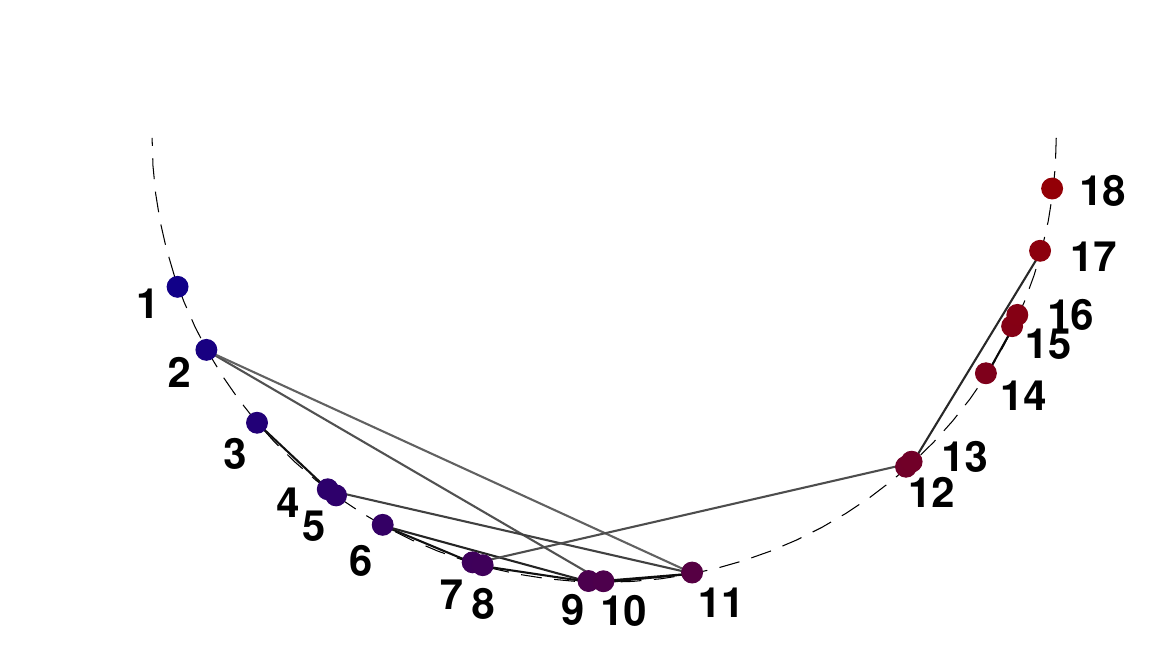}
\includegraphics[width=4.8cm]{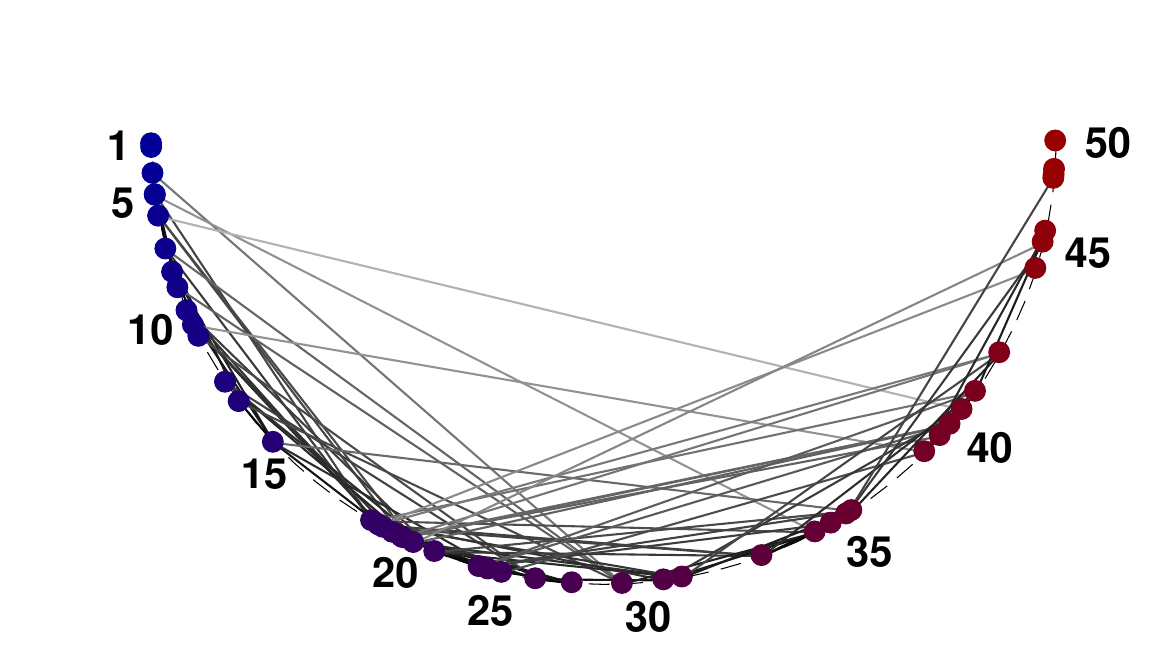}
\includegraphics[width=4.8cm]{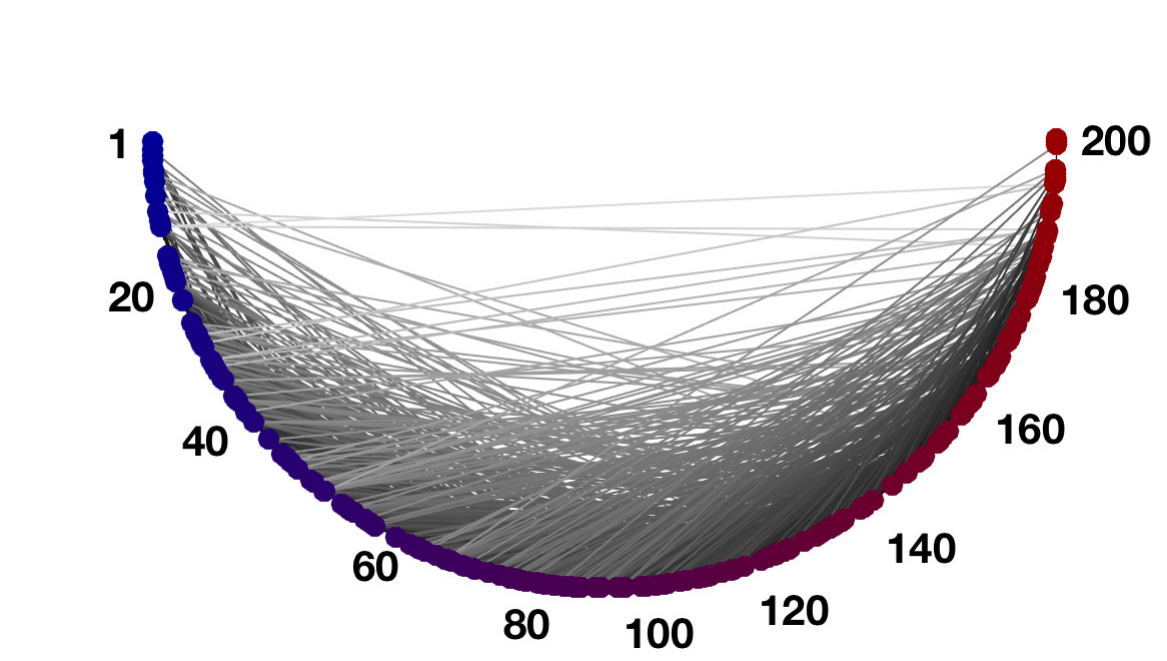}\\
\includegraphics[width=4.8cm]{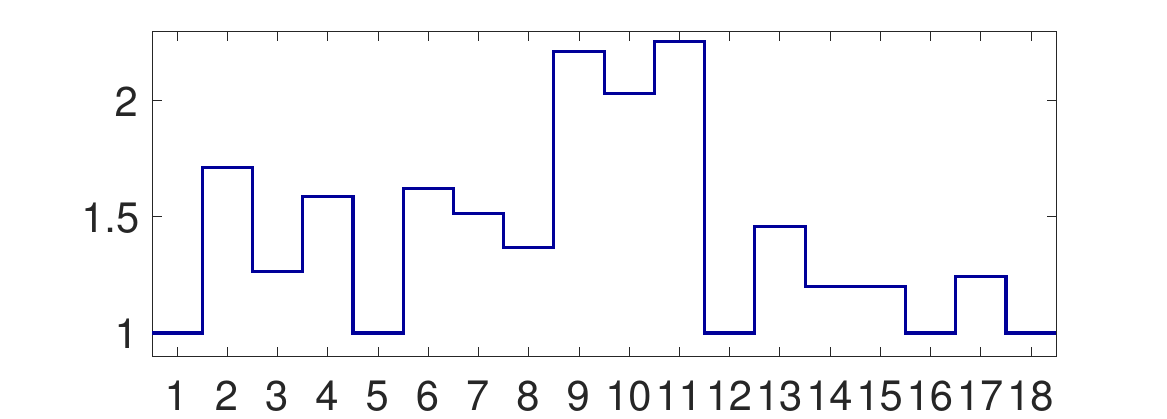}
\includegraphics[width=4.8cm]{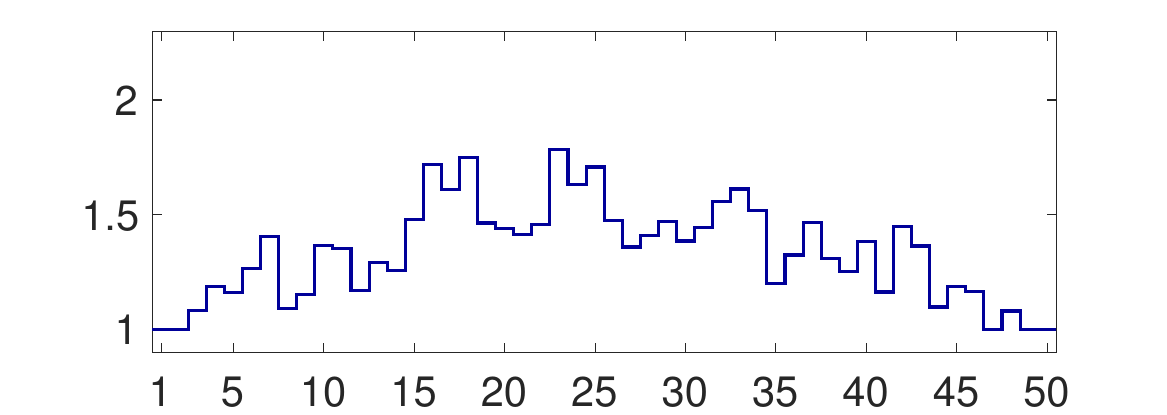}
\includegraphics[width=4.8cm]{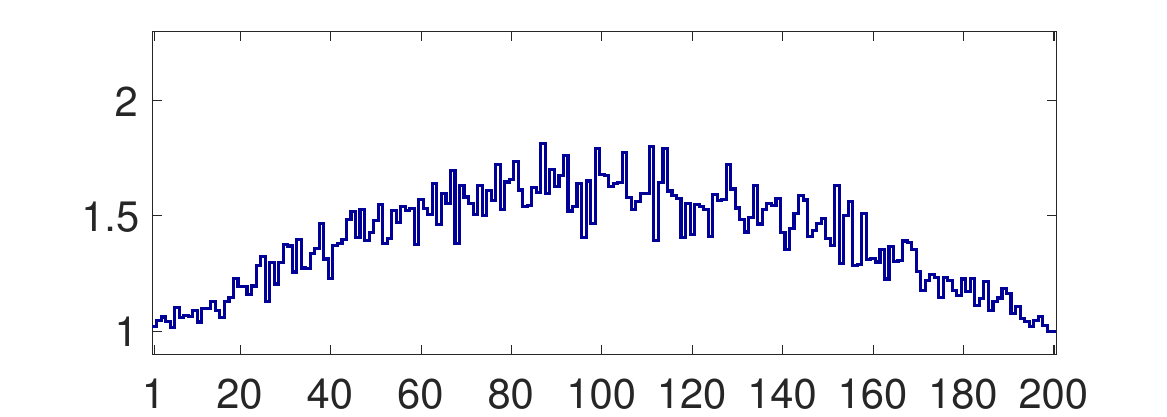}
\end{center}
\caption{ Three realizations of networks formed according to the location model described in Example \ref{ex:minmax} and their corresponding equilibria (for  payoff as in \eqref{costL} with $\alpha=3$ and $\thetaL=1$). The line along which agents are located is represented as a semicircle for simplicity of visualization. The color of the nodes is  associated to the agents location along the line (blue being one extreme and red the other extreme). Edges between agents that are further apart are in lighter color.}
\label{fig:example2}
\end{figure}

\ssection{\textbf{Sampled network games: Convergence analysis}}\label{sec:asymptotic}

\subsection{\textbf{Network games are graphon games}}
\label{net_graph}
 We start our analysis by showing that any network game   can be equivalently reformulated as a graphon game.  In  network games Nash equilibria  are vectors of $\R^{N}$  while in  graphon games they   are functions of $\leb$. To compare these two objects, we define a one-to-one correspondence between vector and functions using a   uniform partition $\mathcal{U}^{[N]}=\{\mathcal{U}_1^{[N]},\mathcal{U}_2^{[N]},\ldots,\mathcal{U}_N^{[N]}\}$ of $[0,1]$ obtained by setting $\mathcal{U}_k^{[N]}=[\frac{k-1}{N},\frac kN),$ $k\in\{1,\ldots,N-1\}
$ and $\mathcal{U}^{[N]}_N=[\frac{N-1}{N},1]$. Intuitively, we are going to pair each agent $i$ in a finite network game with the interval $\mathcal{U}^{[N]}_i$. For any $N\in\N$  we can then define   the  \textit{step function equilibrium} $\bar s_{[N]}(x)\in\leb$ corresponding to any equilibrium $\bar s_{[N]}\in\R^{N}$  of a network game as follows
\[\bar s_{[N]}(x):= \bar s_{[N]}^i, \quad \forall x\in \mathcal{U}^{[N]}_i, \ \forall  i\in\{1,\ldots,N\}.\] 
By exploiting this reformulation we can compare the Nash equilibria of graphon and network games (or of network games with different population sizes) by working in the $\leb$ domain. 
Similarly, the   uniform partition $\mathcal{U}^{[N]}$ can be used to define a one-to-one correspondence between  any graph $P^{[N]}\in\R^{N\times N}$ and a corresponding \textit{step function graphon} $W^{[N]}$  obtained by setting 

\begin{equation}\label{Wcor}
W^{[N]}(x,y):= P^{[N]}_{ij}, \quad \forall (x,y)\in \mathcal{U}^{[N]}_i\times \mathcal{U}^{[N]}_j,\quad \forall i,j\in\{1,\ldots,N\}.
\end{equation}

The following theorem shows that  the step function equilibria of any network game with graph $P^{[N]}$ coincide with  the Nash equilibria of the graphon game with step function graphon $W^{[N]}$ corresponding to $P^{[N]}$.

\begin{theorem}\label{thm:oneNash}
\blue
\textit{A vector $\bar s_{[N]}\in \R^{N}$ is a Nash equilibrium of  $\mathcal{G}^{[N]}(\{\mathcal{S}^i\}_{i=1}^N, \J,\{\theta^i\}_{i=1}^N, P^{[N]})$  if and only if the  corresponding  step function equilibrium $\bar s_{[N]}(x)\in \leb$ is a  Nash equilibrium  of the graphon game  $\mathcal{G}(\mathtt{S}^{[N]}, \J, \theta^{[N]}, W^{[N]})$ with   payoff function as in \eqref{eq:br_naJ}, set valued function  $\mathtt{S}^{[N]}(x):=\mathcal{S}^i$ for all $x\in \mathcal{U}_i^{[N]}$, parameter function $\theta^{[N]}(x):=\theta^i$ for all $x\in \mathcal{U}_i^{[N]}$ and step function graphon $W^{[N]}$  corresponding to  $P^{[N]}$.}
\end{theorem}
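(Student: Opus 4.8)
The plan is to verify both implications by a single direct computation showing that, for a step function strategy profile, the graphon local aggregate induced by the step function graphon $W^{[N]}$ collapses to the finite-network local aggregate, after which the equivalence follows by merely unwinding the two equilibrium definitions.

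First I would establish the key identity. Let $s\in\R^{Nn}$ with blocks $s^i\in\R^n$, and let $s(\cdot)\in\lebn$ denote the associated step function, $s(x)=s^i$ for $x\in\mathcal{U}_i^{[N]}$. Fix any $x\in\mathcal{U}_i^{[N]}$. Using that $W^{[N]}(x,y)=P^{[N]}_{ij}$ for $y\in\mathcal{U}_j^{[N]}$ by \eqref{Wcor} and that $|\mathcal{U}_j^{[N]}|=\tfrac1N$,
\[ z(x\mid s)=\int_0^1 W^{[N]}(x,y)\,s(y)\,dy=\sum_{j=1}^N\int_{\mathcal{U}_j^{[N]}}P^{[N]}_{ij}s^j\,dy=\frac1N\sum_{j=1}^N P^{[N]}_{ij}s^j=z^i(s). \]
In particular $x\mapsto z(x\mid s)$ is itself a step function on $\mathcal{U}^{[N]}$, equal to the constant $z^i(s)$ on the entire cell $\mathcal{U}_i^{[N]}$, so there is no almost-everywhere ambiguity. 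Moreover $s(\cdot)$ is bounded, hence square integrable, so it is an admissible profile for the graphon game, and $s(x)\in\mathtt{S}^{[N]}(x)$ for every $x\in\mathcal{U}_i^{[N]}$ if and only if $s^i\in\mathcal{S}^i$.

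Given this identity the equivalence follows by unwinding the definitions. ($\Rightarrow$) If $\bar s_{[N]}$ is a Nash equilibrium of $\mathcal{G}^{[N]}(\{\mathcal{S}^i\}_{i=1}^N,\J,P^{[N]})$, then $\bar s_{[N]}^i\in\mathcal{S}^i$ and $\J(\bar s_{[N]}^i,z^i(\bar s_{[N]}))\ge\J(\tilde s,z^i(\bar s_{[N]}))$ for all $\tilde s\in\mathcal{S}^i$, $i\in\{1,\dots,N\}$. Fix $x\in[0,1]$ and let $i$ be the index with $x\in\mathcal{U}_i^{[N]}$. By the identity, $\bar z(x):=z(x\mid\bar s_{[N]}(\cdot))=z^i(\bar s_{[N]})$ and $\bar s_{[N]}(x)=\bar s_{[N]}^i$, so $\bar s_{[N]}(x)\in\mathtt{S}^{[N]}(x)$ and $\J(\bar s_{[N]}(x),\bar z(x))\ge\J(\tilde s,\bar z(x))$ for all $\tilde s\in\mathtt{S}^{[N]}(x)=\mathcal{S}^i$; since $x$ was arbitrary, $\bar s_{[N]}(\cdot)$ meets Definition~\ref{nash} for $\mathcal{G}(\mathtt{S}^{[N]},\J,W^{[N]})$. ($\Leftarrow$) Conversely, specializing the graphon equilibrium conditions for $\bar s_{[N]}(\cdot)$ to an arbitrary $x\in\mathcal{U}_i^{[N]}$ and invoking the identity recovers precisely the network equilibrium conditions at index $i$; as this holds for every $i$, $\bar s_{[N]}$ is a Nash equilibrium of $\mathcal{G}^{[N]}$.

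The argument is essentially bookkeeping and I expect no genuine obstacle; the one point requiring care is the normalization, namely that the factor $|\mathcal{U}_j^{[N]}|=1/N$ coming from integration over a partition cell matches exactly the $1/N$ in $z^i(s)=\tfrac1N\sum_j P^{[N]}_{ij}s^j$, so the two local aggregates coincide on the nose rather than up to a scalar. A secondary point worth making explicit is that, since $W^{[N]}$ and $\bar s_{[N]}(\cdot)$ are honest step functions, the induced local aggregate is defined for every $x\in[0,1]$, so the pointwise ``for all $x$'' clause in Definition~\ref{nash} is satisfied without any modification on a null set.
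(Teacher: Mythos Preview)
Your proof is correct and follows essentially the same approach as the paper: both establish the identity $z(x\mid s)=\tfrac1N\sum_j P^{[N]}_{ij}s^j$ for $x\in\mathcal{U}_i^{[N]}$ by splitting the integral over the partition cells, and then read off the equivalence of the two equilibrium definitions. Your version is slightly more explicit in treating the two directions separately and in not invoking uniqueness of the argmax, whereas the paper's proof phrases the $(\Leftarrow)$ direction via $\bar s_{[N]}(x)=\arg\max_{s\in\mathcal{S}^i}\J(s,\bar z^i)$, but this is a cosmetic difference.
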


\subsection{\textbf{Equilibria in sampled network games}}

Our  next result relates the Nash equilibria of sampled network games, as introduced in Section \ref{step2}, to the equilibrium of the corresponding graphon game.
 Specifically, in the following theorem, we derive a bound on the distance between such  equilibria that holds for any graphon satisfying Assumption \ref{cond} and the following additional regularity condition.\footnote{ A more general result that requires only Assumption \ref{cond} is given in \cite{parise2018graphon}. We here focus on Lipschitz graphons  to obtain simpler bounds.}
\begin{assumption}[Lipschitz continuity]\label{lipschitz}
\textit{There exists a constant $L>0$ and a sequence of non-overlapping intervals  $\mathcal{I}_k=[\omega_{k-1},\omega_k)$ defined by $0=\omega_0< \dots < \omega_{\Omega+1} =1$, for a (finite) $\Omega\in\mathbb{N}$ and $k\in\{1,\ldots,\Omega+1\}$, such that for any $k,l\in\{1,\ldots, \Omega+1\}$, any set $\mathcal{I}_{kl}=\mathcal{I}_k\times \mathcal{I}_l$ and pairs $(x,y)\in \mathcal{I}_{kl}$, $(x',y')\in \mathcal{I}_{kl}$ we have that }
\[|W(x,y)-W(x',y')|\leq L(|x-x'|+|y-y'|).\]
{\blue Moreover, 
$|\theta(x)-\theta(x')|\leq L|x-x'|$ for any $x,x'\in \mathcal{I}_k$ and if $\Omega>0$ there exists $\theta_\textup{max}$ such that $\|\theta(x)\|\le \theta_\textup{max}$ for all $x\in[0,1]$.
}
\end{assumption}

Assumption \ref{lipschitz} implies that the  graphon $W$ is piecewise Lipschitz (over the intervals $\mathcal{I}_k\times \mathcal{I}_l$) which  is a common assumption  in the context of graphon estimation, see e.g. \cite{airoldi2013stochastic}, {\blue and that the parameter function is piecewise Lipschitz (over the intervals $\mathcal{I}_k$)}. We note that both the minmax graphon  and any SBM graphon satisfy this assumption. 

{\blue Since the networks are sampled randomly from the graphon, our statements on convergence of equilibria of sampled network games to equilibria of the corresponding graphon game hold in probability.  One can choose the desired probability level, which we denote by $1-2\delta_N$ for a population of size $N$, by defining  an admissible confidence sequence  $\{\delta_N\}_{N=1}^\infty$ as follows.
\begin{definition}[Admissible confidence sequence]
A sequence $\{\delta_N\}_{N=1}^\infty$ is admissible if it is such that $\delta_N \le e^{-1}$ and $\frac{\log( N/\delta_N)}{N} \rightarrow 0$. 
\end{definition}

 \begin{remark}\label{rem:bound} {\blue In general we will be interested in sequences $\delta_N\rightarrow 0$ (so that the probability $1-2\delta_N$ converges to one for large $N$). Hence the requirement $\delta_N \le e^{-1}$  is without loss of generality.  On the other hand, we need to impose that $\delta_N$ does not converge to zero too fast, since we will use  matrix inequalities to bound the distance between a random matrix and its expectation by a quantity that depends on  $\frac{\log( N/\delta_N)}{N}$  and we want this bound to converge to zero for large populations. }
\textit{ To meet this second requirement, one can for example select constant confidence $\delta_N\equiv \delta\in(0,e^{-1})$ or polynomial confidence $\delta_N=\frac{1}{N^k}$ for any $k>0$, since  for $N$ large enough  $\frac{1}{N^k}\le e^{-1}$ and $\frac{\log( N/\delta_N)}{N}= \frac{\log( N^{k+1})}{N}=(k+1)\frac{\log N}{N} \rightarrow 0$.}
\end{remark}}

\begin{theorem}[Distance]\label{thm:dist}
\textit{Consider a graphon game $\mathcal{G}(\mathtt{S}, \J, {\blue \theta}, W)$ where each player has  homogeneous strategy set, i.e.,  $\mathtt{S}(x)=\mathcal{S}$ for all $x\in[0,1]$. Suppose that $\mathcal{G}$ satisfies Assumptions \ref{ass:cost}, \ref{ass:constraint}B), \ref{cond} and \ref{lipschitz}.  Let $\bar s$ be its unique Nash equilibrium  and fix  any {\blue admissible confidence} sequence.  Let $\bar s^{[N]}_{w/s}$ be an arbitrary step function equilibrium of the sampled network game $\mathcal{G}^{[N]}(\{\mathcal{S}\}_{i=1}^N, \J,{\blue \{{\theta}(t^i)\}_{i=1}^N}, P^{[N]}_{w/s})$, as introduced in  Section \ref{step2}.
Then with probability at least $1-2\delta_N$, for $N$ large enough,  it holds
\begin{align*}\|\bar s^{[N]}_{w/s}-\bar s\|_{L^2}  \le K\rho(N)
\end{align*}
for $ K=\frac{{\color{blue}\max\{\ell_\J,\ell_\theta\}}/\alpha_\J}{1-\ell_\J/\alpha_\J \lambda_{\textup{max}}(\mathbb{W})} $ {\blue and $\rho(N)\rightarrow 0$ as $N\rightarrow \infty$}.\footnote{\blue The exact formula for $\rho(N)$ is given in the proof of this statement in Appendix \ref{sec:conv_proof}.}  
Moreover, $\|\bar s^{[N]}_{w/s}-\bar s\|_{L^2} \rightarrow 0$  almost surely when $N\rightarrow \infty$.
}
\end{theorem}

 The proof of Theorem \ref{thm:dist} is given in Appendix \ref{appC} and consists of three main steps. First, by Theorem \ref{thm:oneNash}  one can  compare the  equilibrium of  the graphon game $\mathcal{G}(\mathcal{S}, \J, \theta,W)$ and the equilibria of any sampled network game $\mathcal{G}^{[N]}(\{\mathcal{S}\}_{i=1}^N, \J,{\blue \{{\theta}(t^i)\}_{i=1}^N}, P^{[N]}_{w/s})$ by equivalently comparing the equilibria of two graphon games, one over the original graphon $W$ and one over the step function graphon $W^{[N]}_{w/s}$ corresponding to $P^{[N]}_{w/s}$. Second, by Theorem \ref{thm:comp} the distance of the equilibria in these two graphon games can be upper bounded with a quantity that depends on  the distance of the corresponding graphon operators. Third, the distance of the graphon operators can be upper bounded as shown for example in \cite{Lovasz2012} (for generic graphons) and in \cite{graphons} (for graphons satisfying Assumption \ref{lipschitz}). Almost sure convergence can then be obtained by using the derived bounds with $\delta_N= \frac{1}{N^2}\rightarrow 0$ and  Borel-Cantelli lemma.

  In many practical contexts, it might also be of interest to quantify the distance between the equilibria of two   network games sampled from the same graphon. Such a result can be used to judge the robustness of the equilibrium outcome to stochastic variations in the realized links or in the number of players.  Theorem \ref{thm:dist} can be used to obtain such a bound by triangular inequality. 
  {\blue  Finally we note that Theorem \ref{thm:dist} bounds the distance of   the  equilibria of the sampled network game to  the graphon equilibrium in $\|\cdot\|_{L^2}$. This does not directly imply that  playing the  graphon equilibrium strategy in the sampled network game is an (approximate) Nash equilibrium: we show that this is the case under additional regularity assumptions in Lemma \ref{thm:eps} in online Appendix \ref{aux_all}.}

 {\color{blue}   \ssection{\textbf{Extensions}}\label{sec:ext}
\subsection{\textbf{Sparser networks}}\label{sec:sparse}

As noted in Remark \ref{order}, the sampling procedure given in Definition \ref{sample} generates dense networks, that is, networks where the number of neighbors per agent grows as $N$ (thus  implying that the number of edges grows roughly as the square of the number of nodes). In this subsection, we show that our theory can be  generalized to a class of sparser networks  for which  the number of  neighbors per agent  grows \textit{sublinearly} with $N$ so that
$\frac{\sqrt{\textup{\# edges}}}{\textup{\# nodes}} \rightarrow 0.$
To this end, we introduce a sparsity parameter $\kappa_N$ and consider the following (generalized) procedure to sample networks from a graphon, see e.g. \cite{borgs2019??}. 

\begin{definition}[Sampling procedure - generalized]\label{sampleG}
\textit{Given any graphon $W$, a sequence $\{\kappa_N\}_{N=1}^\infty$ with $0<\kappa_N\le 1$, and any desired number $N$ of nodes, uniformly and independently sample $N$ points $\{\type^i\}_{i=1}^N$ from $[0,1]$ and 
 define the \textit{$0$-$1$ adjacency matrix} $P_s^{[N]}$ as the adjacency matrix corresponding to a graph with $N$ nodes obtained  by  randomly connecting nodes $i,j \in[1,N]$ with Bernoulli probability   $$\kappa_N W(\type^i,\type^j).$$}
\end{definition}
\begin{remark}
Definition \ref{sample} is a special case of  Definition \ref{sampleG} obtained by setting $\kappa_N=1$. It is easy to see that the expected number of neighbors   in $P_s^{[N]}$ is of order $\kappa_N N$. Hence for these sampled networks
$\frac{\sqrt{\textup{\# edges}}}{\textup{\# nodes}}\approx \sqrt{\kappa_N}$ converges to zero if $\kappa_N\rightarrow 0$.
In the following, we will require that $\lim_{N\rightarrow \infty}\frac{\log(N)}{N\kappa_N}=0$. Hence  this generalized framework  allows the number of neighbors to grow sublinearly in $N$ but still requires a growth faster than $\log(N)$. This is a necessary condition for being able to use concentration inequalities guaranteeing accumulation in the neighbors aggregate.\end{remark}

The new  Definition \ref{sampleG} affects only how a sampled network is generated from the graphon but has no repercussions on the limit for infinite number of agents. In other words, the infinite population game is exactly the same graphon game described in Section \ref{graphon_games} and all the same theorems on existence, uniqueness and continuity continue to hold. Instead we need to modify the definition of local aggregate in a sampled network game to account for the fact that the number of neighbors may now be sublinear. In fact, if we were to use as aggregate the quantity
$$z^i(s)=\frac{1}{N} \sum_{j=1}^N [P^{[N]}_s]_{ij}s^j$$
as introduced in Section \ref{step2} then we may have that $z^i(s)\rightarrow 0$ as $N$ grows larger, thus leading to vanishing network effects. To overcome this issue, we need to scale the network effect $\sum_{j=1}^N [P^{[N]}_s]_{ij}s^j$  by  the expected order of neighbors which according to Definition \ref{sampleG} is  $\kappa_N N$ instead of $N$. Overall, we can define a sampled network game exactly as in Section~\ref{step2}, but using as aggregate
$$z^i_\kappa(s)=\frac{1}{\kappa_NN} \sum_{j=1}^N [P^{[N]}_s]_{ij}s^j.$$
 \begin{definition}[Sampled network game - generalized]\label{sng}
 Given a graphon $W$, a payoff function $U$, a set valued function $\mathtt{S}:[0,1]\rightarrow 2^{\R}$, a sparsity parameter $\kappa_N$  and a parameter function $\theta$, we define a sampled network game among $N$ agents of type $\{\type^i\}_{i=1}^N$  as $\mathcal{G}^{[N]}_\kappa(\{\mathtt{S}(\type^i)\}_{i=1}^N, \J, \{{\theta}(t^i)\}_{i=1}^N,  P_{s}^{[N]})$, where the types $\{\type^i\}_{i=1}^N$ are sampled uniformly and independently at random from $[0,1]$, $P_{s}^{[N]}$ is as in  Definition~\ref{sampleG} and each agent $i$ has payoff
  \begin{equation}\label{gameyG}
\J(s^i,\q^i_\kappa(s),  \theta^i).
\end{equation}  
 \end{definition}

 We next informally discuss how  our main convergence result can be extended to this sparser class of sampled networks. The formal statements and proofs can be found in the Appendix. First, following the same arguments as in Theorem  \ref{thm:oneNash} one can show that a vector $\bar s_{[N]}\in \R^{N}$ is a Nash equilibrium of a sampled network game  if and only if the  corresponding  step function equilibrium $\bar s_{[N]}(x)\in \leb$ is a  Nash equilibrium  of a graphon game  with  step function graphon $W_\kappa^{[N]}$  corresponding to  $\frac{1}{\kappa_N}P^{[N]}$.
 Using this fact, it can then be shown that the equivalent of Theorem  \ref{thm:dist} holds for sparse networks as long as the confidence sequence $\{\delta_N\}_{N=1}^\infty$ is such that $\frac{\log( N/\delta_N)}{N{  \kappa_N}} \rightarrow 0$ and the  rate $\rho(N)$ is modified to $\rho_\kappa(N)$ as detailed in the Appendix.

\subsection{\textbf{Average instead of aggregate}}
\label{avg}

In the results derived so far we defined the \textit{local aggregate} as 
$$z^i(s)=\frac{1}{N} \sum_{j=1}^N [P^{[N]}_s]_{ij}s^j,$$
that is, the sum of neighbors strategies normalized by the population size. While this model is of widespread use in both theoretical and empirical works, it is known that for some   applications, a more suitable model is that of \textit{local average}  obtained by normalizing the network effect by the agent's degree.\footnote{\blue See  \cite{patacchini2012juvenile} and \cite{ushchev2020social} for a discussion of the differences of local aggregate and local average models.} This  corresponds to the choice 
  $$z^i_d(s):=\frac{ \sum_{j=1}^N [P^{[N]}_s]_{ij}s^j}{\sum_{j=1}^N [P^{[N]}_s]_{ij}}.$$
  Our results can be  extended to this setting.  
  The first step is to define  the \textit{normalized local aggregate} for a continuum of agents as
  
  \begin{align*}
z_d(x\mid s):=\frac{\int_0^1 W(x,y)s(y)dy}{\int_0^1 W(x,y)dy}.
\end{align*}
For this quantity to be well defined, we  assume from here on that $\int_0^1 W(x,y) dy \ge d_\textup{min}>0$ for all $x\in[0,1]$.
This definition of local aggregate leads  to a graphon game as defined in Section \ref{graphon_games}, played over the \textit{normalized graphon}
    $$W_d(x,y):= \frac{ W(x,y)}{\int_0^1 W(x,y)dy}.$$

    As second step one can define the associated  \textit{normalized graphon operator} $\mathbb{W}_d$ as the   operator $\mathbb{W}_d:\leb\mapsto \leb$ given by

$$ f(x) \mapsto (\mathbb{W}_df)(x)=\frac{\int_0^1W(x,y)f(y)\mathrm{d}y}{\int_0^1W(x,y)\mathrm{d}y}.$$ Under the assumption that $\int_0^1 W(x,y) dy \ge d_\textup{min}>0$, we show in  online Appendix \ref{avgA} that all the results derived  in Section~\ref{graphon_games} about existence and uniqueness of the graphon equilibrium continue to hold.\footnote{\blue Since $W_d$ is not symmetric,   results need to be stated in terms of 
$\vertiii{\mathbb{W}_d}$ instead of $\lambda_{max}(\mathbb{W})$. Note however that the bound $\vertiii{\mathbb{W}_d}\le \frac{ \lambda_{\textup{max}}(\mathbb{W})}{ d_\textup{min}}$ holds (see online Appendix \ref{avgA}).} 
For example, uniqueness holds if
\[\frac{\ell_\J}{\alpha_\J}\cdot  \frac{ \lambda_{\textup{max}}(\mathbb{W})}{ d_\textup{min}}<1.\]
Continuity of the graphon equilibrium can again be shown similar to Theorem \ref{thm:comp}, with the key difference that the upper bound will depend on the distance of the normalized operators $\vertiii{\mathbb{W}_d-\tilde{\mathbb{W}}_d}$. Theorem \ref{thm:oneNash} holds unchanged, hence the key step to prove convergence of sampled network equilibria to graphon equilibria (Theorem \ref{thm:dist}) in this setting is to show that the distance between the normalized operator $\mathbb{W}_d$ and the normalized operator $\mathbb{W}^{[N]}_{sd}$ corresponding to the sampled network $P^{[N]}_s$ converges to zero with high probability. Again this can be obtained under the assumption that  $\int_0^1 W(x,y) dy \ge d_\textup{min}>0$ (a proof for Lipschitz continuous graphons is provided in   online Appendix \ref{avgA}).

  \subsection{\textbf{Directed networks}}\label{sec:dir}
So far we assumed that the graphon is a symmetric function and we thus generated undirected sampled networks. The results of Section \ref{graphon_games} on existence, uniqueness and continuity of the graphon equilibrium continue to hold even when the generating graphon is not symmetric, with the only caveat that the eigenvalues of the corresponding operator are not necessarily real hence one need to use $\vertiii{\mathbb{W}}$ instead of $\lambda_\textup{max}(W)$. Theorem \ref{thm:oneNash} holds unchanged. The only place where symmetry is used  in Theorem 5 is to prove that the matrix $P^{[N]}_s$ accumulates around its expectation $P^{[N]}_w$. To prove this fact we used a matrix concentration result from \cite{chung2011spectra} which holds  for symmetric matrices. However, a similar result can be obtained for the directed case as well (see Lemma \ref{lem:asy} in the online Appendix  \ref{sec:dirA}). Using such a result one can obtain convergence also for directed networks.

\ssection{\textbf{Theory of targeted interventions} }\label{sec:int}
 We consider a central planner (CP)  designing targeted interventions for regulating economic behavior over a network. Our goal is to use the graphon game approximation to design near optimal and computationally tractable interventions for sampled network games (under some regularity assumptions on the underlying graphon).

To this end, we build on  \cite{galeotti2017targeting} which considers linear quadratic network games with scalar  nonnegative strategies and  payoff  as introduced in Equation \eqref{costL}.   For simplicity we focus on games
with strategic complementarities  (i.e. with $\alpha>0$). We assume that the goal of the CP is to maximize the \textit{average social welfare} (defined as the average of the agents payoffs at equilibrium) through  interventions that directly modify the standalone marginal return for an arbitrary agent $i$ from $\thetaL^i$ to $\thetaL^i+\hat{\thetaL}^i$, leading to the modified payoff function
\begin{equation}\label{eq:cost_quadratic2}
\J(s^i,z^i, \thetaL^i+\hat{\thetaL}^i)=-\frac12 (s^i)^2+s^i[\alpha z^i+\thetaL^i+\hat{\thetaL}^i].
\end{equation}

We assume that the planner is subject to a budget constraint which penalizes  interventions in a convex form (to capture the fact that  interventions are increasingly costly), leading to
\[\sum_{i=1}^N (\hat{\thetaL}^i)^2\le CN.\]
Note that we allow  the budget to scale with the population size $N$ to model  the fact that networks with more agents are allocated a proportionally higher budget.
By using the  characterization of equilibrium in linear quadratic games, i.e. $\bar s^i=\alpha \bar z^i+\thetaL^i+\hat{\thetaL}^i$ with $\bar z^i=\frac1N\sum_j P^{[N]}_{ij} \bar s^j$, the  objective function of the central planner can be rewritten as 
\[T^{[N]}(\hat \thetaL):=  \frac1N \sum_{i=1}^N \J(\bar s^i, \bar z^i, \thetaL^i+\hat{\thetaL}^i)= \frac1N \sum_{i=1}^N \left(-\frac12 (\bar s^i)^2+ \bar s^i[\alpha \bar z^i+\thetaL^i+\hat{\thetaL}^i]\right)= \frac{1}{2N} \sum_{i=1}^N (\bar s^i)^2,\]
where $\hat \thetaL:=[\hat{\thetaL}^i]_{i=1}^N$.
This leads to the following optimization problem for the central planner

\begin{equation}\label{finite_opt}
\begin{aligned}
T^{[N]}_{\textup{opt}}:=\max_{\hat\thetaL^{[N]} \in\R^N}&\quad \frac{1}{2N}\|\bar s^{[N]}\|^2,\\
\textup{s.t.}&\quad \bar s^{[N]} =\textup{Nash equilibrium of } \mathcal{G}^{[N]}({\mathbb{R}_{\ge 0}}, \J,\thetaL^{[N]}+\hat \thetaL^{[N]}, P^{[N]}),\\
&\quad \textstyle \frac1N \|\hat {\thetaL}^{[N]}\|^2\le C,
\end{aligned}
\end{equation}
where we added the apex $^{[N]}$ to stress the dependence on the population size.

\subsection{\textbf{Graphon intervention}}

Problem \eqref{finite_opt} scales with the size of the network and  becomes computationally challenging  for networks with more than a few hundreds of agents. We next suggest an alternative approach for sampled network games (i.e. for cases when $P^{[N]}=P^{[N]}_{w/s}$ is a realization from an underlying graphon $W$ and $\thetaL^{[N]}_i=\thetaL(t^i)$) based on the following optimization problem in the graphon space

\begin{equation}\label{infinite_optt}
\begin{aligned}
\thetaL^*\in \arg\max_{\hat \thetaL \in \leb }&\quad \frac12 \|\bar s_{\hat \thetaL}\|_{L^2}^2,\\
\textup{s.t.}&\quad \bar s_{\hat \thetaL} =\textup{Nash equilibrium of } \mathcal{G}(\mathbb{R}_{\ge0}, \J,\thetaL+\hat \thetaL, W) ,\\
&\quad \| \hat \thetaL\|^2_{L^2} \le C.
\end{aligned}
\end{equation}
 In the next theorem we show that a near optimal intervention for the sampled network game can be obtained from the optimal solution $\thetaL^*$ of \eqref{infinite_optt} by allocating  to any sampled agent $i$ (of type $t^i$) an intervention proportional to $\thetaL^*(t^i)$, that is,
$$[\hat \thetaL^{[N]}_\textup{graphon}]_i= \frac{\thetaL^*(t^i)}{\eta^{[N]}},$$
where $\eta^{[N]}$ is a normalization to guarantee that the budget constraint is met with equality (i.e. $\frac1N\| \hat \thetaL^{[N]}_\textup{graphon}\|^2 = C$).

\begin{theorem}\label{thm:inter}
Consider a  network game $\mathcal{G}^{[N]}(\mathbb{R}_{\ge0}, \J, \{\thetaL(t^i)\}_{i=1}^N,P^{[N]}_{w/s})$ sampled from the graphon $W$ according to  the procedure given in Definition \ref{sample}. Suppose that   $\J$ is as in \eqref{eq:cost_quadratic2} with $0<\alpha<\frac{1}{\lambda_\textup{max}(\mathbb{W})}$, that Assumption~\ref{lipschitz} holds and that
 $\thetaL^*$ solution to \eqref{infinite_optt} is piecewise Lipschitz and bounded. For  any admissible confidence sequence  and $N$ large enough, with probability at least $1-2\delta_N$, 
$$T^{[N]}(\hat \thetaL^{[N]}_\textup{graphon})\ge T^{[N]}_{\textup{opt}}-   \rho_T(N),$$
where  $\rho_T(N)\rightarrow 0$ as $N\rightarrow\infty$.\footnote{\blue The explicit formula for $\rho_T(N)$ is given in the Appendix.}
\end{theorem}

Such a graphon intervention offers an advantage if Problem \eqref{infinite_optt} can be solved efficiently. In the next section we show that this is the case for a large class of graphons of practical interest.

\subsection{\textbf{Tractability of Problem \eqref{infinite_optt}}}\label{sec:tractability}

In this section we restrict our attention to graphons in which only a finite number $R$ of eigenvalues  $\{\lambda_r\}_{r=1}^R$  are different from zero (i.e., finite-rank graphons). For this  class of graphons we show that a solution to Problem \eqref{infinite_optt} can be obtained   by solving an equivalent problem in $R+1$ variables. This is a clear advantage with respect to solving Problem \eqref{finite_opt} which instead requires $N \gg R$ variables.\footnote{\blue It can be shown that both  Problem \eqref{infinite_optt} and Problem \eqref{finite_opt}  can be reformulated as an  SDP with two variables and an inequality constraint involving a matrix of dimension $R+2$ and $N+1$ respectively, see \cite[Appendix B.1]{boyd2004convex}  and \cite[Theorem 1]{galeotti2017targeting}.}  

\begin{lemma}\label{solvability}
Suppose that $0<\alpha<\frac{1}{\lambda_\textup{max}(\mathbb{W})}$ and $\mathbb{W}$ has rank $R<\infty$, let $\mathcal{K}$ be the kernel of $\mathbb{W}$ and $\{\psi_r\}_{r=1}^R$ be an orthonormal basis of $\mathcal{K}^\perp$  composed of eigenfunctions of $\mathbb{W}$ corresponding to the eigenvalues $\{\lambda_r\}_{r=1}^R$. Set $b_r = \langle \thetaL, \psi_r \rangle$ for all $r=1,\ldots,R$ and let $b_0 \psi_0$ be the projection of $\thetaL$ in $\mathcal{K}$, with $\|\psi_0\|_{L^2}=1$. Set  $\lambda_0=0$. A maximizer of  \eqref{infinite_optt}  can be computed as $\thetaL^*=   \sum_{r=0}^R \hat b_r^* \psi_r$ where $\{\hat b_r^*\}_{r=0}^R$ solves 
\begin{equation}\label{infinite_opttt}
\begin{aligned}
 \max_{[\hat b_0, \ldots, \hat b_R]  }&\quad \frac12    \sum_{r=0}^R \frac{(b_r+\hat b_r)^2}{(1- \alpha \lambda_r)^2} ,\\
\textup{s.t.} &\quad \sum_{r=0}^R \hat b_r^2 \le C.
\end{aligned}
\end{equation}

\end{lemma}

It is important to remark that the class of finite rank graphons is quite rich, we provide some examples next.

\begin{example}[Community structure]\label{ex:SBM}

Consider a generalization of the community model  with $K$ communities introduced in  Example 1, where we allow agents across different communities to interact with  different probabilities.  Specifically, let $Q\in[0,1]^{K\times K}$ be a symmetric matrix whose element in position $(k,l)$ denotes the probability that agents of community $k$ and $l$ are interacting (the graphon in Example~1 corresponds to the special case $Q=[g_{\textup{in}} I_K+ g_{\textup{out}}(\mathbbm{1}_K\mathbbm{1}_K^\top -I_K)]$). Let $\mathcal{C}_k$ be the subset of $[0,1]$ associated with community $k$, with $|\mathcal{C}_k|=\pi_k$ and $\sum_k \pi_k=1$, and construct the SBM graphon
\[W_\textup{SBM}(x,y)=Q_{ij} \textup{ for all } x\in \mathcal{C}_i, y\in \mathcal{C}_j.\]
The SBM graphon is finite rank with rank equal to the number of communities. In fact as shown for example in \cite{graphons} 
 eigenvalues and eigenfunctions of the SBM graphon operator can be easily computed by considering the auxiliary matrix
\begin{equation}\label{eq:E}E:=Q\Pi \in\mathbb{R}^{K\times K},\end{equation}
where $\Pi$ is a diagonal matrix whose diagonal elements correspond to the community sizes, that is $\Pi_{kk}=\pi_k$ for all $k$.
 {Lemma \ref{eigSBM} provided in the online Appendix   shows that $\mathbb{W}_{\textup{SBM}}$ and $E$ have the same  eigenvalues and the eigenfunctions of $\mathbb{W}_{\textup{SBM}}$ are piecewise constant over the partition $\{\mathcal{C}_k\}_{k=1}^K$, with constant  value in each community $\mathcal{C}_k$  given by the $k$-th element of the corresponding eigenvector of $E$.\\
If  we assume that all agents within the same community have the same stadalone marginal return  (i.e.,  $\thetaL(x)=\thetaL^{com}_k$ for all $x\in \mathcal{C}_k$) then it is immediate to see that at the graphon equilibrium each agent belonging to the same community has the same strategy $\bar s_k^{com}$ and the vector of such strategies $\bar s^{com}\in\mathbb{R}^K$ satisfies the relation
\begin{equation}\label{eq:com_eq}
 \bar s^{com} = (I_K - \alpha E)^{-1} \thetaL^{com}.
\end{equation}
Turning to the optimal intervention problem \eqref{infinite_optt}, note that since $\thetaL(x)=\thetaL_k^{com}$ for all $x\in \mathcal{C}_k$, $\thetaL$ can be written as a linear combination of the eigenfunctions of $\mathbb{W}$ (i.e., $b_0$ as defined in Lemma~\ref{solvability} is zero). One can then conclude that  $\hat \thetaL_0^*=0$ and that $\thetaL^*=\sum_{r=1}^R \hat \thetaL^*_r \psi_r$ is constant within each community. Hence  in the limit of large number of agents it is sufficient to intervene at the level of  communities instead of individuals (see Section \ref{sec:case_study} for a detailed case study illustrating this point).
 }
\end{example}

The example above considered a graphon with a finite number of types (in that case the number of communities). We next show that a graphon can have finite rank even when there is a continuum of types.
\begin{example}[Rank one graphon]
 Consider the graphon
$$W_\gamma(x,y)=\frac{1}{(xy)^\gamma} \quad \textup{with }  \gamma\in(0,1/2) 
\textup{ and with associated degree} \quad d(x)=\frac{1}{1-\gamma} x^{-\gamma}.$$
This graphon has rank one with normalized eigenfunction $\psi_\gamma(x)=\sqrt{1-2\gamma}x^{-\gamma}$ and eigenvalue $\lambda=\frac{1}{(1-2\gamma)}$. In fact
$$ (\mathbb{W}_\gamma \psi_\gamma)(x)= \int_0^1 W_\gamma(x,y) \psi_\gamma(y)dy = \sqrt{1-2\gamma} \int_0^1 \frac{1}{(xy)^\gamma} \frac{1}{y^{\gamma}}dy =\frac{1}{(1-2\gamma)} \psi_\gamma(x)$$
and for any $f\in\leb$
$$\mathbb{W}f(x)=\int_0^1 \frac{1}{(xy)^\gamma} f(y)dy = \frac{1}{x^\gamma} \int_0^1 \frac{1}{y^\gamma} f(y)dy = \lambda \langle f, \psi_\gamma \rangle \psi_\gamma(x).$$
Turing to the optimal intervention problem \eqref{infinite_optt}, let $b_1=\langle \thetaL, \psi_\gamma\rangle$ and $b_0=\|\thetaL-b_1\psi_\gamma\|_{L^2}$. Then problem \eqref{infinite_optt}  can be solved by solving 

\begin{equation}
\begin{aligned}
\left[\hat b_0,  \hat b_1\right]^*:= \arg\max_{[\hat b_0,  \hat b_1]  }&\quad \frac12   \left[ (b_0+\hat b_0)^2+ \frac{(b_1+\hat b_1)^2}{(1- \alpha \lambda)^2} \right],\\
\textup{s.t.} &\quad \hat b_0^2+\hat b_1^2 \le C,
\end{aligned}
\end{equation}
and then setting $\hat \thetaL(x)= \hat b^*_0 \frac{(\thetaL(x)-b_1\psi_\gamma(x))}{b_0}+ \hat b^*_1\psi_\gamma(x)$.

\end{example}

\section{ \textbf{An illustrative case study}: Interventions in rural villages }
\label{sec:case_study}

In this section we illustrate the differences (in terms of information, computation and optimality) between the intervention procedure described in Section \ref{sec:int} and  a more direct approach based on detailed network information. To this end, we  construct a simulated dataset of $80$ networks (which for example could model  interactions  among the inhabitants of $80$ different rural villages) and we  assume that agents within each network make  strategic decision subject to network externalities (e.g., each individual in a village decides his level of investment  in a microfinance program). Note that we assume the networks to be isolated (e.g., the villages are far apart so that there is no interaction of individuals across villages), hence these can be seen as $80$ independent  network games. Each agent $i$ has payoff
$$\J(s^i,\q^i_\kappa(s),  \thetaL^i)=-\frac12 (s^i)^2+ s^i[\alpha z^i_\kappa(s)+\thetaL^i],$$
where the parameter $\alpha$ is the same for all agents,  $\thetaL^i$ is  agent specific and $\kappa$ is a sparsity parameter as introduced in Section \ref{sec:sparse}. 
We also assume that agents in each network are equally likely to belong to one of $4$ different communities (e.g., $4$ different caste in the case of rural villages) and that the probability of agents interacting depends on  community identity according to the community structure illustrated in Figure \ref{village}. Finally, we  assume that  agents belonging to the same community have the same parameter $\thetaL^i$, which we denote by $\thetaL^{com}_h$ for community $h=1,\ldots,4.$

 \begin{figure}[H]
 \begin{center}
 $\vcenter{\hbox{\includegraphics[width=0.18\textwidth]{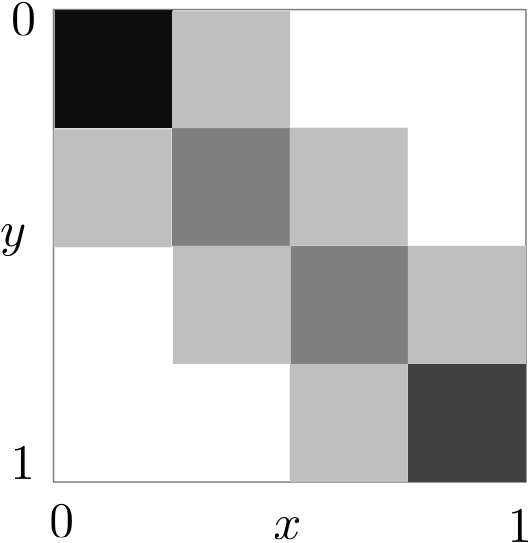}}}$ \qquad
 \begin{minipage}{0.6\textwidth}
  \begin{center}
$\vcenter{\quad\hbox{\includegraphics[width=0.9\textwidth]{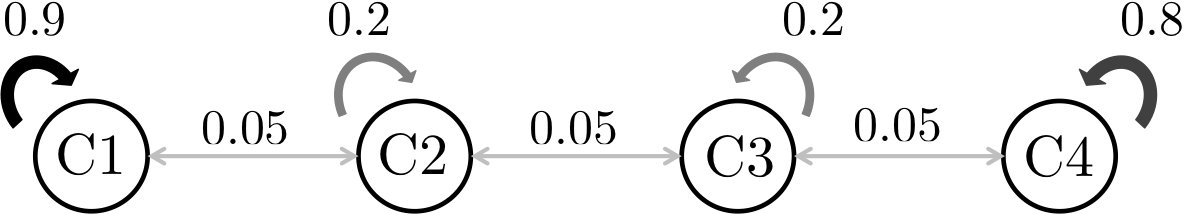}}}$\\[0.2cm]
\end{center}
\end{minipage}
\end{center}
\caption{\blue The SBM graphon used to generate sampled networks    in the case study of Section \ref{sec:case_study}. The  figure on the right is a visualization of the  interactions among the 4 communities (the color  width of the arrows is larger the stronger the interaction), the figure on the left shows the corresponding  graphon (with white representing  $0$ and black representing $1$). Note that community $1$ and $4$ are tightly intra-connected, while $2$ and $3$ are less intra-connected. Communities are only slightly inter-connected with neighboring communities. For sampling,  we used a sparsity parameter $\kappa_N=40/N^{0.8}$ (see Section \ref{sec:sparse}).}
\label{village}
\end{figure}

Our main interest is to understand how a CP can allocate a limited budget in each sampled network (village) to maximize agents welfare,  by designing interventions as discussed in Section \ref{sec:int}.

\subsection{\textbf{Data acquisition}}
Our aim  is to simulate the procedure that the CP would have to follow  in a  field experiment. To this end, from here on we are going to assume that the central planner \textbf{does not} have access to the information detailed above, but instead needs to rely on surveys to reconstruct agents attributes and interactions. 
Regarding the latter, we are going to assume that the CP can use two different types of relational surveys:

\begin{itemize}
\item \textbf{Detailed Relational Data (DRD):} The CP is able to ask to \textit{each agent} in each network (village) \textit{the exact identity of each of his neighbors}. 
\item \textbf{Aggregated Relational Data (ARD):} The CP is able to ask to a \textit{subset of the agents} in each network (village) \textit{how many of his neighbors belong to each community}. 
\end{itemize}

As argued in \cite{breza2017using} aggregated relational data of the second type is much easier to obtain in the field than the information required by the detailed relational survey. Furthermore, the aggregated information required by the second type of survey can allow data acquisition in settings where detailed information is not possible because of proprietary data or privacy concerns (e.g. in the case of financial intermediaries or high-risk populations). 

While relational data is typically hard to obtain,  it is instead  common in empirical studies to collect detailed agent-level information  through an exhaustive census, see for example \cite{banerjee2013diffusion}.
For this reason we are  going to assume that, in addition to one of the relational surveys above, the CP can perform a census of all the agents in each network asking about agent-level information such as:  i)  agent type (e.g., the community to which the agent belongs), ii)   equilibrium strategy before the intervention (e.g., the current level of investment in the microfinance program) and iii) the status quo standalone marginal return $\thetaL^i$.

Note that we do not assume that the CP has any information about the strength  of peer effects, $\alpha$, or about the parameters of the network formation model. Indeed these parameters would not be available in field work and therefore need to be estimated from the relational survey and census data described above.

\subsection{\textbf{Intervention design based on detailed relational data}}

If the CP has access to the information contained in the census and in the \textit{detailed relational data} then he can reconstruct for each village: i) the exact network of interactions among the agents $P^{[N]}_s$, ii) the vector of parameters $\thetaL^{[N]}\in\mathbb{R}^N$ and iii) the  Nash equilibrium $\bar s^{[N]}$ before the intervention. He can then use this information to infer the unknown normalized network parameter $\alpha_\kappa:=\frac{\alpha}{\kappa_N}$ by performing least square regression given the relation
$$   \bar s^{[N]}=\left(I_N- \frac{\alpha_\kappa}{N} P^{[N]}_s\right)^{-1} \thetaL^{[N]} \textup{ or equivalently }  Y=\alpha_\kappa X \textup{ for } Y:=\bar s^{[N]} -\thetaL^{[N]}, X:= \frac{P^{[N]}_s}{N}\bar s^{[N]}, $$
leading to 
$$\hat \alpha_\kappa^{DRD} := (X^\top X)^{-1} X^\top Y,$$
where the superscript $DRD$ denotes the use of the detailed relational data.\footnote{{\color{blue} The regressor $X$ is an endogenous variable (as it depends on $\bar s^{[N]}$) hence one should use regression based on instrumental variables as detailed in  \cite{bramoulle2009identification}. This is not needed in our case study because we assumed that there is no noise in the equation $Y=\alpha_\kappa X$. In this case ordinary least square can be used and  produces the exact parameter  $ \alpha_\kappa$ since
$\hat \alpha_\kappa^{DRD} := (X^\top X)^{-1} X^\top Y= (X^\top X)^{-1} X^\top (\alpha_\kappa X)=\alpha_\kappa (X^\top X)^{-1} (X^\top X)=\alpha_\kappa. $
In this case study we assumed no noise because our objective is  to compare the performance of interventions based on detailed relational data (DRD) versus aggregate relational data (ARD). Assuming no noise corresponds to the best case scenario for the interventions based on detailed relational data and allows us to focus only on the comparison of interest. Finally, note that it is not possible to estimate $\alpha$ and $\kappa_N$ distinctly, but this is  not needed. 
} }
In our simulated data we assumed no noise hence this procedure allows the CP to recover $\alpha_\kappa$ exactly (i.e., $\hat \alpha^{DRD}_\kappa=\alpha_\kappa$).
Using $P^{[N]}_s, \thetaL^{[N]}$ and the estimated parameter $\hat \alpha_\kappa^{DRD}$ the CP can either solve exactly the optimal intervention problem in \eqref{finite_opt} (if this is computationally feasible) or otherwise he can use the  heuristic suggested in \cite{galeotti2017targeting} and allocate the budget according to the dominant eigenvector of  $P^{[N]}_s$. We refer to these two interventions as \textit{network optimal} and \textit{network heuristic}, respectively.\footnote{\blue The CP could also employ an in-between strategy where budget is allocated according to the $r$ dominant eigenvectors for some $r>1$. This strategy  still requires the detailed relation dataset and will have performances that are in between the \textit{network optimal} and \textit{network heuristic}. }

\subsection{\textbf{Intervention design based on aggregated relational data}}\label{sec:ARD}

Suppose instead that the CP cannot access the detailed relational survey, but instead need to rely only on the \textit{aggregated relational data}.  We assume that the CP knows that the networks are drawn from a stochastic block model with 4 communities and use the ARD to estimate the parameters of the SBM model and the peer effect parameter $\alpha_\kappa$.
To this end, for each village the CP can estimate:
\begin{enumerate}
\item The exact proportion of agents in each community (from the census data) as
$$ \pi_h :=\frac{N_h}{N}:=\frac{\textup{number of agents in community } h}{\textup{total number of agents in the census}} .$$
Let $\Pi$ be a diagonal matrix with  $\pi_h$  in position $(h,h)$.
\item The maximum likelihood estimator of the interaction probability of agents of  community $h$ and $h'$ (from the subset of agents interviewed in the aggregated survey) as
$$\hat q^{{ARD}}_{h,h'} :=\frac{ S_{h,h'} + S_{h',h} }{ S_h N_{h'} + S_{h'} N_h },$$
where
$S_h$ is the total number of agents surveyed from community $h$ in the ARD and
$S_{h,h'}$ is the total number of neighbors that they reported having in community $h'$. The superscript \textit{ARD} denotes the use of aggregated data. Let $\hat Q^{{ARD}}_\kappa:=[\hat q^{{ARD}}_{h,h'}]$ be the estimated interaction matrix (see Example \ref{ex:SBM} in Section \ref{sec:tractability}) and $\hat E^{{ARD}}_\kappa:= \hat Q^{{ARD}}_\kappa \Pi.$\footnote{\blue Technically, since we assume sparse networks these are the matrices $Q$ and $E$ as described in Example \ref{ex:SBM} \textit{multiplied by $\kappa_N$}, this is not a problem because we can only estimate $\alpha$ \textit{divided by $\kappa_N$}, hence the (unknown) $\kappa_N$ factor cancels out, that is $\alpha_\kappa E_\kappa=\alpha E$.} 
\item The average strategy of agents in community $h$ before the intervention as 
$$\hat s_h^{{ARD}} := \frac{\textup{sum of effort of agents surveyed from community } h }{N_h}.$$
We show in Appendix  \ref{csA} that for $N\rightarrow \infty$, $\hat s_h^{{ARD}}$  converges   almost surely to the strategy $\bar s^{com}_h$ played by agents of community $h$ in the  graphon game (recall that since the graphon in this case is a SBM, each agent in community $h$ has the same graphon equilibrium strategy, see \eqref{eq:com_eq}).
\item The parameter $\alpha_\kappa$  by 
$$\hat \alpha^{ARD}_\kappa = (\hat X^\top \hat X)^{-1} \hat X^\top \hat Y,$$
with $\hat X:= \hat E_\kappa^{{ARD}} \hat s^{{ARD}}$ and $\hat Y:=\hat s^{{ARD}} - \thetaL^{{com}}$,
where $ \thetaL^{{com}}$ is the vector of marginal return per community (which can be recovered exactly from the census data). We show in Appendix \ref{csA} that $\hat \alpha^{ARD}_\kappa\rightarrow \alpha_\kappa$   almost surely for $N\rightarrow\infty$.
\end{enumerate}}

{\color{blue}
Based on $\thetaL^{{com}}, \hat \alpha^{ARD}$ and $\hat E^{{ARD}}$ the CP can solve Problem \eqref{infinite_optt} (by equivalently solving Problem \eqref{infinite_opttt}) and obtain the optimal graphon intervention. Note that for this case study
Problem  \eqref{infinite_opttt}  is a problem of dimension $4$ and 
outputs the  intervention that the CP should apply in each community. The CP then knows which intervention to apply to each agent because he collected information about agent's type in the census. We refer to this intervention as \textit{graphon optimal}.

\begin{figure}
\includegraphics[width=0.5\textwidth]{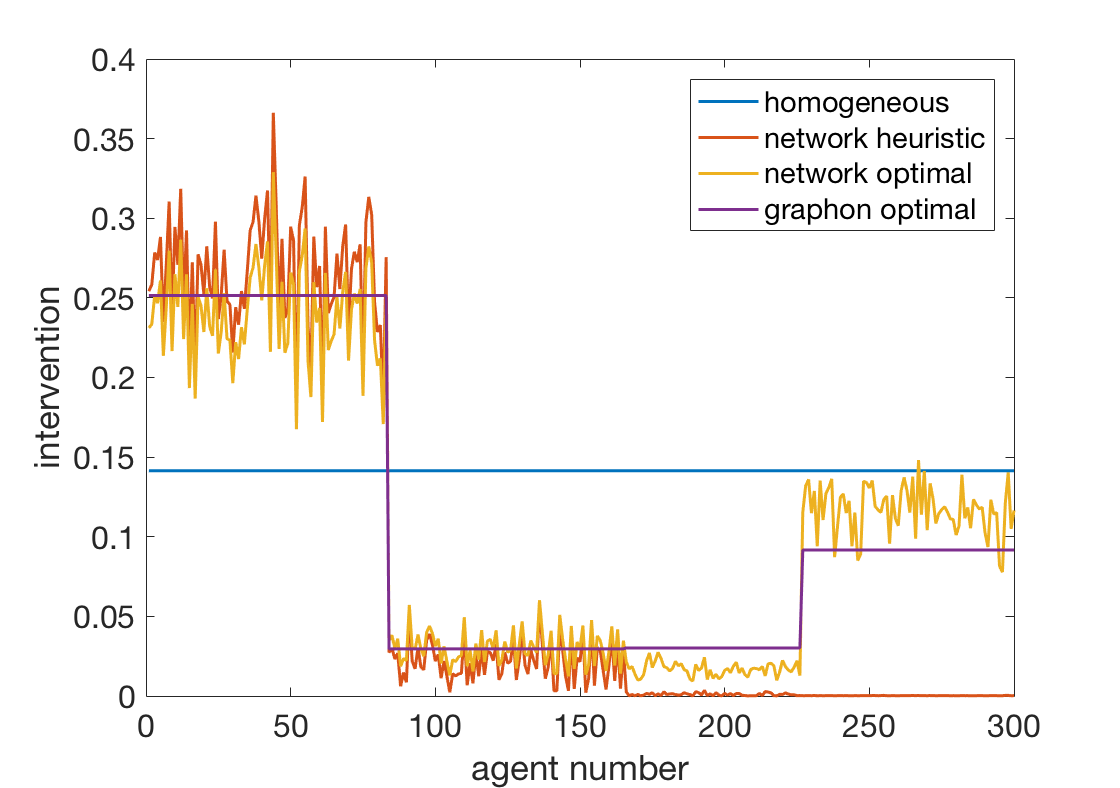}
\includegraphics[width=0.5\textwidth]{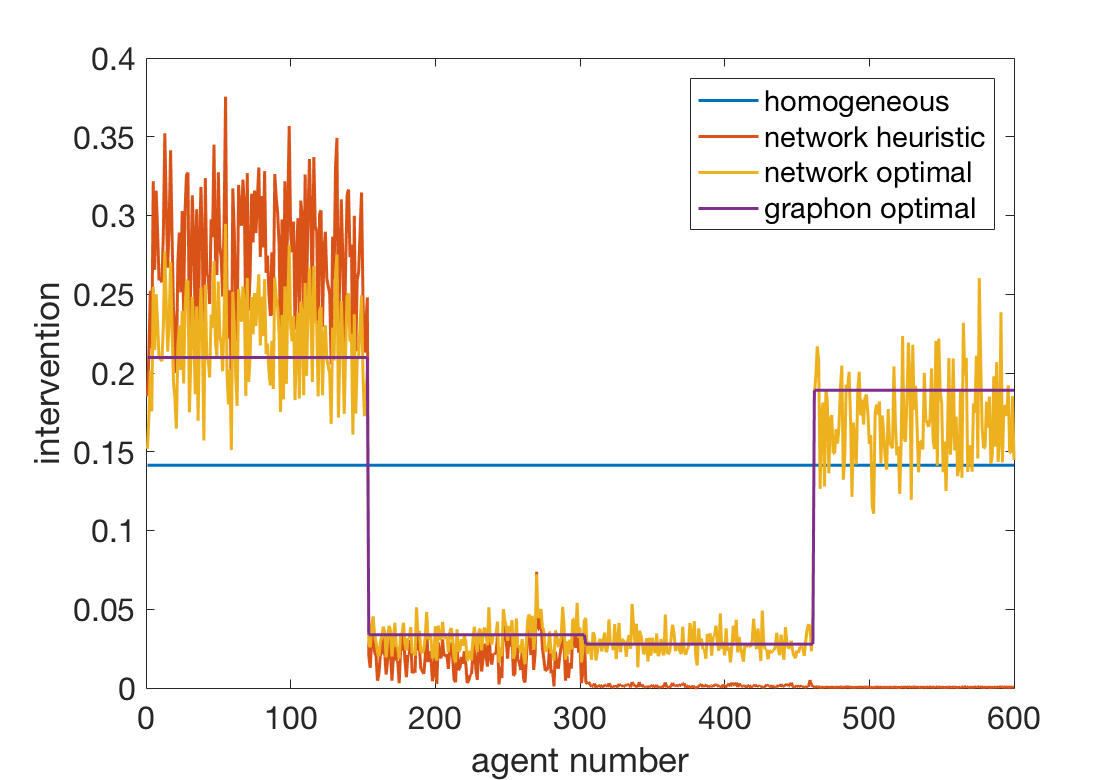}
\caption{Interventions for a sampled network of size $N=300$ (left) and  $N=600$ (right).}
\label{fig:inter2}
\end{figure}

\subsection{\textbf{Comparison}}
Figure \ref{fig:inter2} illustrates the  \textit{network optimal}, \textit{network heuristic} and \textit{graphon optimal} interventions for two sampled networks of size $N=300$ and $N=600$. A first observation is that while the first two interventions are tailored to the specific network realization (and thus prescribe a different intervention to each agent), the graphon intervention prescribes the same intervention to each agent belonging to the same community. 
We finally compare the performances of three type of interventions  in terms of 
optimality, information and computation. 
\begin{enumerate}
\item \textit{Information:} as discussed above the \textit{network optimal} and \textit{network heuristic} interventions require  detailed relational data (DRD), while the \textit{graphon optimal} intervention can be computed based solely on aggregated relational data (ARD);
\item \textit{Computation:} the \textit{network optimal} intervention requires the solution of Problem \eqref{finite_opt} whose complexity is polynomial in  $N$ (we could not find a solution  for $N\ge 1000$),  the \textit{network heuristic} intervention requires the computation of the dominant eigenvector of $P^{[N]}_s\in\mathbb{R}^{N\times N}$ which is again polynomial in $N$, the \textit{graphon optimal} intervention requires the solution of Problem  \eqref{infinite_opttt} which is polynomial in $K=4$.
\item \textit{Optimality:} the following table illustrates the percentage of welfare improvement under the three different policies (averaged over the $80$ networks) with respect to the homogeneous intervention that splits the budget equally for all the agents. Different columns represent repetitions of the same case study for networks with increasing number of agents. 

\begin{table}[H]
\blue
\begin{center}
\begin{tabular}{|c|c|c|c|} \hline
& \textbf{\textit{Case study 1}} &  \textbf{\textit{Case study 2}}  &  \textbf{\textit{Case study 3}}  \\
& \textbf{{($\boldsymbol{N=300}$)}} &  \textbf{{($\boldsymbol{N=600}$)}}  &  \textbf{{($\boldsymbol{N=1200}$)}}  \\ \hline
\textit{Avg. Impr. Network Optimal} & 25.6\% ($\pm$6.6) &23.0\% ($\pm$3.6)&-\\
 &\color{gray}\small{[15.4;61.5]\%}  &\color{gray}\small{[19.0;37.8]\%}&\color{gray} \small{[-]}\\[0.2cm]
\textit{Avg. Impr. Graphon Optimal } & 22.8\% ($\pm$7.1) &20.8\% ($\pm$3.9)&20.6\% ($\pm$2.3)\\
 &\color{gray}\small{[5.6;59.7]\%}  &\color{gray}\small{[14.8;35.7]\%}&\color{gray} \small{[17.6;31.5]\%}\\[0.2cm]
\textit{Avg. Impr. Network Heuristic} & 12.6\% ($\pm$15.7)&4.5\% ($\pm$12.3)&2.8\% ($\pm$10.5)\\
 &\color{gray}\small{[-12.6;60.73]\%}  &\color{gray}\small{[-12.2;33.9]\%}&\color{gray} \small{[-12.4;27.3]\%}\\[0.2cm] \hline
\textit{Avg. Degree Cumulative} &18.5  &21.6 &24.9\\ 
\textit{Avg. Degree Community 1}&27.9   &34.1 &39.2 \\
\textit{Avg. Degree Community 2} &9.5& 10.7  &12.3\\
\textit{Avg. Degree Community 3} &9.2& 10.7  &12.3\\
\textit{Avg. Degree Community 4}& 27.0&  30.5 &35.2\\ \hline 
\end{tabular}
\end{center}
\caption{\blue Comparison of \textit{network optimal} (NO), \textit{graphon optimal} (GO)  and \textit{network heuristic} (NH) intervention for the case study described in Section \ref{sec:case_study}. The average improvement is computed as $Avg. Impr= \frac{1}{80} \sum_{g=1}^{80} \left(\frac{\textup{welfare according to NO/NH/GO intervention in network } g}{\textup{welfare according to homogeneous intervention in network } g}-1\right)$, one standard deviation is reported in round brackets, minimum and maximum are reported in square brackets. We also show the average degree per community and in the entire graph to illustrate that the \textit{graphon optimal} (GO) intervention is a good approximation in a range of degrees that is realistic (and does not increase too quickly in $N$ thanks to the sparsity parameter $\kappa_N$). In all the case studies, we used  $C=0.02N$, $\alpha= 2.65$ and $\hat \thetaL^{com}=[0.1,0.1,0.1,0.25]$. For the ARD, we assume that the aggregated relational survey is completed by \textbf
{$\mathbf{10\%}$ of the agents in each network}.}
\label{default}
\end{table}%

\end{enumerate}
}

\ssection{\textbf{Conclusion}}
\label{conc}

 In this work we  introduced the novel class  of graphon games for modeling  strategic behavior in  infinite populations while accounting for local heterogeneity. We then showed that graphon games can be used to approximate strategic behavior in large but finite sampled network games by interpreting the graphon as a stochastic network formation process. This statistical interpretation of network games allows for the design of simple intervention policies that do not require detailed information about the network realization.  

  We believe that the initial investigation of graphons as a tool to model strategic behavior presented in this work  can  be extended in  a  number of different  directions. 
{\color{blue} First, in this paper   to guarantee uniqueness of the Nash equilibrium we used Assumption \ref{cond}, which is formulated in terms of the maximum eigenvalue of the graphon. Previous works showed that alternative conditions for uniqueness can formulated in finite network games by using conditions involving the maximum degree or the minimum eigenvalue (for games with strategic substitutes). Extending those results to graphon games is an interesting open direction as well as extending our analysis beyond uniqueness. Second, as an application of our framework we showed how the graphon  approach allows the computation of almost optimal targeted  interventions, overcoming the  computational intractability of approaches based on full network information. We believe that our results can be generalized to other type of interventions, such as selecting the key player as introduced in \cite{ballester2006s}. Third, we here defined graphon games as nonatomic games. It might be interesting to extend this framework to allow for a small number of atomic (major) agents that influence a mass of nonatomic (minor) agents interacting over a graphon, similarly to previous results derived for mean field games in \cite{nourian2013}.  Finally, in our case study and in  online Appendix \ref{sec:id}, we hinted at how the graphon game framework could be used to estimate peer effects when information about the realized network is not available. We believe that extending these results would be of practical interest.}

 \appendix
 
\begin{center}\textbf{Appendices}\end{center}
 \renewcommand{\thesection}{{\Alph{section}}}

{\textbf{Summary of Notation:}} We denote by $\R^n$ the  space of $n$-dimensional vectors, by $\leb$ the  space of square integrable functions defined on $[0,1]$ and by  $\lebn$ the space of square integrable vector valued functions defined on $[0,1]$. The norms in these spaces are denoted  by $\|v \|:=\sqrt{\sum_{h=1}^n [v]_h^2}$, $\| f\|_{L^2}:=\sqrt{\int_0^1 f(x)^2 dx}$, $\|g\|_{L^2;\mathbb{R}^n} 
:=  \sqrt{ \int_0^1 \|g(x)\|^2 dx}  $, respectively. $[v]_h$ denotes the $h$-th component of the vector $v$. With the exception of $\N$ and $\R$ (that denote the sets of natural and real numbers, respectively), we use blackboard bold symbols (such as $\mathbb{O}$) to denote  operators acting on $\leb$ or on $\lebn$. The induced operator norm are denoted by $\vertiii{\mathbb{O}}:=\sup_{\{f\mid \|f\|_{L^2}=1\}} \|\mathbb{O}f\|_{L^2}$ and $\vertiii{\mathbb{O}}:=\sup_{\{g\mid \|g\|_{L^2;\mathbb{R}^n}=1\}} \|\mathbb{O}g\|_{L^2;\mathbb{R}^n}$.  We denote by $\lambda_{\textup{max}}(\mathbb{L})$ and by $r(\mathbb{L})$   the largest eigenvalue  and the spectral radius of  the linear integral operator $\mathbb{L}f:=\int_0^1L(x,y)f(y)dy$ with symmetric kernel $L(x,y)=L(y,x)$.  We denote sets by using calligraphic symbols (such as $\mathcal{S}$) and the set of subsets of $\R^n$ by $2^{\R^n}.$ The symbol $\one$ denotes the vector of all ones in $\R^N$ and $1_{[0,1]}(x)$ the function constantly equal to one in $[0,1]$. $\mathbb{I}$ is the identity operator and $I$ the identity matrix.

\ssection{Graphon games: Nash equilibrium as a fixed point of the game operator}
\label{app:graphon_games}

In this appendix we derive an equivalent reformulation of the Nash equilibrium of any graphon game as a fixed point of a suitable operator which we term the \textit{game operator}. The analysis of the properties of such a game operator is key to prove the results of Section \ref{properties} on existence, uniqueness and continuity of the Nash equilibrium. 

\subsection{\textbf{ Reformulation as a fixed point}}

{\color{blue} We here consider the more general  case where strategies are vectors in $\mathbb{R}^n$ instead of scalars and the parameter $\theta$ is a vector of $\mathbb{R}^m$ instead of a scalar. Consequently,  a strategy profile $s:[0,1]\rightarrow \R^n$ is  a \textit{vector valued function}. In other words, $s(x)=[s_1(x),\ldots,s_n(x)]^\top$ for all $x\in[0,1]$. 
In the following, we require that any strategy profile is square integrable, that is $s(x)\in\lebn$.\footnote{ This  implies that
each component  is square integrable, that is, $s_k(x)\in\leb$ for all $k\in\{1,\ldots,n\}$. In fact for any $k\in\{1,\ldots,n\}$ it holds $\|s_k\|_{\lt}=\sqrt{\int_0^1 s_k(x)^2 dx}\le \sqrt{\int_0^1 \|s(x)\|^2 dx}= \|s\|_{\lnn}$. } 
}

To derive a fixed point characterization of the Nash equilibrium, we start by considering any strategy function $s\in \lebn$.  The corresponding local aggregate is

\begin{align*}
z(x\mid s)&:=\int_0^1 W(x,y)s(y)dy\\
&:= \left[\begin{array}{c}\int_0^1 W(x,y)s_1(y)dy \\\vdots \\ \int_0^1 W(x,y)s_n(y)dy \end{array}\right] = \left[\begin{array}{c}(\mathbb{W}s_1)(x) \\\vdots \\ (\mathbb{W}s_n)(x) \end{array}\right] =: (\mathbb{W}_ns)(x),
\end{align*}
where $\mathbb{W}_n:\lebn\rightarrow\lebn$ is defined by applying $\mathbb{W}$ component-wise.\footnote{ Note that $s(x)\in \lebn \Rightarrow z(x\mid s)\in\lebn$.
In fact 
{\scriptsize $\|z(x\!\mid\! s)\|^2_{L^2;\mathbb{R}^n}\! =  \int_0^1 \|z(x\mid s)\|^2 dx\textstyle=  \int_0^1 \sum_k [\int_0^1 W(x,y) s_k(y)dy]^2 dx\le$ $  \int_0^1 \sum_k (\int_0^1 [W(x,y)]^2 dy) (\int_0^1  [s_k(y)]^2dy) dx\textstyle \le \int_0^1  \sum_k (\int_0^1  [s_k(y)]^2dy) dx =  \sum_k (\int_0^1  [s_k(y)]^2dy) = \int_0^1 \sum_k   [s_k(y)]^2dy  $ }$=  \|s(x)\|^2_{L^2;\mathbb{R}^n}$,
where we used Cauchy-Schwartz, $W(x,y)^2\le1$ { and Fubini-Tonelli to switch the sum and integral in the second to last equality.}
}
Let us now define an operator $\mathbb{B}_{\theta}:\lebn\rightarrow \lebn$ defined point-wise as follows 
\begin{equation}
\label{eq:br}
(\mathbb{B}_{\theta}z)(x):= \arg\max_{\tilde s\in\mathtt{S}(x)} \J(\tilde s,z(x), {\blue \theta(x)}),
\end{equation}
 where $z(x)$ is any function of $\lebn$ (i.e., not necessarily $z(x\!\mid\! s)$). In words, $(\mathbb{B}_{\theta}z)(x)$ is the best response of agent $x$ to the \textit{fixed} local aggregate $z(x)$.  Note that, under Assumption~\ref{ass:cost}, such best response operator is well defined since  the maximization problem in \eqref{eq:br} has a unique solution.  The fact that, under the given assumptions, the codomain of the best response operator $\mathbb{B}_{\theta}$ is $\lebn$ will be proven in the next section.

We then see that a strategy profile $\bar s \in\lebn$ is a Nash equilibrium if and only if 
\begin{equation}\label{fixed}
\bar s = \mathbb{B}_{\theta}\mathbb{W}_n\bar s,
\end{equation}
that is, the function $\bar s$ is a fixed point of the composite operator $\mathbb{B}_{\theta}\mathbb{W}_n$, which we term the \textit{game operator}.

\subsection{\textbf{Properties of the game operator}}
Existence and uniqueness of a fixed point solving \eqref{fixed} depend on the properties of the composite game operator $\mathbb{B}_{\theta}\mathbb{W}_n$.    We start by studying the properties of $\mathbb{B}_{\theta}$ and $\mathbb{W}_n$ separately.  Lemmas \ref{lem:wn}, \ref{lem:b} and \ref{lem:ls} are then used in the proofs of Theorems \ref{thm:existence}, \ref{thm:unique} and \ref{thm:comp}.

We start by summarizing in the next proposition the main properties of the graphon operator, which follow from the symmetry of $W$ and   will be used in our subsequent analysis.

\begin{lemma}[Properties of $\mathbb{W}_n$] \label{lem:wn} The following  holds:
\textit{\begin{enumerate}
\item $\mathbb{W}_n$ is a linear, continuous, bounded and compact operator;
\item The eigenvalues of $\mathbb{W}_n$ coincide (besides  multiplicity) with those of $\mathbb{W}$ and are   real; 
\item   $\vertiii{\mathbb{W}_n}=\lambda_{\textup{max}}(\mathbb{W})$.
\end{enumerate}}
\end{lemma}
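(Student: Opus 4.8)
The plan is to reduce every claim to the corresponding property of $\mathbb{W}$ established in Proposition \ref{lem:w}, exploiting that $\mathbb{W}_n$ acts diagonally on coordinates: for $s=[s_1,\dots,s_n]^\top\in\lebn$ one has $(\mathbb{W}_n s)(x)=[(\mathbb{W}s_1)(x),\dots,(\mathbb{W}s_n)(x)]^\top$, and therefore $\|\mathbb{W}_n s\|_{L^2;\mathbb{R}^n}^2=\sum_{k=1}^n\|\mathbb{W}s_k\|_{L^2}^2$. Throughout I write $e_1,\dots,e_n$ for the canonical basis of $\mathbb{R}^n$.

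For item 1, linearity is immediate from the linearity of $\mathbb{W}$ applied in each coordinate. Boundedness (hence continuity, since the operator is linear) follows from the displayed identity together with $\|\mathbb{W}s_k\|_{L^2}\le\vertiii{\mathbb{W}}\,\|s_k\|_{L^2}$, which gives $\|\mathbb{W}_n s\|_{L^2;\mathbb{R}^n}\le\vertiii{\mathbb{W}}\,\|s\|_{L^2;\mathbb{R}^n}$, i.e. $\vertiii{\mathbb{W}_n}\le\vertiii{\mathbb{W}}$; this bound will be reused for item 3. For compactness I would take any bounded sequence $\{s^{(m)}\}_m\subset\lebn$: each coordinate sequence $\{s^{(m)}_k\}_m$ is bounded in $\leb$, so by compactness of $\mathbb{W}$ (Proposition \ref{lem:w}, item 1) and a finite diagonal extraction over $k=1,\dots,n$ one obtains a subsequence along which $\mathbb{W}s^{(m)}_k$ converges in $\leb$ for every $k$, hence $\mathbb{W}_n s^{(m)}$ converges in $\lebn$. (Alternatively, one observes directly that $\mathbb{W}_n$ is the Hilbert--Schmidt integral operator with matrix kernel $W(x,y)I$.)

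For item 2, I would prove both inclusions of eigenvalue sets. If $(\lambda,\psi)$ is an eigenpair of $\mathbb{W}$, then $s:=\psi\,e_1$ is a nonzero element of $\lebn$ with $\mathbb{W}_n s=\lambda s$, so $\lambda$ is an eigenvalue of $\mathbb{W}_n$; conversely, if $\mathbb{W}_n s=\lambda s$ with $s\neq 0$, choose $k$ with $s_k\neq 0$ in $\leb$, and then $\mathbb{W}s_k=\lambda s_k$ shows $\lambda$ is an eigenvalue of $\mathbb{W}$. Hence $\mathbb{W}_n$ and $\mathbb{W}$ have the same set of eigenvalues; the multiplicities differ because each eigenfunction $\psi$ of $\mathbb{W}$ produces the $n$ linearly independent eigenfunctions $\psi e_1,\dots,\psi e_n$ of $\mathbb{W}_n$. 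Reality of the eigenvalues then follows from Proposition \ref{lem:w}, item 2. For item 3, the inequality $\vertiii{\mathbb{W}_n}\le\vertiii{\mathbb{W}}=\lambda_{\textup{max}}(\mathbb{W})$ was already obtained in item 1 (invoking Proposition \ref{lem:w}, item 3); for the reverse inequality, let $\psi$ be a normalized eigenfunction of $\mathbb{W}$ associated with $\lambda_{\textup{max}}(\mathbb{W})$ and set $s:=\psi\,e_1$, so that $\|s\|_{L^2;\mathbb{R}^n}=\|\psi\|_{L^2}=1$ and $\|\mathbb{W}_n s\|_{L^2;\mathbb{R}^n}=\lambda_{\textup{max}}(\mathbb{W})\|\psi\|_{L^2}=\lambda_{\textup{max}}(\mathbb{W})$, whence $\vertiii{\mathbb{W}_n}\ge\lambda_{\textup{max}}(\mathbb{W})$ and equality holds.

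I do not expect a serious obstacle here: the argument is essentially bookkeeping on the coordinatewise structure of $\mathbb{W}_n$. The only steps requiring a little care are the compactness claim (handled either by the finite diagonal extraction above or by appealing to the Hilbert--Schmidt structure) and the remark that passing from $\mathbb{W}$ to $\mathbb{W}_n$ leaves the eigenvalues unchanged while multiplying each multiplicity by $n$, so that spectral quantities such as $\lambda_{\textup{max}}$ and the spectral radius are preserved.
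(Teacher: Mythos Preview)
Your proof is correct and follows essentially the same approach as the paper: reduce everything to the coordinatewise action of $\mathbb{W}$ and invoke Proposition~\ref{lem:w}. The only cosmetic differences are that the paper proves compactness by embedding the image of the unit ball into a finite product of compact sets (you use a diagonal subsequence extraction, which is equivalent), and for the eigenvector and norm lower-bound constructions the paper uses the vector function with all $n$ coordinates equal to $\psi$ while you place $\psi$ in a single coordinate; both choices work equally well.
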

\begin{proof}
We first consider the case $n=1$ and show that  the statements above follow from well-known results
\begin{enumerate}
\item $\int_0^1\int_0^1 |W(x,y)|^2 dx dy\le \int_0^1\int_0^1 1 dx dy \le 1$, hence $\mathbb{W}$ is a  Hilbert-Schmidt integral operator with Hilbert-Schmidt kernel $W$ (and is thus a continuous, bounded and compact operator). See also \cite[Section 7.5]{Lovasz2012}.  
\item Since $W(x,y)$ is symmetric, $\mathbb{W}$ is self-adjoint   (see e.g. \cite[Example 6.5.9]{hutson2005applications}). The spectrum of bounded self-adjoint operators is real \cite[Theorem 6.6.3]{hutson2005applications}. See also \cite[Section 7.5]{Lovasz2012}. 
\item Let  $r(\mathbb{W})$ and $\lambda_{\textup{max}}(\mathbb{W})$   be  the spectral radius and the largest eigenvalue of  $\mathbb{W}$.
For bounded self-adjoint operators it holds $\vertiii{\mathbb{W}}=r(\mathbb{W})$, \cite[Theorem 6.6.7]{hutson2005applications}. Since $\mathbb{W}$ is linear compact and positive (with respect to the  total order cone of nonnegative functions in $\leb$), by Krein-Rutman theorem $r(\mathbb{W})>0$  is an eigenvalue \cite[Proposition 7.26]{zeidler1985nonlinear}. Since all eigenvalues are real it must be $r(\mathbb{W})=\lambda_{\textup{max}}(\mathbb{W})$. Hence 
$\vertiii{\mathbb{W}}:= \sup_{f\in\leb, \|f\|_{L^2}=1} \|\mathbb{W}f\|_{L^2}= r(\mathbb{W})= \lambda_{\textup{max}}(\mathbb{W})$.
\end{enumerate}
{\color{blue} The extension to $n>1$  is  immediate since $\mathbb{W}_n$ acts independently on each component.}
\end{proof}

\begin{lemma}[Properties of $\mathbb{B}_{\theta}$] \label{lem:b}  Suppose that Assumption \ref{ass:cost}   holds.  Then the following holds:
\textit{\begin{enumerate}
\item {\blue $\mathbb{B}_{\theta}$ is a Lipschitz operator. That is, for any $f_1,f_2\in\lebn$ and $\theta_1,\theta_2 \in \lebm$
\[
 \|\mathbb{B}_{\theta_1}f_1-\mathbb{B}_{\theta_2}f_2\|_{L^2;\mathbb{R}^n}\le \frac{1}{\alpha_\J}(\ell_\J\|f_1-f_2\|_{L^2;\mathbb{R}^n}+\ell_\theta\|\theta_1-\theta_2\|_{L^2;\mathbb{R}^m});
\]}
 \item $\mathbb{B}_{\theta}$ is a  continuous operator;
 \item  Suppose  further that Assumption \ref{ass:constraint}A) holds, then the codomain of $\mathbb{B}_{\theta}$ is $\lebn$;
\item Suppose  further that Assumption \ref{ass:constraint}B) holds, then  the codomain of $\mathbb{B}_{\theta}$ is contained in  
\begin{equation}\label{eq:Ls}
\ls:=\{f\in\lebn \mid \|f\|_{L^2;\mathbb{R}^n}\le s_\textup{max}\},
\end{equation}
where $s_\textup{max}$ is as defined in  Assumption \ref{ass:constraint}.
\end{enumerate}}
\end{lemma}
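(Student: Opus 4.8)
The plan is to derive a \emph{pointwise} Lipschitz estimate for the best-response map and then obtain all four claims from it. Fix $x\in[0,1]$ and $f,g\in\lebn$, and set $a:=(\mathbb{B}f)(x)$ and $b:=(\mathbb{B}g)(x)$; by Assumption~\ref{ass:cost} the map $\tilde s\mapsto\J(\tilde s,f(x))$ is $\alpha_\J$-strongly concave over the closed convex set $\mathtt{S}(x)$, so $a$ and $b$ are well defined and unique. Writing the first-order optimality conditions $\langle\nabla_s\J(a,f(x)),\tilde s-a\rangle\le0$ and $\langle\nabla_s\J(b,g(x)),\tilde s-b\rangle\le0$ for all $\tilde s\in\mathtt{S}(x)$, evaluating the first at $\tilde s=b$ and the second at $\tilde s=a$, and adding, gives $\langle\nabla_s\J(a,f(x))-\nabla_s\J(b,g(x)),\,b-a\rangle\le0$. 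Inserting the intermediate term $\nabla_s\J(b,f(x))$, I would bound the part that varies in $s$ from below by $\alpha_\J\|a-b\|^2$ using strong concavity, and the part that varies in $z$ from above by $\ell_\J\|f(x)-g(x)\|\,\|a-b\|$ using the $\ell_\J$-Lipschitz continuity of $\nabla_s\J$ in $z$; rearranging yields $\|(\mathbb{B}f)(x)-(\mathbb{B}g)(x)\|\le\tfrac{\ell_\J}{\alpha_\J}\|f(x)-g(x)\|$ for a.e.\ $x\in[0,1]$.

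Claim~1 then follows by squaring and integrating this pointwise inequality:
\[\|\mathbb{B}f-\mathbb{B}g\|_{L^2;\mathbb{R}^n}^2=\int_0^1\|(\mathbb{B}f)(x)-(\mathbb{B}g)(x)\|^2\,dx\le\frac{\ell_\J^2}{\alpha_\J^2}\int_0^1\|f(x)-g(x)\|^2\,dx=\frac{\ell_\J^2}{\alpha_\J^2}\|f-g\|_{L^2;\mathbb{R}^n}^2,\]
and taking square roots; Claim~2 is then immediate, since Lipschitz maps are continuous. (Measurability of $x\mapsto(\mathbb{B}z)(x)$, which is what makes the integrals above meaningful, follows from a standard maximum-theorem / measurable-selection argument using continuity of $\J$ and uniqueness of the maximizer.)

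For Claim~3, I would apply the pointwise estimate with $g$ replaced by the constant function taking the value $\hat z\in\R^n$ from Assumption~\ref{ass:constraint}A); for that constant function $(\mathbb{B}z)(x)=\arg\max_{\tilde s\in\mathtt{S}(x)}\J(\tilde s,\hat z)$, so Assumption~\ref{ass:constraint}A) gives $\|(\mathbb{B}z)(x)\|\le M$ for all $x$, hence $\mathbb{B}z\in\lebn$ for this particular $z$. The pointwise Lipschitz bound then gives, for every $f\in\lebn$, $\|\mathbb{B}f-\mathbb{B}z\|_{L^2;\mathbb{R}^n}\le\tfrac{\ell_\J}{\alpha_\J}\|f-z\|_{L^2;\mathbb{R}^n}<\infty$, so $\mathbb{B}f\in\lebn$ by the triangle inequality. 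Finally, Claim~4 is the most direct: under Assumption~\ref{ass:constraint}B) every value $(\mathbb{B}z)(x)$ lies in $\mathtt{S}(x)\subseteq\mathcal{S}$, hence $\|(\mathbb{B}z)(x)\|\le s_\textup{max}$ for all $x$, and integrating gives $\|\mathbb{B}z\|_{L^2;\mathbb{R}^n}\le s_\textup{max}$, i.e.\ $\mathbb{B}z\in\ls$. The only genuinely delicate step is the variational-inequality estimate of the first paragraph (together with the accompanying measurability remark); everything else is routine manipulation of norms.
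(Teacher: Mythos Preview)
Your proof is correct and follows essentially the same route as the paper: a pointwise Lipschitz estimate for the best response (obtained via the variational-inequality optimality conditions combined with strong concavity in $s$ and Lipschitz continuity of $\nabla_s\J$ in $z$), then integration for Claim~1, immediate continuity for Claim~2, comparison with the constant function $\hat z$ for Claim~3, and the trivial pointwise bound for Claim~4. The only difference is cosmetic: the paper cites the VI sensitivity bound as a black box (\cite[Theorem~1.14]{nagurney2013network}), whereas you unpack that bound directly from the two first-order conditions---and you additionally flag the measurability of $x\mapsto(\mathbb{B}z)(x)$, which the paper leaves implicit.
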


\begin{proof}
\begin{enumerate}
\item {\blue Take any $f_1,f_2\in\lebn$ and $\theta_1,\theta_2 \in \lebm$.
For any  $x\in[0,1]$ we get

\begin{equation}\begin{aligned}
&\|(\mathbb{B}_{\theta_1}f_1)(x) - (\mathbb{B}_{\theta_2}f_2)(x)\|= \|\arg\max_{\tilde s\in\mathtt{S}(x)} \J(\tilde s,f_1(x),\theta_1(x))- \arg\max_{\tilde s\in\mathtt{S}(x)} \J(\tilde s,f_2(x),\theta_2(x))\| \\
&\le \frac{1}{\alpha_\J}\|-\nabla_s \J((\mathbb{B}_{\theta_2}f_2)(x) ,f_1(x),\theta_1(x))+ \nabla_s \J((\mathbb{B}_{\theta_2}f_2)(x) ,f_2(x),\theta_2(x))\|\\&\le \frac{1}{\alpha_\J}(\ell_\J\|f_1(x)-f_2(x)\|+\ell_\theta\|\theta_1(x)-\theta_2(x)\|).
\end{aligned}\label{eq:continuity}
\end{equation}

The first inequality in \eqref{eq:continuity} can be proven by reformulating the optimization problem in \eqref{eq:br} as the variational inequality VI$(\mathtt{S}(x), -\nabla_s \J(\cdot, z(x), \theta(x)))$. By Assumption~\ref{ass:cost}, the operator  $-\nabla_s \J(\cdot, z,\theta)$ is strongly monotone with constant $\alpha_\J$ for all $z\in\R^n,\theta\in\R^m$, \cite[Equation (12)]{scutari2010convex} . The result then follows from a known bound on the distance of the solution of strongly monotone variational inequalities \cite[Theorem 1.14]{nagurney2013network}. The second inequality in \eqref{eq:continuity} comes from the assumption that $\nabla_s \J(s, z, \theta)$ is uniformly Lipschitz in $[z,\theta]$ with constants  $\ell_\J,\ell_\theta$ for all $s\in\R^n$.
Let us now compute $\|\mathbb{B}_{\theta_1}f_1 - \mathbb{B}_{\theta_2}f_2\|_{L^2;\mathbb{R}^n}.$

For simplicity define $h(x):=\|(\mathbb{B}_{\theta_1}f_1)(x) - (\mathbb{B}_{\theta_2}f_2)(x)\| $, $h_f(x):=\frac{\ell_\J}{\alpha_\J}\|f_1(x)-f_2(x)\|$ and $h_\theta(x):=\frac{\ell_\theta}{\alpha_\J}\|\theta_1(x)-\theta_2(x)\|$ for all $x\in[0,1]$. 
By \eqref{eq:continuity}, $0\le h(x)\le h_f(x)+h_\theta(x)$ for all $x\in[0,1]$. Hence
$$ \|h(x)\|_{L^2} \le \|h_f(x)+h_\theta(x)\|_{L^2} \le \|h_f(x)\|_{L^2}+\|h_\theta(x)\|_{L^2}.$$

The conclusion follows from $\|h(x)\|_{L^2}=\|\mathbb{B}_{\theta_1}f_1 - \mathbb{B}_{\theta_2}f_2\|_{L^2;\mathbb{R}^n}$, $ \|h_f(x)\|_{L^2}=\frac{\ell_\J}{\alpha_\J}\|f_1-f_2\|_{L^2;\mathbb{R}^n}$ and $ \|h_\theta(x)\|_{L^2} =\frac{\ell_\theta}{\alpha_\J}\|\theta_1-\theta_2\|_{L^2;\mathbb{R}^m}. $}
\item 
Lipschitz continuity implies continuity.
\item We need to show that for any $z\in\lebn$, $\|\mathbb{B}_{\theta}z\|_{L^2;\mathbb{R}^n}<\infty$. Consider the function $\hat z(x):=\hat z$ for all $x\in[0,1]$, where $\hat z$ is as in Assumption \ref{ass:constraint}A). Note that  $\hat z\in\lebn$ and

{\blue \begin{align*}
\|\mathbb{B}_{\theta}\hat z\|^2_{L^2;\mathbb{R}^n}&=\int_0^1  \|(\mathbb{B}_{\theta}\hat z)(x)\|^2dx = \int_0^1 \| \arg\max_{\tilde s\in\mathtt{S}(x)} \J(\tilde s,\hat z,\theta(x)) \|^2dx   \le M^2.
\end{align*}}

Consider now any $z\in\lebn$. We have

\begin{align*}
\|\mathbb{B}_{\theta} z\|_{L^2;\mathbb{R}^n} & = \|\mathbb{B}_{\theta} z - \mathbb{B}_{\theta} \hat z + \mathbb{B}_{\theta}\hat  z\|_{L^2;\mathbb{R}^n} \le  \|\mathbb{B}_{\theta} z - \mathbb{B}_{\theta} \hat z\|_{L^2;\mathbb{R}^n} + \|\mathbb{B}_{\theta}\hat  z\|_{L^2;\mathbb{R}^n}\\
&\le  \left(\frac{\ell_\J}{\alpha_\J}\right) \|\hat z-z\|_{L^2;\mathbb{R}^n} +M\le  \left(\frac{\ell_\J}{\alpha_\J}\right) (\|\hat z\|_{L^2;\mathbb{R}^n}+\|z\|_{L^2;\mathbb{R}^n}) +M <\infty,
\end{align*}

where the second inequality follows from statement 1).
\item Under Assumption \ref{ass:constraint}B) for any $x\in[0,1]$, $(\mathbb{B}_{\theta}z)(x) \in\mathtt{S}(x)\subseteq \mathcal{S}$ hence

\begin{align*}
\|\mathbb{B}_{\theta}z\|^2_{L^2;\mathbb{R}^n}&= \int_0^1 \|(\mathbb{B}_{\theta}z)(x)\|^2 dx  \le  \int_0^1 s_\textup{max}^2 dx =s_\textup{max}^2.
\end{align*}

Consequently for any $z\in\lebn$, $\mathbb{B}_{\theta}z\in \ls$.
\end{enumerate}
\end{proof}

Finally, we study the properties of  $\ls$ as defined in \eqref{eq:Ls}.
\begin{lemma}[Properties of $\ls$] \label{lem:ls} \textit{For any non-empty compact set $\mathcal{S}\subset{\R^n}$, the set $\ls$ in \eqref{eq:Ls} is a non-empty, convex, closed and bounded subset of $\lebn$.}
\end{lemma}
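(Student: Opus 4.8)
The plan is to check the four asserted properties directly from the definition $\ls=\{f\in\lebn\mid \|f\|_{L^2;\mathbb{R}^n}\le s_\textup{max}\}$, using only elementary properties of the norm on the Hilbert space $\lebn$ together with Assumption~\ref{ass:constraint}B), which guarantees that $\mathcal{S}$ is non-empty and compact and hence that $s_\textup{max}=\max_{s\in\mathcal{S}}\|s\|$ is a well-defined, finite, nonnegative constant. First I would dispatch non-emptiness and boundedness: the zero function $f\equiv 0$ satisfies $\|0\|_{L^2;\mathbb{R}^n}=0\le s_\textup{max}$, so $0\in\ls$; and $\ls$ is by construction the closed ball of radius $s_\textup{max}$ about the origin, so $\sup_{f\in\ls}\|f\|_{L^2;\mathbb{R}^n}\le s_\textup{max}<\infty$.

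For convexity I would take $f,g\in\ls$ and $t\in[0,1]$ and invoke the triangle inequality and absolute homogeneity of the norm to get
\[
\|tf+(1-t)g\|_{L^2;\mathbb{R}^n}\le t\|f\|_{L^2;\mathbb{R}^n}+(1-t)\|g\|_{L^2;\mathbb{R}^n}\le t\,s_\textup{max}+(1-t)\,s_\textup{max}=s_\textup{max},
\]
so that $tf+(1-t)g\in\ls$. For closedness I would use continuity of the norm: if $\{f_k\}\subset\ls$ and $f_k\to f$ in $\lebn$, then by the reverse triangle inequality $\big|\,\|f_k\|_{L^2;\mathbb{R}^n}-\|f\|_{L^2;\mathbb{R}^n}\,\big|\le\|f_k-f\|_{L^2;\mathbb{R}^n}\to 0$, whence $\|f\|_{L^2;\mathbb{R}^n}=\lim_k\|f_k\|_{L^2;\mathbb{R}^n}\le s_\textup{max}$ and $f\in\ls$. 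Equivalently, one can observe that $\ls$ is the preimage of the closed set $[0,s_\textup{max}]$ under the continuous map $f\mapsto\|f\|_{L^2;\mathbb{R}^n}$.

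There is no genuine obstacle in this lemma; it is a purely routine verification. The only points requiring a small amount of care are to state all four properties relative to the ambient space $\lebn$ (in particular ``closed'' is meant in the $L^2$ norm topology), and to note explicitly that finiteness of the bound $s_\textup{max}$ — which is what makes $\ls$ a genuine proper subset rather than all of $\lebn$ — is precisely the content of Assumption~\ref{ass:constraint}B); compactness of $\mathcal{S}$ beyond this finiteness is not actually needed in the argument.
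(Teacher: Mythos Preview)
Your proof is correct and follows essentially the same approach as the paper: a direct verification of each property from the definition of $\ls$ as a closed norm ball, using the triangle inequality for convexity and the well-definedness of $s_\textup{max}$ (from compactness and non-emptiness of $\mathcal{S}$) for the remaining properties. The paper is slightly terser---it simply asserts ``closed and bounded by definition''---whereas you spell out the closedness argument via continuity of the norm, but there is no substantive difference.
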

\begin{proof}
Since $\mathcal{S}$ is non-empty and compact $s_\textup{max}$ is well defined. This immediately implies that $\ls$ is non-empty. Given two functions $f,g\in\ls$ and any $\mu\in[0,1]$
\[
\|\mu f\!+\!(1\!-\!\mu)g\|_{L^2;\mathbb{R}^n}\le \mu \|f(x)\|_{L^2;\mathbb{R}^n}+(1-\mu)\|g(x)\|_{L^2;\mathbb{R}^n}\le \mu s_\textup{max} +(1-\mu) s_\textup{max} =s_\textup{max}.
\]
 Hence $\mu f+(1-\mu)g \in\ls$ and $\ls$ is convex.   $\ls$ is closed and bounded by definition.
\end{proof}

\ssection{Omitted proofs}
\label{appC}

\subsection{\textbf{Section \ref{graphon_games}: Omitted proofs }} 
\vspace{0.2cm}
\textbf{Proof of Theorem \ref{thm:existence}}

We aim at applying Schauder fixed point theorem \cite[Theorem 4.1.1]{smart1980fixed} to $\mathbb{B}_{\theta}\mathbb{W}_n:\ls \rightarrow K:=(\mathbb{B}_{\theta}\mathbb{W}_n(\ls))^{cl} $. 
\begin{enumerate}
\item By Lemma \ref{lem:ls}, $\ls$ is non-empty, convex, closed and bounded. 
\item In Lemma \ref{lem:wn} and \ref{lem:b} we have proven that both $\mathbb{W}_n$ and $\mathbb{B}_{\theta}$ are continuous operators, hence $\mathbb{B}_{\theta}\mathbb{W}_n$ is continuous.
\item We next show $K\subseteq \ls$. Since $\mathbb{W}_n:\ls \rightarrow \lebn$ and   $\mathbb{B}_{\theta}:\lebn \rightarrow \ls$ it holds $ \mathbb{B}_{\theta}\mathbb{W}_n(\ls) \subseteq \ls$. Since $\ls$ is closed, $ K =(\mathbb{B}_{\theta}\mathbb{W}_n(\ls))^{cl}  \subseteq (\ls)^{cl}=\ls$.
\item Finally, we  show that $K$ is compact. To this end note that  $\mathbb{W}_n$ is a compact operator by Lemma \ref{lem:wn} and $\ls$ is bounded, hence $(\mathbb{W}_n(\ls))^{cl}$ is  compact \cite[Definition 7.2.1]{hutson2005applications}. We proved in Lemma \ref{lem:b} that  $\mathbb{B}_{\theta}$ is Lipschitz  (and thus  continuous), consequently $\mathbb{B}_{\theta}((\mathbb{W}_n(\ls))^{cl})$ is  compact \cite[Theorem 2.34]{aliprantisinfinite}.  Clearly $\mathbb{B}_{\theta}(\mathbb{W}_n(\ls)) \subseteq \mathbb{B}_{\theta}((\mathbb{W}_n(\ls))^{cl})$ and thus $K:=(\mathbb{B}_{\theta}(\mathbb{W}_n(\ls)))^{cl} \subseteq (\mathbb{B}_{\theta}((\mathbb{W}_n(\ls))^{cl}))^{cl}=\mathbb{B}_{\theta}((\mathbb{W}_n(\ls))^{cl}))$. $K$ is thus a closed subset of a compact set, which implies that $K$ is compact \cite[pg. 40]{aliprantisinfinite}. 
\end{enumerate}
Schauder fixed point theorem  thus guarantees the existence of a fixed point.

\vspace{0.2cm}
\textbf{Proof of Theorem \ref{thm:unique}}

 We show that under the assumptions of this theorem the game operator is a contraction (i.e. Lipschitz with constant strictly less than one) in the Hilbert space $\lebn$. The conclusion then follows from Banach fixed point theorem \cite[Theorem 4.3.4]{smart1980fixed}. For any $f,g\in\lebn$, 

  \begin{align*}
 \|\mathbb{B}_{\theta} \mathbb{W}_nf-\mathbb{B}_{\theta}\mathbb{W}_ng\|_{L^2;\mathbb{R}^n}&\le \frac{\ell_\J}{\alpha_\J}\|\mathbb{W}_nf-\mathbb{W}_ng\|_{L^2;\mathbb{R}^n}= \frac{\ell_\J}{\alpha_\J}\|\mathbb{W}_n(f-g)\|_{L^2;\mathbb{R}^n}
 \\&\le  \frac{\ell_\J}{\alpha_\J}\vertiii{\mathbb{W}_n}\|f-g\|_{L^2;\mathbb{R}^n} =  \frac{\ell_\J}{\alpha_\J}\lambda_{\textup{max}}(\mathbb{W})\|f-g\|_{L^2;\mathbb{R}^n},
 \end{align*}
 
 where we used Lemma~\ref{lem:b} for the first inequality, the fact that $\mathbb{W}_n$ is linear in the first equality and the fact that  $\vertiii{\mathbb{W}_n}=\lambda_{\textup{max}}(\mathbb{W})$, as proven in Lemma \ref{lem:wn}, in the last line. The conclusion follows from Assumption \ref{cond}.

\vspace{0.2cm}
\textbf{Proof of Theorem \ref{thm:comp}}

{\color{blue} In the vector case condition \eqref{eq:K} becomes

\begin{equation}
\blue \|\bar s -\tilde s\|_{L^2;\mathbb{R}^n} \le \frac{1/\alpha_\J }{1-\ell_\J/\alpha_\J \lambda_{\textup{max}}(\mathbb{W})}\left(\ell_\J\vertiii{\mathbb{W} -\tilde{\mathbb{W}}}s_\textup{max} +\ell_\theta \|\theta-\tilde \theta\|_{L^2;\mathbb{R}^m}\right).
\end{equation}}

{\blue To prove this condition, note that by the equivalent characterization of Nash equilibria in terms of fixed points, it holds $\bar s =\mathbb{B}_{\theta}\mathbb{W}_n\bar s $ and $\tilde s =\mathbb{B}_{\tilde \theta}\tilde{\mathbb{W}_n}\tilde s $ hence

\begin{align*}
\| \bar s-\tilde s\|_{L^2;\mathbb{R}^n}&=\| \mathbb{B}_{\theta}\mathbb{W}_n\bar s -\mathbb{B}_{\tilde \theta}\tilde{\mathbb{W}}_n\tilde s \|_{L^2;\mathbb{R}^n} \le \frac{\ell_\J}{\alpha_\J}\| \mathbb{W}_n\bar s -\tilde{\mathbb{W}}_n\tilde s \|_{L^2;\mathbb{R}^n} + \frac{\ell_\theta}{\alpha_\J} \|\theta-\tilde \theta\|_{L^2;\mathbb{R}^m}\\&
\le  \frac{\ell_\J}{\alpha_\J}\| \mathbb{W}_n\bar s -\mathbb{W}_n\tilde s\|_{L^2;\mathbb{R}^n} + \frac{\ell_\J}{\alpha_\J}\| \mathbb{W}_n\tilde s -\tilde{\mathbb{W}}_n\tilde s \|_{L^2;\mathbb{R}^n}+ \frac{\ell_\theta}{\alpha_\J} \|\theta-\tilde \theta\|_{L^2;\mathbb{R}^m}\\&\le  \frac{\ell_\J}{\alpha_\J}\vertiii{ \mathbb{W}_n} \|\bar s -\tilde s\|_{L^2;\mathbb{R}^n} + \frac{\ell_\J}{\alpha_\J} \vertiii{\mathbb{W}_n -\tilde{\mathbb{W}}_n}\|\tilde s \|_{L^2;\mathbb{R}^n}+ \frac{\ell_\theta}{\alpha_\J} \|\theta-\tilde \theta\|_{L^2;\mathbb{R}^m}\\
&=  \frac{\ell_\J}{\alpha_\J}\lambda_{\textup{max}}(\mathbb{W}) \|\bar s -\tilde s\|_{L^2;\mathbb{R}^n} +  \left(\frac{\ell_\J}{\alpha_\J}\vertiii{\mathbb{W} -\tilde{\mathbb{W}}}\|\tilde s \|_{L^2;\mathbb{R}^n}+ \frac{\ell_\theta}{\alpha_\J}\|\theta-\tilde \theta\|_{L^2;\mathbb{R}^m}\right)
\end{align*}}
where we used that $\mathbb{B}_{\theta}$ is Lipschitz, as proven in Lemma \ref{lem:b}, the fact that  $\vertiii{ \mathbb{W}_n}=\lambda_{\textup{max}}(\mathbb{W})$  and the fact that $\vertiii{\mathbb{W}_n -\tilde{\mathbb{W}}_n}=\vertiii{\mathbb{W} -\tilde{\mathbb{W}}}$. 
%
%
%
The  conclusion follows from the fact that  $ 1-\ell_\J/\alpha_\J \lambda_{\textup{max}}(\mathbb{W})>0$ by Assumption \ref{cond} hence
\begin{align}\label{cont_step}
\| \bar s-\tilde s\|_{L^2;\mathbb{R}^n}\le \frac{1}{1-  \frac{\ell_\J}{\alpha_\J}\lambda_{\textup{max}}(\mathbb{W})}   \left(\frac{\ell_\J}{\alpha_\J}\vertiii{\mathbb{W} -\tilde{\mathbb{W}}}\|\tilde s \|_{L^2;\mathbb{R}^n}+ \frac{\ell_\theta}{\alpha_\J}\|\theta-\tilde \theta\|_{L^2;\mathbb{R}^m}\right)
\end{align}
and the fact that, under Assumption \ref{ass:constraint}B), $\|\tilde s \|_{L^2;\mathbb{R}^n} \le s_\textup{max}$, as proven in Lemma~\ref{lem:b}.

\subsection{\textbf{Section \ref{sec:asymptotic} and \ref{sec:sparse}: Omitted proofs }} \label{sec:conv_proof}

\vspace{0.5cm}
\textbf{Generalized statement and proof of Theorem \ref{thm:oneNash}}

{\blue
We here report a  more general version of the statement of Theorem \ref{thm:oneNash} that includes the sparsity parameter $\kappa_N$, as discussed in Section \ref{sec:sparse}. Theorem \ref{thm:oneNash} is obtained as a special case by setting $\kappa_N=1$.\\

\textbf{Theorem \ref{thm:oneNash} (generalized).}
\textit{A vector $\bar s_{[N]}\in \R^{Nn}$ is a Nash equilibrium of the  game $\mathcal{G}_\kappa^{[N]}(\{\mathcal{S}^i\}_{i=1}^N, \J,\{\theta^i\}_{i=1}^N, P^{[N]})$ with $N$ players, payoff function $\J$ as in \eqref{gameyG} for some sparsity parameter $\kappa_N$,  strategy sets $\mathcal{S}^i$, parameters $\theta^i$ and graph $P^{[N]}$ if and only if the  corresponding  step function equilibrium $\bar s_{[N]}(x)\in \lebn$ is a  Nash equilibrium  of the graphon game  $\mathcal{G}(\mathtt{S}^{[N]}, \J, \theta^{[N]}, W_\kappa^{[N]})$ with   payoff function as in \eqref{eq:br_naJ}, set valued function  $\mathtt{S}^{[N]}(x):=\mathcal{S}^i$ for all $x\in \mathcal{U}_i^{[N]}$, parameter function $\theta^{[N]}(x):=\theta^i$ for all $x\in \mathcal{U}_i^{[N]}$ and step function graphon $W_\kappa^{[N]}$  corresponding to  $\frac{P^{[N]}}{\kappa_N}$.}

\begin{proof}
 Suppose that $\bar s_{[N]}$ is a Nash equilibrium of  $\mathcal{G}(\mathtt{S}^{[N]}, \J, \theta^{[N]}, W_\kappa^{[N]})$. Since $W_\kappa^{[N]}$ is a step function over the partition $\mathcal{U}^{[N]}$, the    aggregate $\bar z_\kappa(x)=\int_0^1 W_\kappa^{[N]}(x,y)\bar s_{[N]}(y)dy $ is a step function with respect to the same partition. Let $\bar z^i_\kappa$ be the value of $\bar z_\kappa(x)$ in $\mathcal{U}_i^{[N]}$ and recall that $\theta^{[N]}(x)=\theta^i$ in $\mathcal{U}_i^{[N]}$.  From the definition of Nash equilibrium for the graphon game

$$\bar s_{[N]}(x)=\arg\max_{s\in\mathtt{S}^{[N]}(x)} \J(s, \bar z_\kappa(x),\theta^{[N]}(x))=\arg\max_{s\in\mathcal{S}^i} \J(s, \bar z_\kappa^i,\theta^i) \mbox{ for all } x\in \mathcal{U}_i^{[N]}.$$
Consequently, also $\bar s_{[N]}(x)$ is a step function with respect to $\mathcal{U}^{[N]}$. Let $\bar s_{[N]}^i$ be the value of $\bar s_{[N]}(x)$ in $\mathcal{U}_i^{[N]}$. 
Then $\bar z^i_\kappa =\int_0^1 W_\kappa^{[N]}(x,y)\bar s_{[N]}(y)dy =\frac1N \sum_{j=1}^N  \frac{P_{ij}^{[N]}}{\kappa_N} \bar s_{[N]}^j$ and $\bar s_{[N]}(x)$ is a Nash equilibrium of the graphon game if and only if for each $i\in\{1,\ldots,N\}$ it holds

$$\bar s_{[N]}^i=\arg\max_{s\in\mathcal{S}^i} \J(s,\bar z_\kappa^i,\theta^i), \quad \bar z_\kappa^i =\frac{1}{\kappa_N N} \sum_{j=1}^N  P_{ij}^{[N]} \bar s_{[N]}^j.$$
The latter is the definition of Nash equilibrium in the sampled network game with network $P^{[N]}$, thus concluding the proof.
\end{proof}}\\

\textbf{Generalized statement and proof of Theorem \ref{thm:dist}}

{\blue
We here report a  more general version of the statement of Theorem \ref{thm:dist} that includes the sparsity parameter $\kappa_N$, as discussed in Section \ref{sec:sparse}. Theorem \ref{thm:dist} is obtained as a special case by setting $\kappa_N=1$.\\}

\textbf{Theorem \ref{thm:dist} (generalized).} 
\textit{Consider a graphon game $\mathcal{G}(\mathtt{S}, \J, {\blue \theta}, W)$ where each player has  homogeneous strategy set, i.e.,  $\mathtt{S}(x)=\mathcal{S}$ for all $x\in[0,1]$. Suppose that $\mathcal{G}$ satisfies Assumptions \ref{ass:cost}, \ref{ass:constraint}B), \ref{cond} and \ref{lipschitz}.  Let $\bar s$ be its unique Nash equilibrium  and fix  any sequences $\{\delta_N, {\blue \kappa_N}\}_{N=1}^\infty$ such that $\delta_N \le e^{-1}$ and $\frac{\log( N/\delta_N)}{N{ \blue \kappa_N}} \rightarrow 0$.  Let $\bar s^{[N]}_{w}$ be an arbitrary step function equilibrium of the sampled network game $\mathcal{G}^{[N]}(\{\mathcal{S}\}_{i=1}^N, \J,{\blue \{{\theta}(t^i)\}_{i=1}^N}, P^{[N]}_{w})$, as introduced in  Section \ref{step2} and 
 $\bar s^{[N]}_{s}$ be an arbitrary step function equilibrium of the sampled network game $\mathcal{G}_{\kappa}^{[N]}(\{\mathcal{S}\}_{i=1}^N, \J,{\blue \{{\theta}(t^i)\}_{i=1}^N}, P^{[N]}_{s})$, as introduced in  Section \ref{sec:sparse}.
Then with probability at least $1-2\delta_N$, for $N$ large enough,  it holds
\begin{align*}\|\bar s^{[N]}_{w/s}-\bar s\|_{L^2;\mathbb{R}^n}  \le K\rho_{\blue \kappa}(N)
\end{align*}
for $\rho_{\blue \kappa}(N):=\left(2s_\textup{max}  \sqrt{(L^2-\Omega^2)d_N^2+\Omega d_N} +s_\textup{max}\sqrt{\frac{4\log(2N/\delta_N)}{{\blue\kappa_N}N}} +{\blue \sqrt{(Ld_N)^2+8\Omega d_N \theta^2_\textup{max}}}\right)$,\\ $ K=\frac{{\blue \max\{\ell_\J,\ell_\theta\}}/\alpha_\J}{1-\ell_\J/\alpha_\J \lambda_{\textup{max}}(\mathbb{W})} $,
 and $d_N:=\frac1N+\sqrt{\frac{8\log(N/\delta_N)}{N}}\rightarrow 0$ as $N\rightarrow \infty$.  
Consequently, $\|\bar s^{[N]}_{w/s}-\bar s\|_{L^2;\mathbb{R}^n} \rightarrow 0$  almost surely when $N\rightarrow \infty$.
}

\begin{proof}
{\blue Let $\theta^{[N]}$ be the step function corresponding to the vector $[\theta(t^i)]_{i=1}^N$} and $W_w^{[N]}$, $W_s^{[N]}$ be the   step function graphons  corresponding to  $P_w^{[N]}$ and {\blue $\frac{P_s^{[N]}}{\kappa_N}$}, respectively, so that
 $\bar s^{[N]}_{w}$ and $\bar s^{[N]}_{s}$ are the equilibria of the graphon games played over the  graphon $W_w^{[N]}$ and $W_s^{[N]}$ {\blue with parameter function  $\theta^{[N]}$}. By Theorem~\ref{thm:comp} it  follows

\begin{equation}\label{eq:boundA}
\begin{aligned} 
\blue \|\bar s^{[N]}_{w/s} -\bar s\|_{L^2;\mathbb{R}^n} \le  K \left(\vertiii{\mathbb{W}^{[N]}_{w/s}-\mathbb{W}}s_\textup{max}+ \|\theta^{[N]}-\theta\|_{L^2;\mathbb{R}^m}\right). \end{aligned}
\end{equation}

 The bound on $\|\bar s^{[N]}_{w/s}- \bar s\|_{L^2;\mathbb{R}^n}$   follows from  \eqref{eq:boundA} and the fact that for $N$ large enough with probability at least $1-2\delta_N$ 
{\blue  \[\|\theta^{[N]}-\theta\|_{L^2;\mathbb{R}^m}\le \rho_\theta(N),\]}
\[ \vertiii{\mathbb{W}^{[N]}_{w/s}-\mathbb{W}} \le \rho_W(N),\]
for $\rho_\theta(N)$ and $\rho_W(N)$ as defined in Lemma \ref{lem:norm} in online Appendix \ref{aux}.
 The latter fact is proven in \cite[Theorem 1]{graphons} (reported in Lemma \ref{lem:norm} in online Appendix \ref{aux}) and follows from the fact that the $\{\type^i\}_{i=1}^N$ are the ordered statistic of $N$ uniform samples from $[0,1]$ combined with the fact that $W$ is piecewise Lipschitz by Assumption \ref{lipschitz}. 
 
Overall, there exists $M>0$ such that  for $N$ sufficiently large with probability at least $1-2\delta_N$ it holds
\[\blue \|\bar s^{[N]}_{w/s}- \bar s\|_{L^2;\mathbb{R}^n}\le M \left(\left(\frac{\log(N/\delta_N)}{{\blue }N} \right)^\frac{1}{4}+ \left(\frac{\log(N/\delta_N)}{{ \kappa_N}N} \right)^\frac{1}{2}\right)=:\Theta_N.\]
To prove almost sure convergence, let us define the infinite sequence of events
\[\mathcal{E}_N:=\left\{ \|\bar s^{[N]}_{w/s}- \bar s\|_{L^2;\mathbb{R}^n}> {\blue \Theta_N} \right\}.\]
 It follows that $\textup{Pr}\left[ \mathcal{E}_N \right]< 2\delta_N$. 
 {\blue Note that $\frac{\log(N/\delta_N)}{N{\blue \kappa_N}} \rightarrow 0$ implies $\frac{\log(N)}{N{\blue \kappa_N}} \rightarrow 0$ (since $\delta_N\le 1$).  Hence $\delta_N=\frac{1}{N^2}$ is also an admissible choice and leads to}
\[    \textstyle \sum_{N=1}^\infty \textup{Pr}\left[ \mathcal{E}_N \right]<  \sum_{N=1}^\infty \frac{2}{N^2} <\infty.\]
By  Borel-Cantelli lemma there exists  a positive integer $\bar N$ such that for all $N\ge \bar N$, the complement of $\mathcal{E}_N$, i.e., 
$\|\bar s^{[N]}_{w/s}- \bar s\|_{L^2;\mathbb{R}^n}\le {\blue \Theta_N}$, holds almost surely. Since, $ \Theta_N \rightarrow 0$ we obtain $\|\bar s^{[N]}_{w/s}- \bar s\|_{L^2;\mathbb{R}^n} \rightarrow 0 $ almost surely. 
\end{proof}

{\blue \subsection{\textbf{Section \ref{sec:int}: Omitted proofs }} 
\vspace{0.2cm}
\begin{lemma}\label{lem:LQ_bound}
Let $\bar s^{[N]}_{w/s}$ be an arbitrary step function equilibrium of the sampled network game $\mathcal{G}^{[N]}(\{\mathbb{R}_{\ge0}\}_{i=1}^N, \J,{\blue \{{\thetaL}(t^i)\}_{i=1}^N}, P^{[N]}_{w/s})$, as introduced in  Section \ref{step2}, with linear quadratic payoff $\J$ as in  \eqref{costL} and set $0<\alpha<\frac{1}{\lambda_\textup{max}(\mathbb{W})}$. Then there exists $M_s$ such that for any admissible confidence sequence $\{\delta_N\}_{N=1}^\infty$ and $N$ large, with probability $1-2\delta_N$,
$\|\bar s^{[N]}_{w/s}\|_{L^2}\le M_s.$
\end{lemma}
\begin{proof}
Fix $\bar k \in(1,\frac{1}{\alpha \lambda_\textup{max}(\mathbb{W})})$ (this interval has non empty interior by assumption).
Since by Lemma \ref{lem:norm} (given in  online  Appendix \ref{aux})   with probability $1-2\delta_N$ for $N$ large $|\lambda_\textup{max}(\mathbb{W}_{w/s}^{[N]})- \lambda_\textup{max}(\mathbb{W})|\le\rho_W(N)$, it follows that for $N$ large enough, $\lambda_\textup{max}(\mathbb{W}_{w/s}^{[N]})\le \bar k \lambda_\textup{max}(\mathbb{W})$ and by the formula for the equilibrium in linear quadratic graphon games derived in Example \ref{lqgg}
\begin{align*}
\textstyle
\|\bar s^{[N]}_{w/s}\|_{L^2}& \le \| {(\mathbb{I}-\alpha \mathbb{W}_{w/s}^{[N]})^{-1}}\thetaL^{[N]}\|_{L^2} \le  \frac{1}{1-\alpha \lambda_\textup{max}(\mathbb{W}_{w/s}^{[N]})} \| \thetaL^{[N]}\|_{L^2}  \\&\le \frac{1}{1-\alpha\bar k  \lambda_\textup{max}(\mathbb{W})} 2\| \thetaL\|_{L^2} =:M_s,
\end{align*}
where we used that for large $N$, $\| \thetaL^{[N]}\|_{L^2} \le 2\| \thetaL\|_{L^2}$ since  $\| \thetaL^{[N]}- \thetaL\|_{L^2}\rightarrow 0$ as shown in Lemma \ref{lem:norm} in the online Appendix \ref{aux}. 

\end{proof}

 \textbf{Proof of Theorem \ref{thm:inter}}\\
We start by noting that Problem \eqref{finite_opt} can be equivalently reformulated as a problem in the space of functions instead of vectors by using the equivalent reformulation given in Section~\ref{net_graph}. So that 
\begin{equation}\label{finite_opt_fun}
\begin{aligned}
T^{[N]}_{\textup{opt}}:=\max_{\hat\thetaL^{[N]} \in L^{[N]}}&\quad \frac{1}{2}\|\bar s^{[N]}_{\hat\thetaL^{[N]}}\|^2_{L^2},\\
\textup{s.t.}&\quad \bar s^{[N]}_{\hat\thetaL^{[N]}} =\textup{Nash equilibrium of } \mathcal{G}(\{\mathbb{R}_{\ge0}\}_{i=1}^N, \J,\thetaL^{[N]}+\hat \thetaL^{[N]}, W_{w/s}^{[N]}),\\
&\quad \textstyle  \|\hat {\thetaL}^{[N]}\|^2_{L^2}\le C,
\end{aligned}
\end{equation}
where we used $L^{[N]}$ to denote the subspace of $\leb$ composed by functions that are piecewise constant w.r.t. the partition $\{\mathcal{U}^{[N]}_i\}_{i=1}^N$ and for simplicity we used the same symbol $\thetaL^{[N]}$ to denote the  vector $[\thetaL(t^i)]_{i=1}^N$ and its corresponding piecewise constant function. \\ Similarly to the proof of Lemma \ref{lem:LQ_bound}, it can be shown that with probability $1-2\delta_N$ both  $\|\bar s^{[N]}_{\hat\thetaL^{[N]}}\|_{L^2}$ and $\|\bar s_{\hat \thetaL}\|_{L^2}$ (as defined in \eqref{finite_opt_fun} and in \eqref{infinite_optt}, respectively)  can be bounded by some constant $M_s$ for any feasible $\hat\thetaL^{[N]},\hat\thetaL$. 
 
Let $\hat\thetaL^{[N]}_{\textup{opt}}$ be an optimizer of Problem \eqref{finite_opt_fun} and note that in the interval $[0,M_s]$ the function $g(\tau)=\frac12\tau^2$ is Lipschitz continuous with constant $M_s$. Hence

\begin{align*}
\quad T^{[N]}(\hat\thetaL^{[N]}_\textup{graphon})&= g(\|\bar s^{[N]}_{\hat\thetaL^{[N]}_\textup{graphon}}\|_{L^2}) \ge g(\|\bar s_{\hat\thetaL^{*}}\|_{L^2}) -\underbrace{M_s( |\|\bar s^{[N]}_{\hat\thetaL^{[N]}_\textup{graphon}}\|_{L^2} - \|\bar s_{\hat\thetaL^{*}}\|_{L^2} | )}_{T_1} \\
&\ge g(\|\bar s_{\hat\thetaL^{[N]}_{\textup{opt}}}\|_{L^2}) - T_1
 \\
 &\ge g(\|\bar s^{[N]}_{\hat\thetaL^{[N]}_{\textup{opt}}}\|_{L^2}) - T_1- \underbrace{M_s( |\|\bar s^{[N]}_{\hat\thetaL^{[N]}_{\textup{opt}}}\|_{L^2}- \|\bar s_{\hat\thetaL^{[N]}_{\textup{opt}}} \|_{L^2}|)}_{T_2}\\
 &= T^{[N]}_{\textup{opt}}- T_1-T_2,
\end{align*}
where the second inequality comes from the fact that $\hat\thetaL^{[N]}_{\textup{opt}}$ is a feasible point of problem \eqref{infinite_optt} and  $\hat\thetaL^{*}$ is the corresponding optimizer.
To bound the terms $T_1$ and $T_2$, note that for any function $\hat{\thetaL_1},\hat{\thetaL_2}\in \leb$ it holds $|\|\hat{\thetaL_1}\|_{L^2}-\|\hat{\thetaL_2}\|_{L^2}|\le \|\hat{\thetaL_1}-\hat{\thetaL_2}\|_{L^2}$ and by the proof of Theorem~\ref{thm:comp} (Equation \eqref{cont_step}) and Lemma \ref{lem:norm}, with probability $1-2\delta_N$
\begin{align*}
\|\bar s^{[N]}_{\hat{\thetaL_1}}-\bar s_{\hat{\thetaL_2}}\|_{L^2} &\le K \left( \vertiii{\mathbb{W}^{[N]}_{w/s}-\mathbb{W}} \|\bar s^{[N]}_{\hat{\thetaL_1}}\|_{L^2}+ \| (\thetaL^{[N]}+ \hat{\thetaL_1}) - (\thetaL +\hat{\thetaL_2})\|_{L^2}\right)\\
&\le K \left( \rho_W(N) M_s+ \| \thetaL^{[N]} -\thetaL \|_{L^2}+ \|  \hat{\thetaL_1} - \hat{\thetaL_2}\|_{L^2}\right)\\
&\le K \left( \rho_W(N)M_s+ \rho_\theta(N)+ \|  \hat{\thetaL_1} - \hat{\thetaL_2}\|_{L^2}\right)\\
&=: K \left( \rho_M(N)+ \|  \hat{\thetaL_1} - \hat{\thetaL_2}\|_{L^2}\right).\end{align*}
where $K$ is  as defined in Theorem \ref{thm:dist} and  $\rho_W,\rho_\theta$ are as defined in Lemma \ref{lem:norm}.
Hence
$$T_1+T_2\le M_s K \left( 2\rho_M(N) +  \|   \hat\thetaL^{[N]}_\textup{graphon}- \hat\thetaL^{*} \|_{L^2}\right).$$
We finally bound $\|\hat\thetaL^{[N]}_\textup{graphon}-\hat\thetaL^{*}\|_{L^2}$. To this end, let $\hat\thetaL^{[N]*}$ be the piecewise function corresponding to $[\thetaL^*(t^i)]_{i=1}^N$, so that i) by Lemma \ref{lem:norm} with the same probability $1-2\delta_N$, $\|\hat\thetaL^{[N]*}-\hat\thetaL^{*}\|_{L^2}\le \rho^*_\theta(N):=\sqrt{(L^*d_N)^2+4\Omega^*d_N\thetaL^*_{\textup{max}}}$ (where $L^*$ is the Lipschitz constant of $\hat \thetaL^*$, $\Omega^*$ the number of points where $\hat \thetaL^*$ is not Lipschitz continuous and $\hat \thetaL^*(x) \le \thetaL^*_\textup{max}$ for all $x$) and ii) by definition $\hat\thetaL^{[N]}_\textup{graphon}= \hat\thetaL^{[N]*} \frac{\sqrt{C}}{\|\hat\thetaL^{[N]*} \|_{L^2}} =  \hat\thetaL^{[N]*} \frac{\|\hat\thetaL^{*} \|_{L^2}}{\|\hat\thetaL^{[N]*} \|_{L^2}} $. 
Overall
\begin{align*}\|\hat\thetaL^{[N]}_\textup{graphon}-\hat\thetaL^{*}\|_{L^2} &\le \|\hat\thetaL^{[N]}_\textup{graphon}-\hat\thetaL^{[N]*} \|_{L^2} + \|\hat\thetaL^{[N]*}-\hat\thetaL^{*}\|_{L^2}
\\&\le \left|\frac{\|\hat\thetaL^{*} \|_{L^2}}{\|\hat\thetaL^{[N]*} \|_{L^2}}-1\right|\|\hat\thetaL^{[N]*} \|_{L^2} + \rho_\theta^*(N)\\
&= |\|\hat\thetaL^{*} \|_{L^2}- {\|\hat\thetaL^{[N]*} \|_{L^2}}| + \rho_\theta^*(N)
\\& \le \|\hat\thetaL^{*} - {\hat\thetaL^{[N]*} \|_{L^2}} + \rho_\theta^*(N)=2\rho_\theta^*(N).
\end{align*}
Hence $T_1+T_2\le 2M_sK(\rho_M(N)+\rho_\theta^*(N))=:\rho_T(N)$.
}

\vspace{0.2cm}

{\blue \textbf{Proof of Lemma \ref{solvability}} \\
Let $\mathcal{K}$ be the kernel of $\mathbb{W}$, so that $\mathbb{W}\psi=0$ for any function $\psi\in \mathcal{K}$ and let $\mathcal{K}^\perp$ be its orthogonal complement. By the spectral theorem it is possible to construct a orthonormal basis for $\mathcal{K}^\perp$ made of eigenfunctions of $\mathbb{W}$. We denote such basis by $\{\psi_r\}_{r=1}^R$. 
Recall that $\thetaL$ is the status-quo standalone marginal return and let $\hat\thetaL$ be a generic function of $\leb$. 
Let $b_0\psi_0, \hat b_0\hat \psi_0$ be the projection of $\thetaL,\hat \thetaL$ in $\mathcal{K}$ (with $\|\psi_0\|_{L^2}=\| \hat \psi_0\|_{L^2}=1$) and let $b_r = \langle \thetaL, \psi_r \rangle , \hat b_r= \langle \hat \thetaL, \psi_r \rangle $. Since $\mathcal{K}$ is a closed linear subspace it holds 
$$ \thetaL = b_0 \psi_0 + \sum_{r=1}^R b_r \psi_r \quad \textup{and} \quad \hat \thetaL = \hat b_0 \hat\psi_0 + \sum_{r=1}^R \hat b_r \psi_r.$$
Using the fact that $\{\psi_r\}_{r=1}^R$ are orthogonal to each other and orthogonal to any function in $\mathcal{K}$ yields
$$\| \hat \thetaL\|^2_{L^2} = \|  \hat b_0 \hat \psi_0 + \sum_{r=1}^R \hat b_r \psi_r \|^2_{L^2} =  \|  \hat b_0 \hat \psi_0 \|^2_{L^2}+\sum_{r=1}^R \hat b_r^2=  \sum_{r=0}^R \hat b_r^2. $$
Since we are considering linear quadratic games, from the discussion in Example \ref{lqgg}, the equilibrium induced by $\thetaL+\hat \thetaL$ can be rewritten as 
\begin{align*}
\bar s_{\hat \thetaL} &= (\mathbb{I}-\alpha \mathbb{W})^{-1}(\thetaL+\hat \thetaL)= \sum_{h=0}^\infty \alpha^h \mathbb{W}^h(\thetaL+\hat \thetaL)\\
 & =\sum_{h=0}^\infty \alpha^h \mathbb{W}^h ( b_0\psi_0+ \hat b_0\hat \psi_0+  \sum_{r=1}^R (b_r+\hat b_r) \psi_r)\\
  & =\sum_{h=0}^\infty \alpha^h \mathbb{W}^h ( b_0\psi_0+ \hat b_0\hat \psi_0) +  \sum_{r=1}^R (b_r+\hat b_r)\sum_{h=0}^\infty \alpha^h \mathbb{W}^h( \psi_r)\\
    & = ( b_0\psi_0+ \hat b_0\hat \psi_0) +  \sum_{r=1}^R (b_r+\hat b_r)\sum_{h=0}^\infty \alpha^h \lambda_r^h \psi_r\\
        & = ( b_0\psi_0+ \hat b_0\hat \psi_0) +  \sum_{r=1}^R \frac{(b_r+\hat b_r)}{1- \alpha \lambda_r} \psi_r.
\end{align*}
Hence 
$$\|\bar s_{\hat \thetaL}\|_{L^2}^2 =\| b_0\psi_0+ \hat b_0\hat \psi_0\|_{L^2}^2 +  \sum_{r=1}^R \frac{(b_r+\hat b_r)^2}{(1- \alpha \lambda_r)^2}. $$
Note that for any fixed value of $b_0,\psi_0,\hat b_0$ the quantity $\|b_0\psi_0+ \hat b_0\hat \psi_0\|_{L^2}^2$ is maximized when $\hat \psi_0= \psi_0$, in which case $\| b_0\psi_0+ \hat b_0\hat \psi_0\|_{L^2}^2=(b_0+\hat b_0)^2$.
Hence Problem \eqref{infinite_optt} can be equivalently reformulated as \eqref{infinite_opttt}.
}

{\blue \subsection{\textbf{Section \ref{sec:case_study}: Omitted proofs }} \label{csA}

\begin{corollary}\label{cor2}
For all $k=1,\ldots K$, $\hat s^{{ARD}}_k \rightarrow \bar s^{com}_k$ almost surely  as $N\rightarrow \infty$.
\end{corollary}
\begin{proof}
{\blue Consider  for simplicity the case with just one community and suppose that the ARD is administered to all agents. In this case, in the graphon game each agent has the same equilibrium strategy, that is, $\bar s(x)=\bar s^{com}$ for all $x\in[0,1]$ hence
\begin{align*}
\| \bar s^{[N]} - \bar s\|_{L^2}^2 & \textstyle = \int_0^1 (\bar s^{[N]}(x) - \bar s(x))^2 dx =  \sum_{i=1}^N \int_{\mathcal{U}^{[N]}_i} (\bar s^{[N]}_i - \bar s^{com})^2 
dx\\
&\textstyle=\frac 1N \sum_{i=1}^N  (\bar s^{[N]}_i - \bar s^{com})^2 = \frac1N \| \bar s^{[N]} - \bar s^{com}\mathbbm{1}_N \|_2^2.
\end{align*}

This yields
\begin{align*}
|\hat s^{{ARD}} - \bar s^{com} | &\textstyle= \left|\left(\frac1N \sum_{i=1}^N \bar s^{[N]}_i \right)- \bar s^{com}\right| = \left|\frac1N \sum_{i=1}^N (\bar s^{[N]}_i - \bar s^{com})\right| \\
&\textstyle\le \frac1N \sum_{i=1}^N |\bar s^{[N]}_i - \bar s^{com}|  = \frac1N  \|\bar s^{[N]} - \bar s^{com}\mathbbm{1}_N\|_1 \\
&\textstyle\le \frac{\sqrt{N}}N  \|\bar s^{[N]} - \bar s^{com}\mathbbm{1}_N\|_2   = \frac{\sqrt{N}}N \sqrt{N} \| \bar s^{[N]} - \bar s\|_{L^2}=\| \bar s^{[N]} - \bar s\|_{L^2}.
\end{align*}
Since by  Theorem \ref{thm:dist} 
$ \| \bar s^{[N]} - \bar s\|_{L^2} \rightarrow 0$ almost surely,\footnote{\blue Theorem \ref{thm:dist}  requires Assumption \ref{ass:constraint}B) which is not met when $\mathcal{S}=\mathbb{R}_{\ge0}$.  Assumption \ref{ass:constraint}B)   is only used within Theorem \ref{thm:dist} to bound $\|\bar s^{[N]}_{s}\|$. We proved in Lemma \ref{lem:LQ_bound} that for linear quadratic games,  $\|\bar s^{[N]}_{s}\|$ can be bounded, with high probability, even without Assumption \ref{ass:constraint}B). Hence the conclusion of Theorem \ref{thm:dist}  holds.} we finally obtain that 
 $|\hat s^{{ARD}} - \bar s^{com} | \rightarrow 0$ almost surely. A similar proof shows that in the case of $K$ communities $|\hat s^{{ARD}}_k - \bar s^{com}_k | \rightarrow 0$ almost surely for all $k=1,\ldots K$.}
\end{proof}

\begin{corollary}
 $\hat \alpha^{ARD}_\kappa \rightarrow \alpha_\kappa$   almost surely  as $N\rightarrow \infty$.
\end{corollary}
\begin{proof}
{\blue Recall from Corollary \ref{cor2} that $\hat s^{{ARD}} \rightarrow \bar s^{com}$ and from point 2 in Section \ref{sec:ARD} that $\hat E_\kappa^{{ARD}} \rightarrow E_\kappa$.  Hence $\hat X\rightarrow \bar X:=  E_\kappa  \bar s^{com}$, $\hat Y \rightarrow \bar Y:=\bar s^{com} - \thetaL^{{com}}$ and
\begin{equation}\label{limit}
\hat \alpha^{ARD}_\kappa \rightarrow (\bar X^\top \bar X)^{-1} \bar{X}^\top \bar Y.  
\end{equation}
Moreover by \eqref{eq:com_eq}, 
$\bar s^{com}= ( I -\alpha E)^{-1} \thetaL^{com}= ( I -\alpha_\kappa E_\kappa)^{-1} \thetaL^{com}$
or equivalently 
$\bar s^{com} - \thetaL^{com} = \alpha_\kappa E_\kappa \bar s^{com} \textup{ implying } \bar Y= \alpha_\kappa \bar X.$
Substituting in \eqref{limit} yields
$\hat \alpha^{ARD}_\kappa \rightarrow (\bar{X}^\top \bar X)^{-1} \bar{X}^\top \bar Y = \alpha_\kappa(\bar{X}^\top \bar X)^{-1} \bar{X}^\top \bar{X}=\alpha_\kappa,$
as desired.}
\end{proof}}

\begin{center}
\textbf{References}
\end{center}

\bibliographystyle{apalike}
\bibliography{mit.bib}

\newpage

\pagenumbering{arabic}
\renewcommand{\thepage}{OA-\arabic{page}}

\begin{center}
{ONLINE APPENDIX}\\
for\\[0.3cm]
\textbf{``Graphon games: A statistical framework for network games and intervention''}\\[0.6cm]

{Francesca Parise, Asuman Ozdaglar}\\
\textit{Laboratory for Information and Decision Systems,\\ Massachusetts Institute of Technology, Cambridge, MA, USA.}
\end{center}

\ssection{\textbf{Incomplete information in sampled network games}}\label{bayes}

In the main text we   assumed that agents have perfect information about the network $P^{[N]}_{s}$.\footnote{For simplicity we here focus on $0$-$1$ adjacency matrices,  similar results hold for the weighted case.}  In this appendix we   generalize our analysis to sampled network games with \textit{incomplete information}. As in the main text, we  consider sampled network games with $N$ agents whose types $\{t^i\}_{i=1}^N$ are drawn independently and uniformly at random from $[0,1]$ (recall that, e.g., in the community structure model of Example 3 an agent's type represents his community, while in the location model of Example~4 an agent's type is his location in the line segment $[0,1]$). Different from the main text, we here assume that agents do not have access to the exact structure of the sampled network $P^{[N]}_{s}$, but instead each agent $i$ knows the stochastic network formation process (i.e., the graphon $W$ in our framework) and his own type $t^i\in[0,1]$, which determines  the probability $W(t^i,t^j)$ that he will connect to agent $j$ of (random) type $t^j$. We next define a symmetric Bayesian Nash equilibrium for this incomplete information game and show that it is well approximated by the equilibrium of a graphon game with graphon $W$.
 
 Note that the strategy $b(x)$ of each agent in an incomplete information sampled network game specifies the action that the agent  will take as a function of his  type $x$. Assuming that all other agents use the  strategy $b$, the expected payoff of an agent $i$ of type $t^i=x$ playing strategy $s(x)\in\mathtt{S}(x)$ is given by\footnote{ In this section we define the local aggregate by dividing by $N-1$ instead of $N$ to account for the fact that agent $i$ does not consider itself in the local aggregate. This allows us to obtain exact equivalence of the graphon game equilibrium and symmetric Bayesian Nash equilibrium in incomplete information sampled network games with linear quadratic payoffs. With the normalization $\frac1N$ instead of $\frac{1}{N-1}$ the equivalence would hold asymptotically in $N$. }

\begin{align}
\J_\textup{exp}(s(x)\mid b)&=\mathbb{E}_{N, t^{-i}, \textup{links}}\left[ \J\left(s(x), \frac{1}{N-1}\sum_{j\neq i} [P^{[N]}_s]_{ij} b(\type^j), {\blue\theta(x)} \right)\right]
\end{align}
where $\J$ is as in \eqref{gamey} and $\mathbb{E}_{N, t^{-i}, \textup{links}}$ denotes the expectation with respect to the number of agents, their types (each agent  knows its type $\type^i=x$ but has no information about  the other agents types  $t^{-i}:=\{t^j\}_{j\neq i}$, which are independent from $\type^i$) and the link realizations (which are generated according to Bernoulli random variables with probability $\{W(t^i,\type^j)\}_{j\neq i}$). We define a symmetric Bayesian Nash equilibrium as follows.

\begin{definition}[Incomplete information  sampled network game]
  An incomplete information  sampled network game $\mathcal{G}^{in}(\mathtt{S},U,{\blue \theta}, W)$, is a  network game  with a random number $N$ of agents, whose  types $\{t^i\}_{i=1}^N$ are sampled independently and uniformly at random from $[0,1]$, that  interact according to a network $P^{[N]}_s$ sampled from the graphon $W$ according to Definition \ref{sample}.\footnote{The distribution of  $N$ does not matter for our results, with the exception of Theorem \ref{thm:bayes} where we assume that the support of such distribution is bounded from below by $N_\textup{min}$.} Each agent $i$ has information about the graphon $W$, his own type $t^i$,  the strategy sets $\mathtt{S}$, {\blue the function $\theta$} and the payoff function $U$, while is uninformed about $P^{[N]}_s$ and the other agents types $t^{-i}$.
\end{definition}

\begin{definition}[Symmetric Bayesian Nash equilibrium]
 Consider a incomplete information  sampled network game $\mathcal{G}^{in}(\mathtt{S},U,{\blue\theta}, W)$.
  A function $b$ such that $b(x) \in \mathtt{S}(x)$ for all $x\in[0,1]$ is a symmetric $\varepsilon$-Bayesian Nash equilibrium if for all $x\in[0,1]$
\[\J_\textup{exp}(b(x)\mid b) \ge \J_\textup{exp}(\tilde s\mid b) - \varepsilon \textup{ for all } \tilde s \in \mathtt{S}(x).\]
The function $b$  is an exact symmetric Bayesian Nash equilibrium if the previous inequality holds for $\varepsilon=0$.
\end{definition}

\begin{remark}
Note that a strategy profile in both the graphon game and the incomplete information sampled network game is a function that maps  $x\in[0,1]$ into a strategy $s(x)\in\mathtt{S}(x)$. In the graphon game this function specifies the action of a continuum of agents $x\in[0,1]$ interacting according to the graphon $W$, in the incomplete information sampled network game it specifies the action an agent with type $x$ takes if he doesn't know the type of the other sampled agents and the realized links.
\end{remark}

We start by focusing on  linear quadratic games with payoff function as in \eqref{eq:cost_quadratic}.

\begin{theorem} \label{lq_in}Consider a linear quadratic game with payoff as in \eqref{eq:cost_quadratic} and {\blue assume that the peer effect parameter $\alpha$ is the same for each agent while $\thetaL(x)$ is agent specific}. A function $\bar s$ is a Nash equilibrium of the graphon game $\mathcal{G}(\mathtt{S},U,{\blue\thetaL},W)$ if and only if it is a symmetric Bayesian Nash equilibrium for the  incomplete information sampled network game $\mathcal{G}^{in}(\mathtt{S},U,{\blue\thetaL},W)$.
\end{theorem}

\begin{proof}
Let $\bar z(x)=\int_0^1 W(x,y) \bar s(y) dy$. By definition $\bar s$ is a graphon equilibrium if and only if for all $x\in[0,1]$, $\bar s(x)\in\mathtt{S}(x)$ and
\begin{equation}\J( \bar s(x),\bar z(x),{\blue \thetaL(x)})\ge \J(s(x), \bar z(x) ,{\blue \thetaL(x)}) \mbox{ for all } s(x) \in \mathtt{S}(x).\label{c1}\end{equation}

Note that for linear quadratic sampled network games with partial information, the expected payoff of an agent with type $\type^i=x$ is

\begin{align*}
\J_\textup{exp}(s(x)\mid \bar s)&=\mathbb{E}_{N, t^{-i}, \textup{links}}\left[ -\frac{1}{2}(s(x))^2+\left(\alpha \frac{1}{N-1}\sum_{j} [P^{[N]}_s]_{ij} \bar s(\type^j) +{\blue \thetaL(x)} \right)s(x)\right]\\
&= -\frac{1}{2}(s(x))^2+\left(\alpha \mathbb{E}_{N, t^{-i}, \textup{links}}\left[ \frac{1}{N-1}\sum_{j} [P^{[N]}_s]_{ij} \bar s(\type^j)\right] +{\blue \thetaL(x)}\right)s(x),\\
&= \J(s(x), z_{\textup{exp}}(x),{\blue \thetaL(x)})
\end{align*}
where we defined 
$
z_{\textup{exp}}(x):= \mathbb{E}_{N, t^{-i}, \textup{links}}\left[ \frac{1}{N-1}\sum_{j} [P^{[N]}_s]_{ij} \bar s(\type^j)\right].
$\footnote{Note that $\mathbb{E}_{N, t^{-i}, \textup{links}}\left[ \frac{1}{N-1}\sum_{j} [P^{[N]}_s]_{ij} \bar s(\type^j)\right]$ is a function of the type $t^i=x$ of agent $i$ since a link between agent $i$ and $j$ forms (i.e., $[P^{[N]}_s]_{ij}=1$) with Bernoulli  probability $W(t^i,t^j)=W(x,t^j)$.  }
Hence $\bar s$ is a symmetric Bayesian Nash equilibrium if and only if for all $x\in[0,1]$, $\bar s(x)\in\mathtt{S}(x)$ and
\begin{equation}\J( \bar s(x),z_{\textup{exp}}(x),{\blue \thetaL(x)})\ge \J(s(x), z_{\textup{exp}}(x),{\blue \thetaL(x)}) \mbox{ for all } s(x) \in \mathtt{S}(x).\label{c22}\end{equation}

We conclude the proof by showing that $z_{\textup{exp}}(x)=\bar z(x)$ for all $x\in[0,1]$, proving that conditions \eqref{c1} and \eqref{c22} are equivalent. To this end, note that 
\begin{equation}\label{eq:z_exp}
\begin{aligned}
z_{\textup{exp}}(x)&= \mathbb{E}_{N, t^{-i}, \textup{links}}\left[ \frac{1}{N-1}\sum_{j\neq i} [P^{[N]}_s]_{ij} \bar s(\type^j)\right]=\mathbb{E}_\textup{N}\mathbb{E}_{{t^{-i}} \mid \textup{N}}\mathbb{E}_{\textup{ links}\mid {t^{-i}, N}}\left[ \frac{1}{N-1}\sum_{j\neq i} [P^{[N]}_s]_{ij} \bar s(\type^j)\right]  \\
&=\mathbb{E}_\textup{N}\mathbb{E}_{{t^{-i}} \mid \textup{N}}\left[ \frac{1}{N-1}\sum_{j\neq i} W(x,\type^j) \bar s(\type^j)\right]
\end{aligned}
\end{equation}
and for any fixed $N$ 

\begin{align*}
& \mathbb{E}_{{t^{-i}} \mid \textup{N}}\left[  \frac{1}{N-1}\sum_{j\neq i} W(x,\type^j) \bar s(\type^j)\right]=  \frac{1}{N-1}\sum_{j\neq i} \mathbb{E}_{{t^{-i}} \mid \textup{N}}\left[  W(x,\type^j) \bar s(\type^j)\right]\\
&=  \frac{1}{N-1}\sum_{j\neq i} \mathbb{E}_{{\type^j}}\left[  W(x,\type^j) \bar s(\type^j)\right]=\frac{1}{N-1}\sum_{j\neq i}\int_0^1  W(x,y) \bar s(y)dy=\frac{1}{N-1}\sum_{j\neq i}\bar z(x) =\bar z(x)
\end{align*}
where we used the fact that the $\{\type^j\}_{j=1}^N$  are independent and uniformly distributed in $[0,1]$. Hence
\begin{equation}\label{eq:z_exp2}
z_{\textup{exp}}(x)=\mathbb{E}_\textup{N}\mathbb{E}_{{t^{-i}} \mid \textup{N}}\left[ \frac1{N-1}\sum_{j\neq i} W(x,\type^j) \bar s(\type^j)\right]=\mathbb{E}_\textup{N}\bar z(x)=\bar z(x).
\end{equation}
Note that $z_{\textup{exp}}(x)$ does not depend on the distribution of $N$.
\end{proof}

\subsection{\textbf{Generalization to Lipschitz payoff functions}}

In the previous subsection we focused on games with linear quadratic payoff functions and we showed that $\bar s$ is a graphon equilibrium if and only if it is a symmetric Bayesian Nash equilibrium  for an incomplete information sampled network game with any number of agents. We next consider a more general class of payoff functions, satisfying the following assumption.
\begin{assumption} \label{cost2} The payoff function $\J(s,z,{\blue \theta})$ is Lipschitz continuous in $z$ uniformly over $s$ and ${\blue \theta}$, with constant $L_\J$.
\end{assumption}
 The expected payoff for an agent of type $\type^i=x$ in this case is 
 
\begin{align*}
\J_\textup{exp}(s(x)\mid b)&=\mathbb{E}_{N, t^{-i}, \textup{links}}\left[ \J\left(s(x), \frac{1}{N-1}\sum_{j} [P^{[N]}_s]_{ij} b(\type^j),{\blue \theta(x)}\right)\right]\\&=\mathbb{E}_{\zeta_b(x)}\left[ \J\left(s(x), \zeta_b(x),{\blue \theta(x)}\right)\right],
\end{align*}
where $\zeta_b(x)$ is a random variable that describes the possible realizations of local aggregate perceived by an agent  of type $x$ over different network realizations when all agents play according to $b$. 
Note that, when the strategy $b$ equals a graphon game equilibrium $\bar s$, $\mathbb{E}_{\zeta_{\bar s}(x)}[\zeta_{\bar s}(x)]= \bar z(x)=\int_0^1 W(x,y)\bar s(y)dy$ as shown in \eqref{eq:z_exp} and \eqref{eq:z_exp2}. For the payoff functions considered here however
\[\J_\textup{exp}(s(x)\mid \bar s)\!=\! \mathbb{E}_{\zeta_{\bar s}(x)}\left[ \J\left(s(x), \zeta_{\bar s}(x),{\blue \theta(x)}\right)\right]\! \neq\!  \J\left(s(x), \mathbb{E}_{\zeta_{\bar s}(x)}\left[\zeta_{\bar s}(x)\right],{\blue \theta(x)}\right) \! =\!\J(s(x), \bar z(x),{\blue \theta(x)})\]
 since the aggregate enters nonlinearly in the payoff function. Therefore it is not possible to use the argument of Theorem \ref{lq_in} to conclude that $\bar s$ is a symmetric Bayesian Nash equilibrium. 
 Nonetheless, we show in Lemma \ref{concentration} (in Appendix \ref{aux}) that $\zeta_{\bar s}(x)$ concentrates around $ \bar z(x)$  for large population sizes.  Hence, for large populations, $\J(s(x), \bar z(x),{\blue \theta(x)})$  is indeed  a good approximation of $\J_\textup{exp}(s(x)\mid \bar s)$. By exploiting this observation we show in the next theorem that, under the additional assumption that each agent has access to a lower bound ($N_\textup{min}$) on  the population size in any realized sampled network game, 
 the graphon equilibrium $\bar s$ is a symmetric $\varepsilon$-Bayesian Nash equilibrium with  $\varepsilon\rightarrow 0$ as the lower bound on the population size $N_\textup{min}\rightarrow \infty$.

\begin{theorem}\label{thm:bayes}
Consider an incomplete information sampled network game $\mathcal{G}^{in}(\mathtt{S},U,{\blue \theta},W)$ where $\mathtt{S}(x)=\mathcal{S}$ for all $x\in[0,1]$.
Suppose that all the agents know that the population size $N$ is sampled from a distribution whose support is strictly lower bounded by $N_\textup{min}$ and
suppose that Assumptions \ref{ass:cost}, ~\ref{ass:constraint}B),~\ref{cond},  \ref{lipschitz}  (with $\Omega=0$) and  \ref{cost2} hold. Let $\bar s$ be the unique equilibrium of the corresponding graphon game $\mathcal{G}(\mathtt{S},U,{\blue \theta},W)$. Then  $\bar s$ is a symmetric $\varepsilon$-Bayesian Nash equilibrium with 
\[\varepsilon =\mathcal{O}\left( \sqrt{\frac{\log(N_\textup{min})}{N_\textup{min}}}\right).\]
\end{theorem}

\begin{proof}
It follows from the  definition of graphon equilibrium that for all $x\in[0,1]$
\[\J(\bar s(x),\bar z(x),{\blue \theta(x)})\ge \J(s(x),\bar z(x),{\blue \theta(x)}) \quad \forall s(x)\in\mathcal{S},\]
where $\bar z(x)=\int_0^1 W(x,y) \bar s(y)dy$.
Consider an agent of type $t^i$. By the previous inequality specialized for $x=t^i$, it follows that for all $s(\type^i)\in\mathcal{S}$ 
\begin{align*}
\J_\textup{exp}(\bar s(\type^i)\mid \bar s) &=  \mathbb{E}_{\zeta_{\bar s}(\type^i)}[\J(\bar s(\type^i),\zeta_{\bar s}(\type^i),{\blue \theta(\type^i)})]\\&
 = \mathbb{E}_{\zeta_{\bar s}(\type^i)}[\J(\bar s(\type^i),\zeta_{\bar s}(\type^i),{\blue \theta(\type^i)})-\J(\bar s(\type^i),\bar z(\type^i),{\blue \theta(\type^i)})]+ \J(\bar s(\type^i),\bar z(\type^i),{\blue \theta(\type^i)})\\
&\ge -L_\J \mathbb{E}_{\zeta_{\bar s}(\type^i)}[\|\zeta_{\bar s}(\type^i)-\bar z(\type^i)\|]+ \J(s(\type^i),\bar z(\type^i),{\blue \theta(\type^i)})\\
&= -L_\J \mathbb{E}_{\zeta_{\bar s}(\type^i)}[\|\zeta_{\bar s}(\type^i)-\bar z(\type^i)\|]+  \mathbb{E}_{\zeta_{\bar s}(\type^i)}[\J(s(\type^i),\bar z(\type^i),{\blue \theta(\type^i)})- \J(s(\type^i), \zeta_{\bar s}(\type^i),{\blue \theta(\type^i)})] \\&\qquad \hfill{+ \mathbb{E}_{\zeta_{\bar s}(\type^i)}[\J(s(\type^i), \zeta_{\bar s}(\type^i),{\blue \theta(\type^i)})] }\\
&\ge -2L_\J \mathbb{E}_{\zeta_{\bar s}(\type^i)}[\|\zeta_{\bar s}(\type^i)-\bar z(\type^i)\|]+   \mathbb{E}_{\zeta_{\bar s}(\type^i)}[\J(s(\type^i), \zeta_{\bar s}(\type^i),{\blue \theta(\type^i)})]\\&=:-\varepsilon +\J_\textup{exp}( s(\type^i)\mid \bar s).
\end{align*}
The proof is concluded upon showing that $\varepsilon:=2L_\J \mathbb{E}_{\zeta_{\bar s}(\type^i)}[\|\zeta_{\bar s}(\type^i)-\bar z(\type^i)\|]=\mathcal{O}\left( \sqrt{\frac{\log(N_\textup{min})}{N_\textup{min}}}\right)$. Since $\bar z(\type^i)=\mathbb{E}_{\zeta_{\bar s}(\type^i)}[\zeta_{\bar s}(\type^i)]$, we need to show that $\zeta_{\bar s}(\type^i)$ concentrates around its mean when $N_\textup{min}\rightarrow \infty$.
{\blue We show in Lemma \ref{concentration} (given in Appendix \ref{aux}) that for any fixed population of size $N$ and any fixed $\type^i$ with probability at least $1-\frac{2n+1}{({N-1})^2}$  it holds $\|\zeta_{\bar s}(\type^i)-\bar z(\type^i)\|\le \varepsilon'$, with $\varepsilon':=\mathcal{O}\left( \sqrt{\frac{\log(N-1)}{N-1}}\right)$. It follows that
\begin{align*}
\mathbb{E}_{\zeta_{\bar s}(\type^i)\mid N}[\|\zeta_{\bar s}(\type^i)-\bar z(\type^i)\|] &\le \left(1-\frac{2n+1}{({N-1})^2}\right) \varepsilon'+\frac{2n+1}{({N-1})^2} 2s_\textup{max} \\&\le \varepsilon'+\frac{2(2n+1)s_\textup{max}}{({N-1})^2} =\mathcal{O}\left( \sqrt{\frac{\log(N-1)}{N-1}}\right) ,
\end{align*}}
where we used that  $\|\zeta_{\bar s}(\type^i)-\bar z(\type^i)\|\le 2 s_\textup{max}$ for all realizations by Assumption~\ref{ass:constraint}B).
Consequently, if $N>  N_\textup{min}$
\[\mathbb{E}_{\zeta_{\bar s}(\type^i)}[\|\zeta_{\bar s}(\type^i)-\bar z(\type^i)\|] =\mathbb{E}_{N}\mathbb{E}_{\zeta_{\bar s}(\type^i)\mid N} [\|\zeta_{\bar s}(\type^i)-\bar z(\type^i)\|] =\mathcal{O}\left( \sqrt{\frac{\log(N_\textup{min})}{N_\textup{min}}}\right).\]

\end{proof}

{\blue \ssection{\textbf{Identification of unknown parameters}}\label{sec:id}

Consider a setting where agents have payoffs as given in \eqref{eq:br_naJ} which additionally depend on a common parameter $\bar \eta$ (for simplicity assume $n=1$ so that agents strategies are scalars). This could be the case for example in a linear quadratic game with payoff
\begin{equation}\label{eq:Lqeta}
\J_{\bar \eta}(s^i,z^i(s),\thetaL^i)=-\frac12(s^i)^2+( \thetaL^i+\bar \eta z^i(s))s^i,
\end{equation}
where the common parameter $\bar \eta>0$ represents the strength of peer effects.
In this section we consider the problem of identifying the parameter $\bar\eta$ from the observation of a sampled equilibrium $\bar s^{[N]}\in\mathbb{R}^N$. We here assume that the realized network $P^{[N]}$ is unknown   (so that results such as \cite{bramoulle2009identification} cannot be applied). Instead we assume that  the network $P^{[N]}$ is a realization from an underlying known graphon $W$ (e.g., $P^{[N]}$ may be a realization from a stochastic block model).

Let us denote by   $\Xi$ the set of parameters $\eta$ for which the corresponding graphon game $\mathcal{G}(\mathtt{S}, \J_\eta,  \theta, W)$ satisfies Assumptions \ref{ass:cost}, \ref{ass:constraint}A) and \ref{cond}. Moreover 
denote by $\bar s_\eta\in\leb$ the unique equilibrium of $\mathcal{G}(\mathtt{S}, \J_\eta,  \theta, W)$. To identify the parameter $\bar \eta$ from an observation of  $\bar s^{[N]}$, one could solve the following optimization problem
\begin{equation}\label{estimate}
\hat \eta:=\arg\min_{\eta\in\Xi} \| \bar s^{[N]}- \bar s_\eta\|_{L^2}
\end{equation}
where $\bar s^{[N]}$ denotes the step function equilibrium.
Intuitively, one can select as  estimate the parameter $\eta$ that minimizes the distance between the observed sampled equilibrium ($\bar s^{[N]}$) and the equilibrium ($\bar s_\eta$) of a graphon game with parameter $\eta$. We next show that if the parameter $\bar \eta$ is identifiable, as defined next, then $\|\hat \eta-\bar \eta\| \rightarrow 0$ as $N\rightarrow \infty$.

\begin{definition}[Identifiability]\label{def:id}
A parameter $\bar \eta \in \Xi$ is identifiable if there exists $L_{\bar \eta}>0$ such that for any $\eta\in\Xi$ it holds
$$ \|\bar \eta- \eta \| \le L_{\bar \eta}  \| \bar s_{\bar \eta}- \bar s_\eta\|_{L^2}.$$
\end{definition}
Intuitively, a parameter $\bar \eta$ is identifiable if  equilibria that are close to $\bar s_{\bar \eta}$ are generated by parameters that are close to $\bar \eta$.
Under this condition we can prove the following corollary of our main convergence theorem.
\begin{corollary}
Suppose that $\mathcal{G}(\mathtt{S}, \J_{\bar \eta},  \theta, W)$  satisfies  Assumptions \ref{ass:cost}, \ref{ass:constraint}B), \ref{cond} and \ref{lipschitz} and that the parameter $\bar \eta$ is identifiable.  Fix  any admissible confidence sequence $\{\delta_N\}_{N=1}^\infty$.  
Then with probability at least $1-2\delta_N$, for $N$ large enough,  it holds
$$ \|\bar \eta- \hat \eta \| \le 2L_{\bar \eta}\bar K\rho(N), $$
where $\rho(N)\rightarrow 0$,  as by  Theorem \ref{thm:dist}, and  $ \bar K:=\frac{\max\{\ell_\J(\bar \eta),\ell_\theta(\bar \eta)\}/\alpha_\J(\bar \eta)}{1-\ell_\J(\bar \eta)/\alpha_\J(\bar \eta) \lambda_{\textup{max}}(\mathbb{W})} $.
\end{corollary}
\begin{proof}
By Theorem \ref{thm:dist} with probability at least $1-2\delta_N$, for $N$ large enough,  it holds
\begin{align*}
\|\bar s^{[N]}-\bar s_{\bar \eta}\|_{L^2}  \le \bar K\rho(N).
\end{align*}
Since $\bar \eta$ is a feasible point of the optimization problem in \eqref{estimate} and $\hat \eta$ is the optimizer it must be
\begin{align*}
\|\bar s^{[N]}-\bar s_{\hat \eta}\|_{L^2}  \le \bar K\rho(N).
\end{align*}
Combining these two inequalities yields
\begin{align*}
\|\bar s_{\hat \eta}-\bar s_{\bar \eta}\|_{L^2}\le \|\bar s^{[N]}-\bar s_{\hat \eta}\|_{L^2} + \|\bar s^{[N]}-\bar s_{\bar \eta}\|_{L^2}  \le 2\bar K\rho(N).
\end{align*}
The identifiability condition yields
$$ \|\bar \eta- \hat \eta \| \le L_{\bar \eta}  \| \bar s_{\bar \eta}- \bar s_{\hat \eta}\|_{L^2}\le 2L_{\bar \eta}\bar K\rho(N). $$

\end{proof}

Assessing for which parameters and games the identifiability condition in Definition \ref{def:id} holds is an interesting open problem. We here briefly comment on linear quadratic games with payoff as in \eqref{eq:Lqeta}. In this case, we recall from Example \ref{lqgg}, that for any $\eta\in\Xi$, $\eta>0$
\begin{align}
\bar s_\eta     &= (\mathbb{I}-\eta\mathbb{W})^{-1}\thetaL \quad \Leftrightarrow \quad \bar s_\eta - \thetaL  = \eta \bar z_\eta,
\end{align}
where $\bar z_\eta:=\mathbb{W}\bar s_\eta$. 
It follows from $(\bar \eta-\eta) \bar z_{\bar \eta}= (\bar \eta \bar z_{\bar\eta}-\eta \bar z_{\eta}) - \eta(\bar z_{\bar\eta}-\bar z_{\eta})$ that
\begin{align*}
|\bar\eta-\eta|\| \bar z_{\bar\eta}\|_{L^2} &\le  \|\bar\eta \bar z_{\bar\eta}-\eta \bar z_{\eta}\|_{L^2} + \eta\| \bar z_{\bar\eta}-\bar z_{\eta}\|_{L^2}\\
&= \|\bar s_{\bar \eta}-\bar s_\eta\|_{L^2} + \eta\|\mathbb{W}( \bar s_{\bar\eta}-\bar s_{\eta})\|_{L^2}\\
&\le \|\bar s_{\bar \eta}-\bar s_\eta\|_{L^2} + \eta \lambda_{\textup{max}}(\mathbb{W})\| \bar s_{\bar\eta}-\bar s_{\eta}\|_{L^2} \le 2\|\bar s_{\bar \eta}-\bar s_\eta\|_{L^2}, 
\end{align*}
where we used that $\eta\in\Xi$ implies $ \eta <\frac{1}{\lambda_{\textup{max}}(\mathbb{W}) }$.
Hence
$$|\bar\eta-\eta|\le \frac{2}{\| \bar z_{\bar\eta}\|_{L^2} }\| \bar s_{\bar\eta}-\bar s_{\eta}\|_{L^2}=: L_{\bar \eta}\| \bar s_{\bar\eta}-\bar s_{\eta}\|_{L^2}, $$
proving that any $\bar \eta\in\Xi$ is identifiable in linear quadratic network games.

}

 \ssection{\textbf{Auxiliary results}}\label{aux_all}

{\color{blue}
\vspace{0.2cm}
\subsection{\textbf{Statements in support of Section \ref{avg}: Average instead of aggregate}}\label{avgA}

 \begin{lemma}
If $\int_0^1 W(x,y) dy \ge d_\textup{min}>0$ a.e. then $\mathbb{W}_d$ is a linear Hilbert-Schmidt integral operator and $\vertiii{\mathbb{W}_d}\le \frac{ \lambda_{\textup{max}}(\mathbb{W})}{ d_\textup{min}}$.
\end{lemma}
 \begin{proof}
Note that $\mathbb{W}_d$ is a linear integral operator with kernel $W_d(x,y):=\frac{W(x,y)}{\int_0^1 W(x,y)dy}$. Since
\begin{align*}
\int_0^1 \int_0^1 \left(\frac{W(x,y)}{\int_0^1 W(x,y)dy}\right)^2 dy dx &= \int_0^1 \frac{\int_0^1 W(x,y)^2 dy}{(\int_0^1 W(x,y)dy)^2} dx  \\
&\le  \int_0^1 \frac{\int_0^1 1 dy}{(d_\textup{min})^2} dx = \left(\frac{1}{d_\textup{min}}\right)^2 <\infty,
\end{align*}
$\mathbb{W}_d$ is a Hilbert-Schmidt integral operator. Moreover, by definition
\begin{align*}
\vertiii{\mathbb{W}_d}^2&=\sup_{f\in\leb \mid \|f\|_{L^2}\le 1} \|\mathbb{W}_d f\|_{L^2}^2 =\sup_{f\in\leb \mid \|f\|_{L^2}\le 1}\int_0^1 (\mathbb{W}_d f)^2(x) dx \\& =\sup_{f\in\leb \mid \|f\|_{L^2}\le 1}\int_0^1 \left( \frac{\int_0^1 W(x,y) f(y) dy}{\int_0^1 W(x,y)dy}\right)^2 dx \\
&\le \sup_{f\in\leb \mid \|f\|_{L^2}\le 1}\int_0^1 \left( \frac{\int_0^1 W(x,y) f(y) dy}{d_\textup{min}}\right)^2 dx \\& = \frac{1}{(d_\textup{min})^2} \sup_{f\in\leb \mid \|f\|_{L^2}\le 1} \|\mathbb{W} f\|_{L^2}^2 = \left(\frac{\vertiii{\mathbb{W}}}{d_\textup{min}}\right)^2 = \left(\frac{\lambda_\textup{max}(\mathbb{W})}{d_\textup{min}}\right)^2.
\end{align*}
\end{proof}

\begin{lemma}\label{lem:normD}

Consider a graphon  $W$ satisfying Assumption \ref{lipschitz} with $\Omega=0$ and suppose that $\int_0^1 W(x,y) dy \ge d_\textup{min}>0$ a.e. Let $W^{[N]}_{w/s}$ be the step function graphons corresponding to the matrices  $P^{[N]}_{w/s}$, as defined in  Section \ref{step1} and $\theta^{[N]}$ be the step-function corresponding to $[\theta(t^i)]_{i=1}^N$. Let $\mathbb{W}^{[N]}_{sd/nd}$ be the normalized graphon operators corresponding to $W^{[N]}_{w/s}$.  Fix  any sequence $\{\delta_N\}_{N=1}^\infty$ such that $\delta_N \le e^{-1}$ and $\frac{\log( N/\delta_N)}{N} \rightarrow 0$.  Then, for  $N$ large enough,
\begin{enumerate} 
\item with probability at least  $1-\delta_N$,  \eqref{distU} holds, $\|\theta^{[N]}-\theta\|_{L^2;\mathbb{R}^m}\le \rho_\theta(N)$ and  
\[ \vertiii{\mathbb{W}^{[N]}_{wd}-\mathbb{W}_d} =\mathcal{O}\left(d_N\right);\]
\item  with probability at least  $1-4\delta_N$,  \eqref{distU} holds, $\|\theta^{[N]}-\theta\|_{L^2;\mathbb{R}^m}\le \rho_\theta(N)$ and
\begin{equation}\vertiii{\mathbb{W}^{[N]}_{sd}-\mathbb{W}_d} =\mathcal{O}\left(\sqrt{\frac{\log(N/\delta_N)}{N}}\right).
\end{equation}
\end{enumerate}
\end{lemma}

\begin{proof}
 Note that for $N$ large enough the condition $\delta_N\in(Ne^{-N/5},e^{-1})$  is satisfied under the assumptions of this lemma. Hence Lemma \ref{lem:dist}  applies. 
\begin{enumerate}
\item 
Define
$$d^{[N]}_{w}(x):=\int_0^1 W^{[N]}_{w}(x,y) dy \quad \textup{and} \quad d(x):=\int_0^1 W(x,y) dy.$$ 
Then with probability $1-\delta_N$  if $x\in \mathcal{U}_i^{[N]}$
\begin{align*}
|d^{[N]}_{w}(x)- d(x)|&\le \int_0^1 |W^{[N]}_{w}(x,y)-W(x,y) | dy \\
&= \sum_j \int_{\mathcal{U}_j^{[N]}} |W(t^i,t^j)-W(x,y) | dy \le   \sum_j \int_{\mathcal{U}_j^{[N]}} 2Ld_N dy=2Ld_N =: \epsilon_N,
\end{align*}
where we used the Lipschitz property and \eqref{distU} from Lemma \ref{lem:dist} in the last inequality. Hence
$$d^{[N]}_{w}(x)\ge  d(x) -\epsilon_N \ge d_\textup{min}-\epsilon_N.$$
Similarly for $x\in \mathcal{U}_i^{[N]}$ and $y\in \mathcal{U}_j^{[N]}$ we obtain
$$|{W^{[N]}_{w}(x,y)}-{W(x,y)}| \le \epsilon_N.$$
Let $D(x,y):= \frac{W^{[N]}_{w}(x,y)}{d^{[N]}_w(x)}-\frac{W(x,y)}{d(x)}$.  Then
\begin{align*}
|D(x,y)|&=\left|\frac{W^{[N]}_{w}(x,y)}{d^{[N]}_w(x)}-\frac{W(x,y)}{d(x)}\right| = \frac{|{W^{[N]}_{w}(x,y)}{d(x)}-{W(x,y)}{d^{[N]}_w(x)}| }{d^{[N]}_w(x)d(x)} \\
&\le \frac{|{W^{[N]}_{w}(x,y)}{d(x)}-{W(x,y)}{d(x)}|+|{W(x,y)}{d(x)}  -{W(x,y)}{d^{[N]}_w(x)}| }{d_\textup{min} (d_\textup{min}-\epsilon_N)} \\
&\le \frac{|{W^{[N]}_{w}(x,y)}-{W(x,y)}|+|{d(x)}  -{d^{[N]}_w(x)}| }{d_\textup{min} (d_\textup{min}-\epsilon_N)}  = \frac{2\epsilon_N}{d_\textup{min} (d_\textup{min}-\epsilon_N)} =: \gamma_N \rightarrow 0.
\end{align*}
Consider  any $f\in\leb$ such that $\|f\|_{L^2}=1$. Using the inequalities above
\begin{align*}
   & \|\mathbb{W}^{[N]}_{wd}f-\mathbb{W}_df\|^2_{L^2}=\int_0^1 (\mathbb{W}^{[N]}_{wd}f-\mathbb{W}_df)(x)^2 \mathrm{d}x
                                   = \int_0^1 \left(\int_0^1 D(x,y) f(y) \mathrm{d}y \right)^2  \mathrm{d}x \nonumber\\
                                   &\le \int_0^1 \left(\int_0^1 D(x,y)^2 \mathrm{d}y\right) \left(\int_0^1 f(y)^2 \mathrm{d}y\right) \mathrm{d}x\label{eq:p1} =
                                  \int_0^1 \int_0^1 D(x,y)^2 \mathrm{d}y \mathrm{d}x\le \gamma_N^2.\                                                                                                                             
\end{align*}

Hence with probability $1-\delta_N$, \eqref{distU} holds and

$$ \vertiii{\mathbb{W}^{[N]}_{wd}-\mathbb{W}_d} =\sup_{f\in\leb s.t. \|f\|_{L_2}=1} \|\mathbb{W}^{[N]}_{wd}f-\mathbb{W}_df\|_{L^2} \le \gamma_N =\mathcal{O}(d_N).$$
 The bound on 
$\|\theta^{[N]}-\theta\|^2_{L^2;\mathbb{R}^m} $ can be proven as in Lemma \ref{lem:norm}.

\item Note that  
\begin{align*}
&\vertiii{\mathbb{W}^{[N]}_{sd}-\mathbb{W}^{[N]}_{wd} } \le \|P_{sd}^{[N]}-P_{wd}^{[N]}\|,
\end{align*}
where $P_{sd/wd}^{[N]}$ are the degree normalized versions of $P_{s/w}^{[N]}$.
Recall that $P_w^{[N]}=\mathbb{E}[P_s^{[N]}]$, hence we can  bound the term on the right hand side  by employing matrix concentration inequalities.  Define $d^i_{s/w}=\sum_{j=1}^N [P_{s/w}^{[N]}]_{ij}$.
\begin{itemize}
\item By definition and by the previous point 
$$d^i_{w}= N d^{[N]}_w(t^i) \ge N (d_{\textup{min}}-\epsilon_N) $$
$$ \|P_{w}^{[N]}\| = \vertiii{\mathbb{W}_w^{[N]}} N \le N $$
\item By Hoeffding inequality for any fixed $i$ and $t>0$
$$\textup{Pr}[|d^i_s-d^i_w|> t] < 2 \textup{exp} \left( -\frac{2t^2}{N}\right).$$
Setting $t=\sqrt{\frac{N}{2} \textup{log}\left(\frac{2N}{\delta_N}\right)}$ yields
$$\textup{Pr}\left[|d^i_s-d^i_w|> \sqrt{\frac{N}{2} \textup{log}\left(\frac{2N}{\delta_N}\right)}\right] < 2 \frac{\delta_N}{2N}=\frac{\delta_N}{N}$$
and by the union bound with probability at least $1-\delta_N$
$$ |d^i_s-d^i_w|\le \sqrt{\frac{N}{2} \textup{log}\left(\frac{2N}{\delta_N}\right)}=t \quad \textup{for all }\quad  i\in\{1,\ldots,N\}.$$
Let $D_{s/w}:=\textup{diag}([{d^i_{s/w}}]_{i=1}^N)$. With the same probability 
$$\|D_s^{-1}\| =\max_i \frac{1}{d^i_s} \le \max_i \frac{1}{d^i_w - t}  \le   \frac{1}{(d_{\textup{min}}-\epsilon_N) N - t}  $$
and
\begin{align*}
\|D_s^{-1}-D_w^{-1}\|& =\max_i \left|\frac{1}{d^i_s} -\frac{1}{d^i_w}\right| =  \max_i \frac{|d^i_s-d^i_w|}{d^i_wd^i_s} \le \frac{t}{ (d_{\textup{min}}-\epsilon_N) N }  \frac{1}{(d_{\textup{min}}-\epsilon_N) N - t}
\end{align*}\item The maximum expected degree $C^d_N:=\max_i( \sum_{j=1}^N [P_w^{[N]}]_{ij}   )$ grows  as order $N$. Hence for  $N$ large enough,  it is greater than $\frac49 \log(\frac{2N}{\delta_N})$ since $\frac{\log( N/\delta_N)}{N} \rightarrow 0$ by assumption.
Consequently, all the conditions of \cite[Theorem 1]{chung2011spectra} 
are met and with probability $1-\delta_N$ 
\begin{align*}
\|P_s^{[N]}-P_w^{[N]}\|\le   \sqrt{4C^d_N \log(2N/\delta_N)} \le   \sqrt{4N \log(2N/\delta_N)} ,
\end{align*}
where we used that  $C^d_N\le N$ since each element in $P_w^{[N]}$ belongs to $[0,1]$.
\item Combining the previous results yields that with probability $1-3\delta_N$
\begin{align*}
\|P_{sd}^{[N]}-P_{wd}^{[N]}\| &= \|D_s^{-1}P_{s}^{[N]}-D_w^{-1}P_{w}^{[N]}\| \\&
\le \|D_s^{-1}P_{s}^{[N]}-D_s^{-1}P_{w}^{[N]} \|+\|D_s^{-1}P_{w}^{[N]}  -D_w^{-1}P_{w}^{[N]}\| \\&
 \le  \|D_s^{-1}\| \|P_{s}^{[N]}-P_{w}^{[N]} \|+\|D_s^{-1} -D_w^{-1}\|\|P_{w}^{[N]}\| \\&
\le  \frac{\sqrt{4N \log(2N/\delta_N)}}{(d_{\textup{min}}-\epsilon_N) N - t}   + \frac{t}{ (d_{\textup{min}}-\epsilon_N) N } \cdot \frac{N}{(d_{\textup{min}}-\epsilon_N) N - t}\\
&=   \frac{\sqrt{8} t/N}{(d_{\textup{min}}-\epsilon_N)  - t/N}   + \frac{t/N}{ (d_{\textup{min}}-\epsilon_N) }  \cdot \frac{1}{(d_{\textup{min}}-\epsilon_N)  - t/N}
\end{align*}
Since $t/N = \sqrt{\frac{ \textup{log}\left({2N}/{\delta_N}\right)}{2N}} \rightarrow 0 $, we obtain
$$ \vertiii{\mathbb{W}^{[N]}_{sd}-\mathbb{W}^{[N]}_{wd} } \le \|P_{sd}^{[N]}-P_{wd}^{[N]}\| = \mathcal{O}\left(t/N\right) = \mathcal{O}\left(\sqrt{\frac{ \textup{log}\left({N}/{\delta_N}\right)}{N}}  \right).  $$
\end{itemize} 

Using the fact that 
$$\vertiii{\mathbb{W}^{[N]}_{s}-\mathbb{W} }\le \vertiii{\mathbb{W}^{[N]}_{s}-\mathbb{W}^{[N]}_{w} }+\vertiii{\mathbb{W}^{[N]}_{w}-\mathbb{W} }$$
and the first statement concludes the proof.
\end{enumerate}

\end{proof}

}

{\color{blue}
\vspace{0.2cm}
\subsection{\textbf{Statements in support of Section \ref{sec:dir}: Directed networks}}\label{sec:dirA}
\vspace{0.2cm}

\begin{lemma}\label{lem:asy}
Consider a matrix $P^{[N]}_w\in [0,1]^{N\times N}$ with $\|P^{[N]}_w\|_\infty$ of order $N$ and a random matrix $P^{[N]}_s\in \{0,1\}^{N\times N}$ such that 
$$[P^{[N]}_s]_{ij}=Ber([P^{[N]}_w]_{ij}).$$
With probability $1-\delta_N$ for $N$ large enough
$$\frac1N\| P^{[N]}_s- P^{[N]}_w\| \le \sqrt{4 \frac{\log(4N/\delta_N)}{N}}.$$
\end{lemma}
\begin{proof}
Construct the symmetric matrix 
$$A^{[2N]}_{s/w} = \left[\begin{array}{cc}0 & P^{[N]}_{s/w} \\  (P^{[N]}_{s/w})^T& 0\end{array}\right]\in\mathbb{R}^{2N\times 2N}$$
and note that 
\begin{enumerate}
\item $\| P^{[N]}_s- P^{[N]}_w\| = \| A^{[2N]}_s- A^{[2N]}_w\|$;
\item $\mathbb{E}[P^{[N]}_s ]= P^{[N]}_w$ implies  $\mathbb{E}[A^{[2N]}_s ]= A^{[2N]}_w$;
\item the maximum degree $\Delta_A$ of $A^{[2N]}_w$ is of order $N$ and is therefore greater than $\frac{4}{9} \log(4N/\delta_N)$ for $N$ large enough.
\end{enumerate}
Then by \cite[Theorem 1]{chung2011spectra} with probability $1-\delta_N$ for $N$ large enough
$$\frac1N\| P^{[N]}_s- P^{[N]}_w\| =\frac1N \| A^{[2N]}_s- A^{[2N]}_w\| \le \frac1N\sqrt{4\Delta_A \log(4N/\delta_N)}\le \sqrt{4 \frac{\log(4N/\delta_N)}{N}}.$$
\end{proof}

}

\vspace{0.2cm}
\subsection{\textbf{Auxiliary results}}\label{aux}
\vspace{0.2cm}

 We report here some auxiliary lemmas.  Specifically, 
\begin{itemize}
\item[-] Lemma \ref{eigSBM},  \ref{lem:dist} and \ref{lem:norm} are immediate extensions of results in \cite{graphons};
\item[-] Lemma \ref{lem:lip} derives sufficient conditions for the equilibrium of a graphon game to be Lipschitz continuous;
\item[-] Lemma \ref{concentration} provides a concentration result for the local aggregate in incomplete information sampled network games;
\item[-]  Lemma \ref{thm:eps} proves that the graphon equilibrium is an $\epsilon$-Nash equilibrium under additional regularity assumptions.
\end{itemize}

\begin{lemma}[\cite{graphons}]\label{eigSBM}  Consider a SBM graphon $\mathbb{W}_{\textup{SBM}}$ which is piecewise constant over the partition $\{\mathcal{C}_k\}_{k=1}^K$.
If $(\lambda,\psi)$ is an eigenpair of $\mathbb{W}_{\textup{SBM}}$, then there exists $v\in\R^K$ such that $(\lambda,v)$ is an eigenpair of the matrix $E\in\mathbb{R}^{K\times K}$ defined in \eqref{eq:E} and
\begin{equation}\label{phi_v}\psi(x)= \gamma v_k, \mbox{ for all } x\in \mathcal{C}_k\end{equation}
where  $\gamma>0$ is a normalization parameter. Conversely, if $(\lambda,v)$ is an eigenpair of the matrix $E\in\mathbb{R}^{K\times K}$ then $(\lambda,\psi)$  is an eigenpair of $\mathbb{W}_{\textup{SBM}}$ with $\psi$ constructed from $v$ as in \eqref{phi_v}.
\end{lemma}

\begin{lemma}[\cite{graphons}]\label{lem:dist}
Let $\{\type^i\}_{i=1}^N$ be the ordered statistics of $N$ random samples from $\mathcal{U}[0,1]$. For any $\delta_N\in (Ne^{-N/5},e^{-1})$ and $N$ large, with probability at least $1-\delta_N$ it holds
\begin{equation}\label{distU}
|\type^i-x|\leq d_N \mbox{ for any } i\in\{1,\ldots,N\} \mbox{ and any }\textstyle x\in\mathcal{U}_i^{[N]}=[\frac{i-1}{N},\frac iN),
\end{equation}
where $d_N:=\frac1N+\sqrt{\frac{8\log(N/\delta_N)}{N}}\rightarrow 0$.
\end{lemma}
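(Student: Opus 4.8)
The plan is to reduce the claim to a uniform concentration bound for the empirical distribution function of the samples. Write $U_1,\dots,U_N$ for the i.i.d.\ $\mathcal U[0,1]$ draws and $F_N(t):=\frac1N\sum_{j=1}^N\mathbbm{1}\{U_j\le t\}$ for their empirical c.d.f. Almost surely the $U_j$ are distinct, so the $i$-th order statistic $\type^i$ satisfies $F_N(\type^i)=\frac iN$ for every $i\in\{1,\dots,N\}$; this is the only structural property of order statistics I would use.

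First I would invoke the Dvoretzky--Kiefer--Wolfowitz inequality (with Massart's tight constant): for every $\epsilon>0$, $\Pr\big[\sup_{t\in\R}|F_N(t)-t|>\epsilon\big]\le 2e^{-2N\epsilon^2}$. Taking $\epsilon=\epsilon_N:=\sqrt{8\log(N/\delta_N)/N}$ (the second summand of $d_N$) gives $2e^{-2N\epsilon_N^2}=2(\delta_N/N)^{16}\le\delta_N$, the final step because $\delta_N\le e^{-1}<1\le N$ forces $2\delta_N^{15}\le N^{16}$. So the event $A:=\{\sup_t|F_N(t)-t|\le\epsilon_N\}$ has probability at least $1-\delta_N$. (The hypothesis ``$N$ large'' is needed only to make the admissible range $(Ne^{-N/5},e^{-1})$ for $\delta_N$ nonempty; and $d_N\to0$ holds whenever $\log(N/\delta_N)/N\to0$, the regime in which Theorem~\ref{thm:dist} is applied.)

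On the event $A$ I would then fix $i\in\{1,\dots,N\}$ and $x\in\mathcal U_i^{[N]}=[\frac{i-1}{N},\frac iN)$ and estimate, using $F_N(\type^i)=\frac iN$,
\[|\type^i-x|\;\le\;\Big|\type^i-\tfrac iN\Big|+\Big|\tfrac iN-x\Big|\;=\;|F_N(\type^i)-\type^i|+\Big|\tfrac iN-x\Big|\;\le\;\epsilon_N+\tfrac1N\;=\;d_N,\]
where $|\tfrac iN-x|\le\tfrac1N$ since $x\in[\frac{i-1}{N},\frac iN)$. As $i$ was arbitrary, this is exactly \eqref{distU} on $A$, which proves the lemma.

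I do not expect a genuine obstacle here: the argument is routine once DKW is available, and the only thing to keep track of is the bookkeeping that splits $d_N$ into the concentration radius $\epsilon_N$ and the term $\tfrac1N$ coming from the width of the target interval $\mathcal U_i^{[N]}$. If one prefers to avoid citing DKW, the same conclusion follows in a more hands-on way by writing $\Pr[\type^i>\tfrac{i-1}{N}+\epsilon_N]=\Pr[\mathrm{Bin}(N,\tfrac{i-1}{N}+\epsilon_N)\le i-1]$ and $\Pr[\type^i<\tfrac iN-\epsilon_N]=\Pr[\mathrm{Bin}(N,\tfrac iN-\epsilon_N)\ge i]$, bounding each by $e^{-2N\epsilon_N^2}$ via Hoeffding's inequality and union-bounding over $i$ and the two tails; this loses only a factor $N$ in the failure probability, which the generous constant $8$ absorbs.
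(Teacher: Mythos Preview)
Your proof is correct. Note, however, that the paper does not actually prove this lemma: it is quoted verbatim from \cite{graphons} (see the sentence ``Lemma~\ref{lem:dist} \ldots is proven in \cite{graphons}'' in Appendix~\ref{aux}), so there is no in-paper argument to compare against.

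That said, your DKW-based derivation is clean and tight. The decomposition $|\type^i-x|\le |\type^i-\tfrac iN|+|\tfrac iN-x|$ with $F_N(\type^i)=\tfrac iN$ is exactly the right way to isolate the two summands of $d_N$, and your verification that $2(\delta_N/N)^{16}\le\delta_N$ under $\delta_N\le e^{-1}$ is fine. The lower bound $\delta_N>Ne^{-N/5}$ is indeed irrelevant to your argument; it presumably reflects a constraint arising in the original proof in \cite{graphons}, which---judging from the rather loose constant $8$ inside the square root---most likely proceeds along the lines of the elementary Hoeffding-plus-union-bound alternative you sketch at the end. Your DKW route is shorter and shows that the constant $8$ could be tightened substantially, at the cost of invoking Massart's sharp DKW constant rather than only Hoeffding's inequality.
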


\begin{lemma}[\cite{graphons}]\label{lem:norm}

Consider a graphon  $W$ satisfying Assumption \ref{lipschitz}. Let $W^{[N]}_{w/s}$ be the step function graphons corresponding to the matrices  $P^{[N]}_{w}$ and $\frac{P^{[N]}_{s}}{\blue \kappa_N}$, as defined in  Sections \ref{step1} {\blue and \ref{sec:sparse}. Let $\theta^{[N]}$ be the step-function corresponding to $[\theta(t^i)]_{i=1}^N$. Fix  any sequence $\{\delta_N,\kappa_N\}_{N=1}^\infty$ such that $\delta_N \le e^{-1}$ and $\frac{\log( N/\delta_N)}{N\kappa_N} \rightarrow 0$. } Then, for  $N$ large enough,
\begin{enumerate} 
\item with probability at least  $1-\delta_N$,  \eqref{distU} holds, 
{\blue \[\|\theta^{[N]}-\theta\|_{L^2;\mathbb{R}^m}\le \rho_\theta(N):=\sqrt{(Ld_N)^2+8\Omega d_N \theta^2_\textup{max}}\] } and  
\[ {\blue|\lambda_{\textup{max}}(\mathbb{W}^{[N]}_{w})-\lambda_{\textup{max}}(\mathbb{W}) |\le } \vertiii{\mathbb{W}^{[N]}_{w}-\mathbb{W}} \le \tilde \rho(N):= 2 \sqrt{(L^2-\Omega^2)d_N^2+\Omega d_N};\]
\item  with probability at least  $1-2\delta_N$,  \eqref{distU} holds, $\|\theta^{[N]}-\theta\|_{L^2;\mathbb{R}^m}\le \rho_\theta(N)$ and
\begin{equation}
{\blue |\lambda_{\textup{max}}(\mathbb{W}^{[N]}_{s})-\lambda_{\textup{max}}(\mathbb{W}) |\le} \vertiii{\mathbb{W}^{[N]}_{s}-\mathbb{W}}\le \tilde \rho(N)+ \sqrt{\frac{4\log(2N/\delta_N)}{N{\blue \kappa_N}}}=:\rho_W(N).
\end{equation}
\end{enumerate}
\end{lemma}

\begin{proof}
 Note that for $N$ large enough the condition $\delta_N\in(Ne^{-N/5},e^{-1})$  is satisfied under the assumptions of this lemma. In fact, if  $\delta_N \le Ne^{-N/5}$ infinitely often then $\frac{\log(N/\delta_N)}{N \color{blue}\kappa_N}\ge \frac{\log(N/N \cdot e^{N/5})}{N} =\frac{1}{5}$ infinitely often and the assumption $\frac{\log(N/\delta_N)}{N {\blue \kappa_N}}\rightarrow 0$ would be violated. Hence Lemma \ref{lem:dist}  applies and  the result for piecewise Lipschitz graphons follows from \cite[Theorem 1]{graphons}.
 We here report a simplified  proof for  Lipschitz continuous graphons (i.e. for the case $\Omega=0$).
\begin{enumerate}
\item Consider  any $f\in\leb$ such that $\|f\|_{L^2}=1$. Let $D(x,y):= W^{[N]}_{w}(x,y)-W(x,y)$. Then with probability $1-\delta_N$ (independent of $f$)

\begin{align*}
    \|\mathbb{W}^{[N]}_{w}f-\mathbb{W}f\|^2_{L^2}&=\int_0^1 (\mathbb{W}^{[N]}_{w}f-\mathbb{W}f)(x)^2 \mathrm{d}x
                                   = \int_0^1 \left(\int_0^1 D(x,y) f(y) \mathrm{d}y \right)^2  \mathrm{d}x \nonumber\\
                                   &\le \int_0^1 \left(\int_0^1 D(x,y)^2 \mathrm{d}y\right) \left(\int_0^1 f(y)^2 \mathrm{d}y\right) \mathrm{d}x\label{eq:p1} \\&=  \int_0^1 \left(\int_0^1 D(x,y)^2 \mathrm{d}y\right)\|f\|^2_{L^2}  \mathrm{d}x   = \int_0^1 \int_0^1 D(x,y)^2 \mathrm{d}y \mathrm{d}x\\& = \sum_{i}\sum_j \int_{\mathcal{U}_i^{[N]}} \int_{\mathcal{U}_j^{[N]}} (W^{[N]}_{w}(x,y)-W(x,y))^2 \mathrm{d}y \mathrm{d}x\\   
                                & = \sum_{i}\sum_j \int_{\mathcal{U}_i^{[N]}} \int_{\mathcal{U}_j^{[N]}} (W(\type^i,\type^j)-W(x,y))^2 \mathrm{d}y \mathrm{d}x \\&\le L^2 \sum_{i}\sum_j \int_{\mathcal{U}_i^{[N]}} \int_{\mathcal{U}_j^{[N]}} (|\type^i-x|+|\type^j-y|)^2 \mathrm{d}y \mathrm{d}x \\       
                                & \le L^2 \sum_{i}\sum_j \int_{\mathcal{U}_i^{[N]}} \int_{\mathcal{U}_j^{[N]}} (2d_N)^2 \mathrm{d}y \mathrm{d}x = (2Ld_N)^2\                                                                                                                             
\end{align*}

where we used \eqref{distU} from Lemma \ref{lem:dist} in the last inequality. Hence with probability $1-\delta_N$, \eqref{distU} holds and

$$ \vertiii{\mathbb{W}^{[N]}_{w}-\mathbb{W}} =\sup_{f\in\leb s.t. \|f\|_{L_2}=1} \|\mathbb{W}^{[N]}_{w}f-\mathbb{W}f\|_{L^2} \le 2Ld_N.$$
{\blue 
The fact that 
$|\lambda_{\textup{max}}(\mathbb{W}^{[N]}_{w})-\lambda_{\textup{max}}(\mathbb{W}) |\le  \vertiii{\mathbb{W}^{[N]}_{w}-\mathbb{W}} $ can be proven by inverse triangular inequality upon noting that $\lambda_{\textup{max}}(\mathbb{W}^{[N]}_{w})= \vertiii{\mathbb{W}^{[N]}_{w}} $ and $\lambda_{\textup{max}}(\mathbb{W}) = \vertiii{\mathbb{W}} $.

Similarly,  
\begin{align}
\|\theta^{[N]}-\theta\|^2_{L^2;\mathbb{R}^m} &=\int_0^1 \|\theta^{[N]}(x)-\theta(x)\|^2 dx = \sum_i \int_{\mathcal{U}_i^{[N]}} \|\theta(t^i)-\theta(x)\|^2 dx \\
& \le \sum_i \int_{\mathcal{U}_i^{[N]}} L^2 |t^i-x|^2 dx \le \sum_i \int_{\mathcal{U}_i^{[N]}} (L d_N)^2dx = (Ld_N)^2. 
\end{align}
}
\item The operator $\mathbb{W}^{[N]}_{s}-\mathbb{W}^{[N]}_{w}$ can be seen as the graphon operator of an SBM  graphon with matrix $\frac{P_s^{[N]}}{\blue \kappa_N}-P_w^{[N]}$ over the uniform partion $\{\mathcal{U}_i^{[N]}\}_{i=1}^N$. Note that for any graphon operator $\mathbb{A}$  over such partition (i.e. $\mathbb{A}(x,y)=A_{ij}$ for $x\in\mathcal{U}^{[N]}_i,y\in\mathcal{U}^{[N]}_i$)  it holds 
$\vertiii{\mathbb{A}}\le\frac1N\|A\|$.
%
%
Consequently,  
\begin{align*}
&\vertiii{\mathbb{W}^{[N]}_{s}-\mathbb{W}^{[N]}_{w} } \le \frac{1}{N}\|\frac{P_s^{[N]}}{\blue \kappa_N}-P_w^{[N]}\|={\blue \frac{1}{N\kappa_N}\|P_s^{[N]}-\kappa_N P_w^{[N]}\|.}
\end{align*}
Recall that ${\blue \kappa_N}P_w^{[N]}=\mathbb{E}[P_s^{[N]}]$, hence we can  bound the term on the right hand side  by employing matrix concentration inequalities. 

The maximum expected degree $C^d_N:=\max_i( \sum_{j=1}^N {\blue \kappa_N} [P_w^{[N]}]_{ij}   )$ grows  as order ${\blue \kappa_N} N$. Hence for  $N$ large enough,  it is greater than $\frac49  \log(\frac{2N}{\delta_N})$ since $\frac{\log( N/\delta_N)}{N {\blue \kappa_N}} \rightarrow 0$ by assumption.
Consequently, all the conditions of \cite[Theorem 1]{chung2011spectra} 
are met and with probability $1-\delta_N$ 
\begin{align*}
& \frac{1}{N{\blue \kappa_N}}\|P_s^{[N]}-{\blue \kappa_N}P_w^{[N]}\|\le   \frac{1}{N{\blue \kappa_N}}\sqrt{4C^d_N \log(2N/\delta_N)} \le   \sqrt{\frac{4  \log(2N/\delta_N)}{N {\blue \kappa_N}}},
\end{align*}
where we used that  $C^d_N\le {\blue \kappa_N} N$ since each element in $P_w^{[N]}$ belongs to $[0,1]$. Using the fact that 
$$\vertiii{\mathbb{W}^{[N]}_{s}-\mathbb{W} }\le \vertiii{\mathbb{W}^{[N]}_{s}-\mathbb{W}^{[N]}_{w} }+\vertiii{\mathbb{W}^{[N]}_{w}-\mathbb{W} }$$
and the first statement concludes the proof. {\blue The fact that 
$|\lambda_{\textup{max}}(\mathbb{W}^{[N]}_{s})-\lambda_{\textup{max}}(\mathbb{W}) |\le  \vertiii{\mathbb{W}^{[N]}_{s}-\mathbb{W}} $ can be proven as in the previous point}.
\end{enumerate}

\end{proof}

{\blue \begin{lemma}\label{lem:lip}
Consider a graphon game satisfying Assumptions  \ref{ass:cost}, \ref{ass:constraint}B), \ref{cond} and \ref{lipschitz} with $\Omega=0$ and suppose that $\mathtt{S}(x)=\mathcal{S}$ for all $x$. Then the unique graphon equilibrium is Lipschitz continuous with constant $L_{ s}=\frac{\max\{\ell_\J,\ell_\theta\} L (s_\textup{max}+1) }{\alpha_\J}$. 
\end{lemma}
\begin{proof}
Let $\bar s$ be the unique graphon equilibrium and $\bar z=\int_0^1 W(x,y)\bar s(y)dy$.
For any $x_1,x_2\in[0,1]$ it holds
\begin{equation}\label{lip1}
\begin{aligned}
\|\bar s(x_1)-\bar s(x_2)\|&=\| \arg\max_{s\in\mathcal{S}} \J(s,\bar z(x_1), \theta(x_1))-\arg\max_{s\in\mathcal{S}} \J(s,\bar z(x_2),\theta(x_2))\|\\
&\le \frac{1}{\alpha_\J}\|\nabla_s \J(\bar s(x_1),\bar z(x_1), \theta(x_1))- \nabla_s \J(\bar s(x_1),\bar z(x_2), \theta(x_2))\|\\& \le \frac{\max\{\ell_\J,\ell_\theta\}}{\alpha_\J}\left(\|\bar z(x_1)-\bar z(x_2)\|+\|\theta(x_1)-\theta(x_2)\|\right).
\end{aligned}
\end{equation}
Moreover,
\begin{equation}\label{lip2}
\begin{aligned}
\|\bar z(x_1)-\bar z(x_2)\|&=\| \int_0^1 W(x_1,y)\bar s(y)dy- \int_0^1 W(x_2,y)\bar s(y)dy\|\\
&\le \int_0^1 |W(x_1,y)-W(x_2,y)|\| \bar s(y) \|dy \\&\le \int_0^1 L |x_1-x_2| s_\textup{max}dy=L |x_1-x_2| s_\textup{max},\\
\|\theta(x_1)-\theta(x_2)\|&\le L|x_1-x_2|.
\end{aligned}
\end{equation}
Combining \eqref{lip1} and \eqref{lip2} yields
\[\|\bar s(x_1)-\bar s(x_2)\| \le \frac{\max\{\ell_\J,\ell_\theta\} L( s_\textup{max}+1) }{\alpha_\J} |x_1-x_2|.\]
\end{proof}}

{\blue \begin{lemma}\label{concentration}

Suppose that the assumptions of Theorem \ref{thm:bayes} hold.
Consider a fixed population size $N$, a fixed $\type^i\in[0,1]$ and let  $\zeta_{\bar s}(\type^i)$ be a  realization of $\frac{1}{N-1}\sum_{j} [P^{[N]}_s]_{ij} \bar s(\type^j)$,
where $[P^{[N]}_s]$ is sampled from the graphon $W$ according to Definition \ref{sample}. Then with probability at least $1-\frac{2n+1}{(N-1)^2}$  it holds $\|\zeta_{\bar s}(\type^i)-\bar z(\type^i)\|\le \varepsilon'$, with $\varepsilon':=\mathcal{O}\left( \sqrt{\frac{\log(N-1)}{N-1}}\right)$. By the union bound  with probability at least $1-\frac{(2n+1)N}{(N-1)^2}$  it holds $\|\zeta_{\bar s}(\type^i)-\bar z(\type^i)\|\le \varepsilon'$ for all $i\in\{1,\ldots,N\}$.
\end{lemma}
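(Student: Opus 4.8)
The plan is to write $\zeta_{\bar s}(\type^i) = \frac{1}{N-1}\sum_{j\neq i} [P_s^{[N]}]_{ij}\,\bar s(\type^j)$ and show that each of its $n$ coordinates concentrates around the corresponding coordinate of $\bar z(\type^i)$. Fix $\type^i=x$ and a coordinate $h\in\{1,\dots,n\}$. The idea is to decompose the deviation into two pieces: (i) the fluctuation of the sum around its conditional mean given the types $\{\type^j\}_{j\neq i}$, which is purely due to the Bernoulli link realizations, and (ii) the fluctuation of the conditional mean $\frac{1}{N-1}\sum_{j\neq i} W(x,\type^j)\,[\bar s(\type^j)]_h$ around its expectation $\int_0^1 W(x,y)[\bar s(y)]_h\,dy = [\bar z(x)]_h$, which is due to the randomness of the types. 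For both pieces the summands are independent (conditionally on the types for (i), unconditionally for (ii)) and bounded — each term lies in an interval of length at most $s_\textup{max}$ in absolute value by Assumption \ref{ass:constraint}B) — so Hoeffding's inequality applies in each case, yielding a deviation of order $\sqrt{\frac{\log(N-1)}{N-1}}$ with probability at least $1-\tfrac{2}{\sqrt{N-1}}$ for each piece (choosing the confidence level so that the failure probability is $\tfrac{1}{\sqrt{N-1}}$ each).

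Concretely, first I would condition on the types and apply Hoeffding to the independent terms $[P_s^{[N]}]_{ij}[\bar s(\type^j)]_h$ (over the link randomness), obtaining
\[
\Big| \tfrac{1}{N-1}\textstyle\sum_{j\neq i} [P_s^{[N]}]_{ij}[\bar s(\type^j)]_h - \tfrac{1}{N-1}\sum_{j\neq i} W(x,\type^j)[\bar s(\type^j)]_h \Big| \le c_1 \sqrt{\tfrac{\log(N-1)}{N-1}}
\]
with probability at least $1-\tfrac{1}{(N-1)\sqrt{N-1}}$ (then summing the failure probability over $j$ is unnecessary — Hoeffding handles the whole sum at once — so a single application gives failure $\le \tfrac{1}{N-1}$, which I can further relax). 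Second, I would apply Hoeffding to the i.i.d.\ terms $W(x,\type^j)[\bar s(\type^j)]_h$, whose common mean is exactly $[\bar z(x)]_h$ by the computation already carried out in \eqref{eq:z_exp}--\eqref{eq:z_exp2}, to get the same order of deviation for piece (ii). Combining via a triangle inequality and a union bound over the $n$ coordinates (costing a factor $2n$ in the failure probability, plus an extra $\tfrac{1}{\sqrt{N-1}}$ slack if one also wants \eqref{distU}-type control, matching the stated $\tfrac{2n+1}{\sqrt{N-1}}$), and passing from the $\ell_\infty$ bound on coordinates to the Euclidean norm $\|\cdot\|$ at the cost of a $\sqrt{n}$ factor absorbed into the $\mathcal{O}(\cdot)$, gives $\|\zeta_{\bar s}(x)-\bar z(x)\| \le \varepsilon'$ with $\varepsilon' = \mathcal{O}\big(\sqrt{\tfrac{\log(N-1)}{N-1}}\big)$ with probability at least $1-\tfrac{2n+1}{\sqrt{N-1}}$, as claimed.

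The main obstacle, such as it is, is bookkeeping: tracking how the two Hoeffding applications, the union bound over the $n$ coordinates, and the choice of confidence level combine to produce exactly the failure probability $\tfrac{2n+1}{\sqrt{N-1}}$ stated in the lemma, and making sure the constant absorbed into $\varepsilon'$ is finite and independent of $N$ (which it is, since $s_\textup{max}<\infty$ by Assumption \ref{ass:constraint}B) and $\|\bar s\|_{L^2;\mathbb{R}^n}$ is likewise bounded). There is no genuine analytic difficulty here — the concentration is a textbook consequence of boundedness and (conditional) independence — so the proof is essentially a careful two-step Hoeffding argument plus a union bound.
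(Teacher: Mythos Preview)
Your decomposition into two pieces --- the Bernoulli-link fluctuation (Term~1) and the type-sampling fluctuation (Term~2) --- matches the paper exactly, and your treatment of Term~1 via a coordinate-wise Hoeffding bound is the same as the paper's. The genuine difference is in Term~2: you propose a second Hoeffding application to the i.i.d.\ bounded summands $W(x,\type^j)[\bar s(\type^j)]_h$, whereas the paper instead invokes the ordered-statistics control of Lemma~\ref{lem:dist} together with the Lipschitz continuity of $W$ (Assumption~\ref{lipschitz} with $\Omega=0$) and of $\bar s$ (Lemma~\ref{lem:lip}) to bound $\|\frac{1}{N-1}\sum_{j\neq i} W(x,\type^j)\bar s(\type^j) - \bar z(x)\|$ deterministically on the high-probability event \eqref{distU}. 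Your route is more elementary and does not actually need Lipschitz regularity of $W$ or $\bar s$ for this step; the paper's route explains where the ``$+1$'' in $\frac{2n+1}{\sqrt{N-1}}$ comes from (it is the failure probability $\delta_{N-1}=\frac{1}{N-1}\le\frac{1}{\sqrt{N-1}}$ of the ordered-statistics event, not an extra slack you have to manufacture). With your approach the natural failure probability is of order $\frac{cn}{\sqrt{N-1}}$ for some small constant $c$, which is equally sufficient for the use of this lemma in Theorem~\ref{thm:bayes}; the precise constant $2n+1$ is an artifact of the paper's specific argument and you should not try to force your bookkeeping to reproduce it.
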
}
\begin{proof}
Let $t^{-i}$ be the types of all the agents except for agent $i$. For each realization of $t^{-i}$ we have

\begin{align*}
\|\zeta_{\bar s}(\type^i)-\bar z(\type^i)\|&=\|\frac{1}{N-1}\sum_{j\neq i} [P^{[N]}_s]_{ij} \bar s(\type^j)  -\bar z(\type^i)\|
\\&=\| \frac{1}{N-1}\sum_{j\neq i}\left( [P^{[N]}_s]_{ij} \bar s(\type^j) -W(\type^i,\type^j) \bar s(\type^j) +W(\type^i,\type^j) \bar s(\type^j) \right) -\bar z(\type^i)\|
\\&\le  \underbrace{ \|\frac{1}{N-1} \sum_{j\neq i}( [P^{[N]}_s]_{ij} -W(\type^i,\type^j) )\bar s(\type^j) \| }_{\textup{Term 1}}+\underbrace{\| \frac{1}{N-1}\sum_{j \neq i} W(\type^i,\type^j) \bar s(\type^j) -\bar z(\type^i)\|}_{\textup{Term 2}}.
\end{align*}

We can bound the two terms separately.
\begin{itemize}
\item Term 1: Note that $ \sum_{j\neq i}( [P^{[N]}_s]_{ij} -W(\type^i,\type^j) )\bar s(\type^j)\in\R^n$. For each $h\in\{1,\ldots,n\}$, we denote by $S_h:= \sum_{j\neq i}( [P^{[N]}_s]_{ij} -W(\type^i,\type^j) )\bar s_h(\type^j)$ the $h$-th component of the previous vector and analyze each component separately.

Let $X^h_j=( [P^{[N]}_s]_{ij} -W(\type^i,\type^j) ) \bar s_h(\type^j)$ and note that for a fixed $h$ the random variables $\{X^h_j\}_{j\neq i}$ are independent, zero mean and $-s_\textup{max} \le X^h_j\le s_\textup{max}$ for all $j\neq i$. Moreover, by definition $S_h=\sum_{j\neq i} X^h_j$. Note that $\mathbb{E}[S_h]=0$. 
The Hoeffding's inequality then  yields 

\begin{align*}
&\textup{Pr}\left[\frac{|S_h|}{N-1}>s_\textup{max} \sqrt{\frac{4\log(N-1)}{N-1}}\right] =\textup{Pr}\left[{|S_h|}>s_\textup{max} \sqrt{{4\log(N-1)}{(N-1)}}\right] \\&< 2 \exp \left( - \frac{2 s_\textup{max}^2 4\log(N-1)(N-1)}{(N-1)(2s_\textup{max})^2} \right)= 2 \exp \left( -2\log(N-1) \right) = \frac{2}{(N-1)^2}.
\end{align*}

 Hence for any $h\in\{1,\ldots,n\}$, with probability at least $1-\frac{2}{(N-1)^2}$, it holds $\frac{|S_h|}{N-1}= \mathcal{O}\left( \sqrt{\frac{\log(N-1)}{N-1}}\right)$. By the union bound, with probability at least $1-\frac{2n}{(N-1)^2}$, it holds $\frac{|S_h|}{N-1}= \mathcal{O}\left( \sqrt{\frac{\log(N-1)}{N-1}}\right)$ for all $h\in\{1,\ldots,n\}$. With the same probability
\[[\textup{term 1}]=\sqrt{ \sum_{h=1}^n \left(\frac{S_h}{N-1}\right)^2}= \mathcal{O}\left( \sqrt{\frac{\log(N-1)}{N-1}}\right).\]

\item Term 2: \\ Define $\delta_N$ and  $d_N$ as in Theorem \ref{thm:dist}.   Let $\{t^{-i}_{(k)}\}_{k=1}^{N-1}$ be the ordered statistics of $\{\type^j\}_{j\neq i}$ so that $t_{(1)}^{-i}\le \ldots \le t_{(N-1)}^{-i}$. By \cite[Proposition 3]{graphons}  (see also Lemma \ref{lem:dist}) the set   of realizations of  $\{\type^j\}_{j\neq i}$ such that $|t^{-i}_{(k)}-y|\le d_{N-1}$ for all $y\in\mathcal{U}^{[N-1]}_k:=[\frac{k-1}{N-1}, \frac{k}{N-1})$ and for all $k\in\{1,\ldots,N-1\}$ 
 has measure at least $1-\delta_{N-1}$. Consequently, with this probability it holds
 \begin{align*}
& [\textup{term 2}]=\| \frac{1}{N-1}\sum_{j\neq i} W(\type^i,\type^j) \bar s(\type^j) -\bar z(\type^i)\|=\| \frac{1}{N-1}\sum_{k=1}^{N-1} W(\type^i,t^{-i}_{(k)}) \bar s(t^{-i}_{(k)}) -\bar z(\type^i)\|\\&=  \| \frac{1}{N-1}\sum_{k=1}^{N-1} W(\type^i,t^{-i}_{(k)}) \bar s(t^{-i}_{(k)}) - \sum_{k=1}^{N-1} \int_{\mathcal{U}^{[N-1]}_k} W(\type^i,y)\bar s(y)dy\|
 \\&=  \| \sum_{k=1}^{N-1}\int_{\mathcal{U}^{[N-1]}_k} [ W(\type^i,t^{-i}_{(k)}) \bar s(t^{-i}_{(k)})- W(\type^i,y)\bar s(y) ] dy \|
  \\&\le   \sum_{k=1}^{N-1}\int_{\mathcal{U}^{[N-1]}_k} \| W(\type^i,t^{-i}_{(k)}) \bar s(t^{-i}_{(k)})- W(\type^i,t^{-i}_{(k)})\bar s(y)\| + \| W(\type^i,t^{-i}_{(k)})\bar s(y) -  W(\type^i,y)\bar s(y) \| dy 
    \\&\le   \sum_{k=1}^{N-1}\int_{\mathcal{U}^{[N-1]}_k}  (L_s + L  s_\textup{max}) |t^{-i}_{(k)}- y| dy  \le (L_s + L  s_\textup{max}) d_{N-1},
 \end{align*}
 
 where the second to last inequality follows from the fact that, under the given assumptions, $\bar s$ is Lipschitz continuous with constant $L_s$ (see Lemma \ref{lem:lip}), $\|\bar s\|\le s_\textup{max}$ and $W$ is Lipschitz continuous with constant $L$.
 By selecting $\delta_{N-1}=\frac{1}{(N-1)^2}$ with probability at least $1-\frac{1}{(N-1)^2}$, [term 2]$= \mathcal{O}(d_{N-1})=\mathcal{O}\left( \sqrt{\frac{\log(N-1)}{N-1}}\right)$.
 \end{itemize}
 By the union bound with probability at least  $1-\frac{2n+1}{(N-1)^2}$ it holds $\|\zeta_{\bar s}(\type^i)-\bar z(\type^i)\|=\mathcal{O}\left( \sqrt{\frac{\log(N-1)}{N-1}}\right).$

\end{proof}

{\blue \begin{lemma}\label{thm:eps}
Consider a graphon game $\mathcal{G}(\mathtt{S},U, \theta,W)$ where $\mathtt{S}(x)=\mathcal{S}$ for all $x\in[0,1]$. Suppose that Assumptions \ref{ass:cost}, ~\ref{ass:constraint}B),~\ref{cond},  \ref{lipschitz}  (with $\Omega=0$) and  \ref{cost2} hold. Let $\bar s$ be the unique equilibrium of the graphon game. Then  with probability $1-\frac{(2n+1)}{N}$, the set $\{\tilde s^i:=\bar s(t^i)\}
_{i=1}^N$ is an $\varepsilon$ Nash equilibrium of the sampled network game $\mathcal{G}^{[N]}(\{\mathcal{S}\}_{i=1}^N, \J,{ \{{\theta}(t^i)\}_{i=1}^N}, P^{[N]}_{s})$ with 
\[\varepsilon =\mathcal{O}\left( \sqrt{\frac{\log(N)}{N}}\right).\]
\end{lemma}
\begin{proof}
For  any agent $i$, let $\tilde z^i=\frac1N\sum_j [P^{[N]}_{s}]_{ij} \tilde s^j = \frac1N\sum_j [P^{[N]}_{s}]_{ij} \bar s(t^j) =\zeta_{\bar s}(t^i)$ then for any $s^i\in \mathcal{S}$
\begin{align*}
\J(\tilde s^i, \tilde z^i, { \theta^i}) & = \J(\bar s(t^i),\zeta_{\bar s}(t^i), { \theta(t^i)}) \ge \J(\bar s(t^i),\bar z(t^i), { \theta(t^i)}) - L_U\| \zeta_{\bar s}(t^i)- \bar z(t^i)\| \\
&\ge \J(s^i,\bar z(t^i), { \theta(t^i)}) - L_U\| \zeta_{\bar s}(t^i)- \bar z(t^i)\| \\
&\ge \J(s^i,\zeta_{\bar s}(t^i), { \theta(t^i)}) - 2L_U\| \zeta_{\bar s}(t^i)- \bar z(t^i)\| =\J(s^i, \tilde z^i, { \theta^i})- 2L_U\| \zeta_{\bar s}(t^i)- \bar z(t^i)\|.
\end{align*}
The proof is concluded by noting that by Lemma \ref{concentration}, with probability  $1-\frac{(2n+1)}{N}$, $\| \zeta_{\bar s}(t^i)- \bar z(t^i)\|= \mathcal{O}\left( \sqrt{\frac{\log(N)}{N}}\right)$ for all agents $i=1,\ldots,N$ (note that Lemma \ref{concentration} is proven for normalization $\frac{1}{N-1}$ but similar arguments apply to $\frac1N$).
\end{proof}
}

\end{document}